\theoremstyle{plain} 
\newtheorem{cor}{Corollary} 
\newtheorem{prop}{Proposition}
\newtheorem{theorem}{Theorem}
\newtheorem{lemma}{Lemma}
\theoremstyle{definition} 
\newtheorem{ex}{Example}
\def\th@remark{%
  \thm@headfont{\bfseries}%
  \normalfont % 
  \thm@preskip\topsep \divide\thm@preskip\tw@
  \thm@postskip\thm@preskip
}
\theoremstyle{remark} 
\newtheorem{rmk}{Remark} 
\newtheorem{alg}{Algorithm} 
\let\emptyset\varnothing
\DeclareMathOperator{\E}{\mathds{E}}
\renewcommand{\P}{\mathds{P}}
\newcommand{\R}{\mathds{R}}
\newcommand{\Y}{\mathcal{Y}}
\DeclareMathOperator{\sign}{sign}
\renewcommand{\epsilon}{\varepsilon}
\renewcommand*\d{\mathop{}\!\mathrm{d}}
\newcommand{\argmax}{\operatornamewithlimits{argmax}}
\newcommand{\argmin}{\operatornamewithlimits{argmin}}
\newcommand{\X}{\mathcal{X}}
\theoremstyle{definition}
\crefname{manualasm}{assumption}{assumptions}
\crefname{cor}{corollary}{corollaries}
\crefname{claim}{claim}{claims}
\crefname{ex}{example}{examples}
\crefname{defn}{definition}{definitions}
\crefname{rmk}{remark}{remarks}
\crefname{alg}{algorithm}{algorithms}
\begin{document}

\title{The Simple Economics of Optimal Bundling\thanks{I thank Matthew Gentzkow, Soheil Ghili, Ravi Jagadeesan, Paul Milgrom, Mike Ostrovsky, Ilya Segal, Andy Skrzypacz, Takuo Sugaya, Bob Wilson, and Weijie Zhong for helpful comments and suggestions.}}
\author{Frank Yang\thanks{Graduate School of Business, Stanford University. Email: shuny@stanford.edu.}}
%\date{\today}
\date{\displaydate{draftdate}}
\maketitle
\begin{abstract}
We study optimal bundling when consumers differ in one dimension. We introduce a partial order on the set of bundles defined by (i) set inclusion and (ii) sales volumes (if sold alone and priced optimally). We show that if the undominated bundles with respect to this partial order are nested, then nested bundling (tiered pricing) is optimal. We characterize which nested menu is optimal: Selling a given menu of nested bundles is optimal if a smaller bundle in (out of) the menu sells more (less) than a bigger bundle in the menu. We present three applications of these insights: the first two connect optimal bundling and quality design to price elasticities and cost structures; the last one establishes a necessary and sufficient condition for costly screening to be optimal when a principal can use both price and nonprice screening instruments. 
\\

\noindent\textbf{Keywords:} Bundling, tiered pricing, multidimensional screening, mechanism design. 

\end{abstract}
\setcounter{page}{1}
\newpage

\section{Introduction}
A common bundling strategy is to create \textit{nested} bundles: A higher-tier bundle includes all the items in a lower-tier bundle. This strategy, also called tiered pricing, is widely adopted across various industries, including software companies (e.g. Slack), streaming services (e.g. Netflix), and e-commerce platforms (e.g. Shopify).\footnote{For instance, Slack offers four service tiers --- Free, Pro, Business+, and Enterprise Grid --- each of which includes all the features of the previous tier (\citealt{slack2022}).} Several questions arise: When is such a strategy profit-maximizing? Which items to package in one tier versus another tier? How many tiers are optimal? Even though these questions seem to be fundamental, relatively little is known because characterizations for optimal bundling are generally intractable (\citealt{Rochet2003}).\footnote{For instance, the optimal mechanism for selling two goods with additive, independent values remains unknown except for some special cases (\citealt{Carroll2017}).}

In this paper, we answer these questions when consumers differ in one dimension. This dimension could represent, for instance, income levels in retail pricing or enterprise complexity in enterprise pricing. We allow the consumers to have general non-additive values (in particular, heterogeneous preferences over different items, and complementary or substitutable preferences across different items). We allow the seller to have arbitrary costs for producing different bundles. Even though we assume one-dimensional heterogeneity, our bundling problem \textit{is} a multidimensional screening problem because there are multiple instruments (allocations of different bundles). 

We introduce a partial order on the set of bundles: A bundle $b_1$ is \textit{dominated} $(\preceq)$ by another bundle $b_2$ if (i) $b_1$ is a subset of $b_2$ and (ii) $b_1$ has a lower sales volume than $b_2$ when both are sold alone at their respective monopoly prices. This partial order can be readily determined by examining the demand curve for each bundle separately. However, it turns out that this simple partial order, under quasi-concavity assumptions, characterizes the optimal bundling strategy.

Our first main result (\Cref{thm:po}) shows that if the undominated bundles are nested (the \textit{nesting condition}), then nested bundling is optimal. Our second main result (\Cref{thm:char}) characterizes which nested menu is optimal: Selling a given menu of nested bundles is optimal if a smaller bundle in (out of) the menu sells more (less) than a bigger bundle in the menu, where sales volumes are measured when bundles are sold alone. We also present a partial converse  (\Cref{cor:minmax}): If nested bundling is optimal, then the minimal optimal menu must include (i) the best-selling bundle as the smallest bundle in the menu, and (ii) the grand bundle as the least-selling bundle in the menu.\footnote{Both ``best-selling'' and ``least-selling'' refer to sales volumes when bundles are sold alone. Under the optimal menu of bundles, the grand bundle may or may not be bought by the smallest fraction of consumers.} 

We present three applications of our main result. The first two applications connect the key demand parameters (price elasticities) and supply parameters (cost structures) to the structure of optimal bundling and quality discrimination. The third application shows how our bundling result also delivers insights into other multidimensional mechanism design problems, in particular screening with price and nonprice instruments. 

In the first application, we provide a sufficient condition for our nesting condition in terms of price elasticities and further characterize the optimal menu (see \Cref{subsec:demand}). Specifically, we introduce the \textit{union elasticity condition} which states that if the demand curves for two different bundles are both elastic at a certain quantity, then the demand curve for their union is also elastic at that quantity. Assuming zero marginal costs, we show that this condition implies the nesting condition, which then implies the optimality of nested bundling by our main result. We also show that, in a precise sense, the optimal menu can be iteratively constructed by using items with a more elastic demand curve as the ``basic items'' and items with a more inelastic demand curve as the ``upgrade items'' (\Cref{prop:elasticity}). In this case, if sold alone, a large bundle has a sales volume lower than its elastic items but higher than its inelastic items. The full characterization of optimal mechanisms enables comparative statics analysis. We find that as the dispersion of values for one item increases, the monopolist switches the tiers of different items and adopts a menu size that is $U$-shaped in the dispersion parameter (\Cref{prop:rotation}). These comparative statics results differ significantly from those in the standard quality-differentiated goods model, such as in \citet{johnson2006simple}, as our model allows for a much richer set of preferences (see \Cref{rmk:demand}).  

In fact, the quality-differentiated goods model (a la \citealt{mussa1978monopoly}) is a special case of our model where there are no heterogeneous relative preferences (see \Cref{subsec:supply}). Our second application shows that, even in this well-studied special case, our result yields new insights by providing a new characterization of the optimal menu (\Cref{prop:D-hat}). Using this characterization, we can hold the price elasticities of different qualities constant and study the effects of cost structures on the optimal quality design. We show that the optimal menu of qualities can be characterized by the lower increasing envelope of the average cost curve (\Cref{prop:C-check}). This result turns out to (i) generalize a finding by \citet{johnson2003multiproduct} and (ii) refine their intuition about when segmenting markets is profitable (see \Cref{rmk:supply}). On a technical level, the reason why our results are new even for one-dimensional screening problems is that we impose much weaker ``regularity'' assumptions compared to the textbook treatment. This generality allows for various forms of ``bunching'', which are ruled out by standard assumptions but can be characterized by our notion of dominance (see \Cref{rmk:bunching}). 

Our bundling result also provides insights into other multidimensional mechanism design problems. In the third application, building on a connection between bundling and costly screening from \citet{yang2022costly}, we use our main result to characterize when costly screening is optimal for a principal who can use both price and nonprice instruments such as wait time (see \Cref{subsec:screening}). We obtain (see \Cref{prop:screening}) a necessary and sufficient condition for the optimality of costly screening when the agent has ``negatively correlated preferences'' (higher types have higher disutilities), complementing \citet{yang2022costly}, which shows that costly screening is always suboptimal when the agent has ``positively correlated preferences'' (higher types have lower disutilities). Our result shows that, in a precise sense, costly screening is optimal if and only if there exists a costly action such that the elasticity of disutility with respect to consumers' types exceeds a threshold (see \Cref{rmk:costela}).

\paragraph{Discussion of Intuition.}\hspace{-2mm}We now present the key intuition behind our main result (see \Cref{subsec:intuition} for a detailed discussion).

Suppose that we have two items $\{1, 2\}$ and that costs are zero. Suppose further that there are only two possible bundles $\{1\}$ and $\{1, 2\}$ in the market, whose marginal revenue curves (when sold alone) are given in \Cref{fig:MR1}. Because consumers differ in one dimension, we can arrange them along a common quantity axis, as in \Cref{fig:MR1}, with consumers positioned toward the right end having lower values for the bundles. Starting from the right end of the quantity axis, we see that selling to these low-type consumers results in a negative marginal revenue for both bundles $\{1\}$ and $\{1, 2\}$. This is the case until we reach quantity $D^*(\{1,2\})$, the point at which the MR curve of bundle $\{1, 2\}$ crosses zero (i.e. the sales volume when sold alone). From this point on, there is a positive marginal revenue for selling bundle $\{1, 2\}$ to consumers. At the same time, from this point on, the marginal revenue for selling the smaller bundle $\{1\}$ to these consumers is always lower. Noting that the total revenue from any implementable assignments will be the sum of the integrals of the respective MR curves,\footnote{This fact is a consequence of the revenue equivalence theorem (see \Cref{sec:sketch}).} we can see that the optimal strategy is to always sell the bigger bundle $\{1, 2\}$ to consumers located to the left of the quantity $D^*(\{1,2\})$, which can be implemented by simply posting the usual monopoly price for bundle $\{1,2\}$. 

Now, suppose that bundle $\{2\}$ can be also sold in the market. If the MR curve for the smaller bundle $\{2\}$ also relates to the MR curve for the bigger bundle $\{1, 2\}$ in this way, then the same argument implies that selling only the bundle $\{1, 2\}$ is optimal even when all three bundles can be sold in the market. Of course, that need not be the case, as \Cref{fig:MR2} illustrates. But we can still start with the right end of the quantity axis, ``climb up'' the MR curves, and attempt to sell the bundle with the highest marginal revenue at that quantity. This strategy, if implementable, will allow us to attain the \textit{upper envelope} of the MR curves as depicted by the dotted curve in \Cref{fig:MR2}. This strategy may not be implementable, though, as there may not exist prices such that all consumers self-select into exactly the bundles that we want to assign to them. Nevertheless, if the upper envelope of the MR curves happens to be attained by an assignment rule that is \textit{monotone} in the set-inclusion order, then we know that there exist appropriate prices to implement the assignments (similar to the case of selling a single good in which we know that monotone allocations can be implemented by prices). This happens to be the case for \Cref{fig:MR2}, as shown by the depicted bundle assignments. Of course, for the assignment rule to be monotone, the assigned bundles must \textit{necessarily} be nested.

\begin{figure}
    \centering
\hspace{-0.07\linewidth}
\begin{subfigure}[b]{0.48\linewidth} 
\centering
        \includegraphics[scale=0.62]{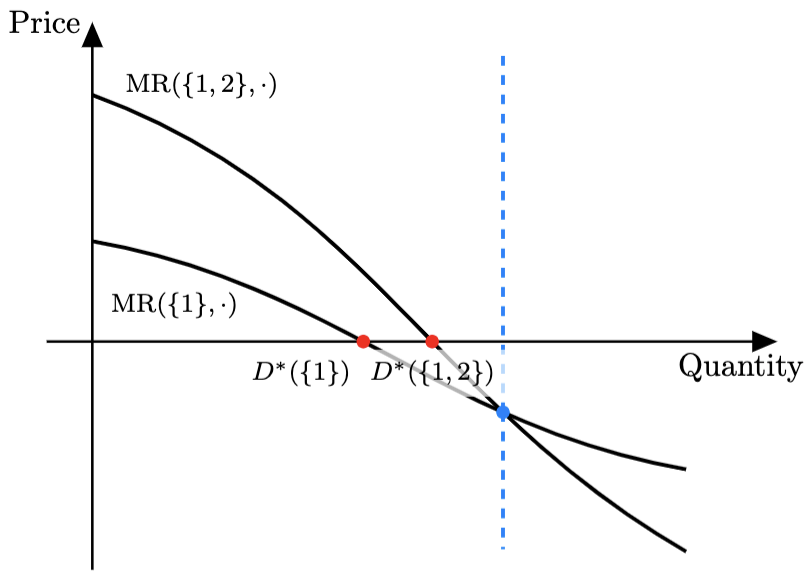}    \caption{MR curves under $\{1\} \preceq \{1, 2\}$\label{fig:MR1}}
\end{subfigure}
\hspace{0.04\linewidth}
\begin{subfigure}[b]{0.48\linewidth}
    \centering
        \includegraphics[scale=0.55]{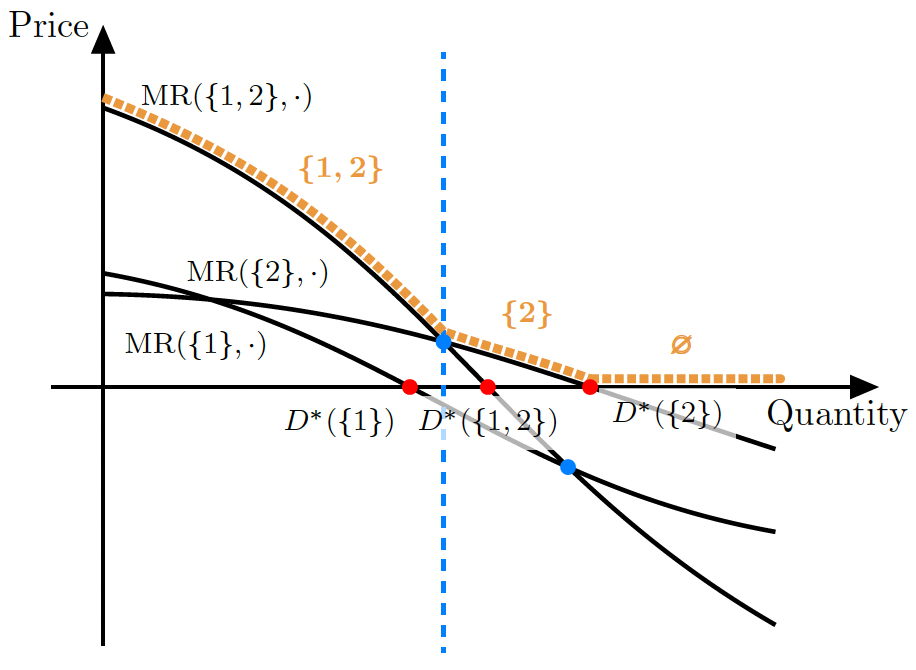}
 \caption{MR curves under $\{1\} \preceq \{1, 2\}$, $\{2\} \not\preceq \{1, 2\}$\label{fig:MR2}}
\end{subfigure}
       \caption{Illustration of the marginal revenue curves}
\end{figure}

A key contribution of this paper is to show that our nesting condition, under quasi-concavity assumptions, is \textit{sufficient} for this strategy to work, which automatically gives us (i) the optimality of nested bundling and (ii) the optimal menu (by ``climbing up'' the MR curves). We briefly explain why our nesting condition is sufficient in the two-item case. Suppose that $D^*(\{1\}) < D^*(\{1, 2\}) < D^*(\{2\})$ (in which case the undominated bundles $\{2\}$ and $\{1,2\}$ are nested). We claim that the \textit{configuration} of MR curves must resemble \Cref{fig:MR2}. Recall that when the revenue functions are quasi-concave, the MR curves cross zero once from above. It turns out that, under the standard quasi-concavity assumption on the revenue function for the incremental bundle, the MR curve of a bigger bundle also crosses the MR curve of a smaller bundle at most once from above.\footnote{This quasi-concavity assumption is stronger than what we actually assume in the model, which is a \textit{local} quasi-concavity condition and allows for multiple crossings of the MR curves (see \Cref{subsec:discussion}).} This implies that, as \Cref{fig:MR1} shows, the two quantities $D^*(\{1\}) < D^*(\{1, 2\})$ must be located in the region where the marginal revenue of selling the incremental bundle is positive (i.e. to the \textit{left} of the vertical dashed line in \Cref{fig:MR1}). At the same time, because the ordering is reversed, as \Cref{fig:MR2} shows, the two quantities $D^*(\{1, 2\}) < D^*(\{2\})$ must be located in the region where the marginal revenue of selling the incremental bundle is negative (i.e. to the \textit{right} of the vertical dashed line in \Cref{fig:MR2}). Thus, the upper envelope of the MR curves must be attained by an implementable allocation rule assigning only nested bundles $\{2\}$ and $\{1, 2\}$. Of course, in the general many-item case, it is impossible to exhaustively list all possible configurations of the MR curves. We provide the key intuition for the general case in \Cref{subsec:intuition}.

\subsection{Related Literature}
There is a substantial literature on multidimensional screening and optimal bundling (beginning with \citealt{stigler1963united}, \citealt{adams1976commodity}, \citealt{McAfee1989MultiproductValues}). A general lesson is that some form of bundling is generically profitable but characterizing optimal bundling strategies turns out to be very difficult (\citealt{armstrong1996multiproduct}, \citealt{Rochet1998}, \citealt{Carroll2017}). Because of this difficulty, relatively little is known about how optimal bundling strategies depend on economic primitives such as price elasticities and cost structures. This paper departs from most of the bundling literature which assumes additive values and multidimensional heterogeneity (\citealt{mcafee1988multidimensional}, \citealt{Manelli2007}, \citealt{pavlov2011optimal}, \citealt{daskalakis2017strong}).\footnote{A few papers also study non-additive values, including \citet{long1984comments}, \citet{armstrong2013more}, \citet{haghpanah2021pure}, \citet{ghili2021characterization}.} By doing so, we are able to connect the empirically relevant economic primitives to the structure of optimal bundling strategies and even perform comparative statics analysis. 

There is a nascent literature on nested bundling (\citealt{bergemann2022optimality}, \citealt{yang2022costly}). \citet{bergemann2022optimality} study the additive case and obtain conditions different from ours. An earlier work by this author introduces a multidimensional screening model with both price and costly nonprice instruments, and studies nested bundling as one application (\citealt{yang2022costly}).\footnote{For the literature on costly screening and money burning, see for example \citet{banerjee1997theory}, \citet{hartline2008optimal}, \citet{condorelli2012money}, \citet{amador2013theory}.} Building on this connection, the present paper proceeds in the reverse direction, by applying the main bundling result to study costly screening. These two papers also differ by imposing non-overlapping sets of assumptions, studying exactly the opposite cases in terms of the preferences over costly actions (see \Cref{subsec:screening}).

There is a long-standing literature on the profitability of price discrimination and more generally the optimality of pure bundling (\citealt{Stokey1979}, \citealt{deneckere1996damaged}, \citealt{anderson2009price}, \citealt{haghpanah2021pure}, \citealt{ghili2021characterization}).\footnote{Note that pure bundling can be thought of as selling only the highest quality version of a product (i.e. no price discrimination).} \citet{ghili2021characterization} introduces the use of single-bundle sales volumes to study the optimality of pure bundling. Our result nests \citet{ghili2021characterization} as a special case (see \Cref{rmk:pb}), which in turn generalizes a long line of inquiry on the profitability of price discrimination (\citealt{salant1989inducing}, \citealt{deneckere1996damaged}, \citealt{anderson2009price}). Importantly, our result pins down the optimal way of price discrimination, beyond the question of whether price discrimination is profitable.

The remainder of the paper proceeds as follows. \Cref{sec:model} presents the model. \Cref{sec:main} presents the main result. \Cref{sec:sketch} sketches the proof of the main result. \Cref{sec:app} presents the applications. \Cref{sec:conclusion} concludes.  \Cref{sec:proof} provides omitted proofs.

\section{Model}\label{sec:model}

A monopolist sells $n$ different goods $\{1, \dots, n\}$ to a unit mass of consumers.

Consumers have types $t \in T:= [\underline{t}, \overline{t}]$. Types are drawn from a distribution $F$ with a continuous, positive density $f$. Type $t$ has value $v(b, t)$ for bundle $b \in \mathcal{B} := 2^{\{1, \dots, n\}}$ with $v(\emptyset, t) = 0$. The value function $v(b, t)$ is (i) nondecreasing in $b$ (in the set-inclusion order), (ii) continuously differentiable in $t$, and (iii) strictly increasing in $t$ whenever $v(b, t) > 0$. For any stochastic assignment $a \in \Delta(\mathcal{B})$, we define $v(a, t) := \E_{b\sim a}[v(b, t)]$. The monopolist incurs cost $C(b)$ to produce bundle $b$ with $C(\emptyset) = 0$. We assume that it is efficient for the highest type $\overline{t}$ to consume all the items: $\argmax_{b} \{v(b, \overline{t}) - C(b)\} = \overline{b}$ where $\overline{b}:=\{1, \dots, n\}$.    

The seller wants to maximize expected profits over all stochastic mechanisms. By the revelation principle, it is without loss of generality to restrict attention to direct mechanisms. Specifically, a \textit{(stochastic, direct) mechanism} is a measurable map $(a, p) : T \rightarrow \Delta(\mathcal{B}) \times \R$ that satisfies the usual incentive compatibility (IC) and individual rationality (IR) conditions:
\begin{align*}
v(a(t), t) - p(t) &\geq v(a(\hat{t}), t) - p(\hat{t}) &&\text{ for all $t, \hat{t}$ in $T$; } \\
v(a(t), t) - p(t) &\geq 0 &&\text{ for all $t$ in $T$}\,.
\end{align*}
Two mechanisms are \textit{equivalent} if they differ on a zero-measure set of types. 

A \textit{menu} is a set of bundles (which we may assume include $\emptyset$).\footnote{To simplify notation, we omit the inclusion of $\emptyset$ in a menu whenever it is clear from the context.} We say that \textit{selling menu} $B$ \textit{is} \textit{optimal} if there exists an optimal mechanism $(a, p)$ such that $a(t) \in B$ for all $t$.\footnote{When an assignment $a(t) \in \Delta(\mathcal{B})$ is deterministic, we also let $a(t)$ denote the assigned bundle.} A menu $B$ is \textit{minimally optimal} if selling $B$ is optimal and selling any $B' \subset B$ is not optimal. A menu $B$ is \textit{nested} if the bundles in $B$ can be ordered by set inclusion. We say that \textit{nested bundling is optimal} if there exists a nested menu $B$ such that selling $B$ is optimal. 

For any bundle $b$, let $F_b$ be the distribution of $v(b, t)$. Let $P(b, q)$ be the \textit{demand curve}:
\[P(b, q) := F^{-1}_b(1 - q)\,.\]
We assume that the \textit{profit function} 
\[\pi(b, q):= (P(b, q) - C(b)) q\]
is strictly quasi-concave in $q \in [0, 1]$.\footnote{For expositional simplicity, whenever we impose strict quasi-concavity of a function $g$ on $[x_1, x_2]$, we assume in addition that $\nabla g(\,\cdot\,) = 0$ at $x \in [x_1, x_2]$ implies $g(x) \geq g(x')$ for all $x' \in [x_1, x_2]$ (i.e. we assume that the FOC is satisfied only at the maximum).} Define $D^*(b)$ as the unique \textit{sales volume} that maximizes the profit function, assumed to lie within $(0, 1)$ for any bundle $b$ with $|b| > 1$.

For any pair of nested bundles $b_1 \subset b_2$, we assume that (i) the \textit{incremental value}
\[v(b_2, t) - v(b_1, t)\]
is strictly increasing in $t$ whenever it is positive, and (ii) the \textit{incremental profit function}
\[\pi(b_2, q) - \pi(b_1, q)\]
is strictly quasi-concave in $q \in [0, \min(D^*(b_1), D^*(b_2))]$ (which is the interval where both profit functions, $\pi(b_1, q)$ and $\pi(b_2, q)$, are increasing).

\subsection{Discussion of Assumptions}\label{subsec:discussion}
\paragraph{Complements and Substitutes.}\hspace{-2mm}The assumptions made here are orthogonal to whether the items are complements or substitutes. To illustrate, consider a simple example where the value for a bundle $b$ is given by $v(b, t) = v(b) \cdot t$. Note that all the above assumptions hold if (i) types $t$ follow a regular distribution in the sense of \citet{Myerson1981} and (ii) $v(b)$ is monotone in the set-inclusion order, regardless of whether the value function $v(b)$ or the monopolist's cost function $C(b)$ exhibit supermodularity or submodularity.

\paragraph{One-dimensional Types.}\hspace{-2mm}At this level of generality, even with one-dimensional types, our model allows for different consumers to have different ordinal rankings over items (see \Cref{ex:two-item}). In fact, every item can be the most preferred item for at least \textit{some} consumers. The main restriction of one-dimensional types in our model is that such horizontal preferences are fixed for a given one-dimensional type $t$. Thus, our model is best suited for capturing settings in which some vertical attribute (such as income) is a good predictor of horizontal preferences for different items.

\begin{figure}[t]
    \centering
\begin{subfigure}[b]{0.48\linewidth}
\centering
        \includegraphics[scale=0.3]{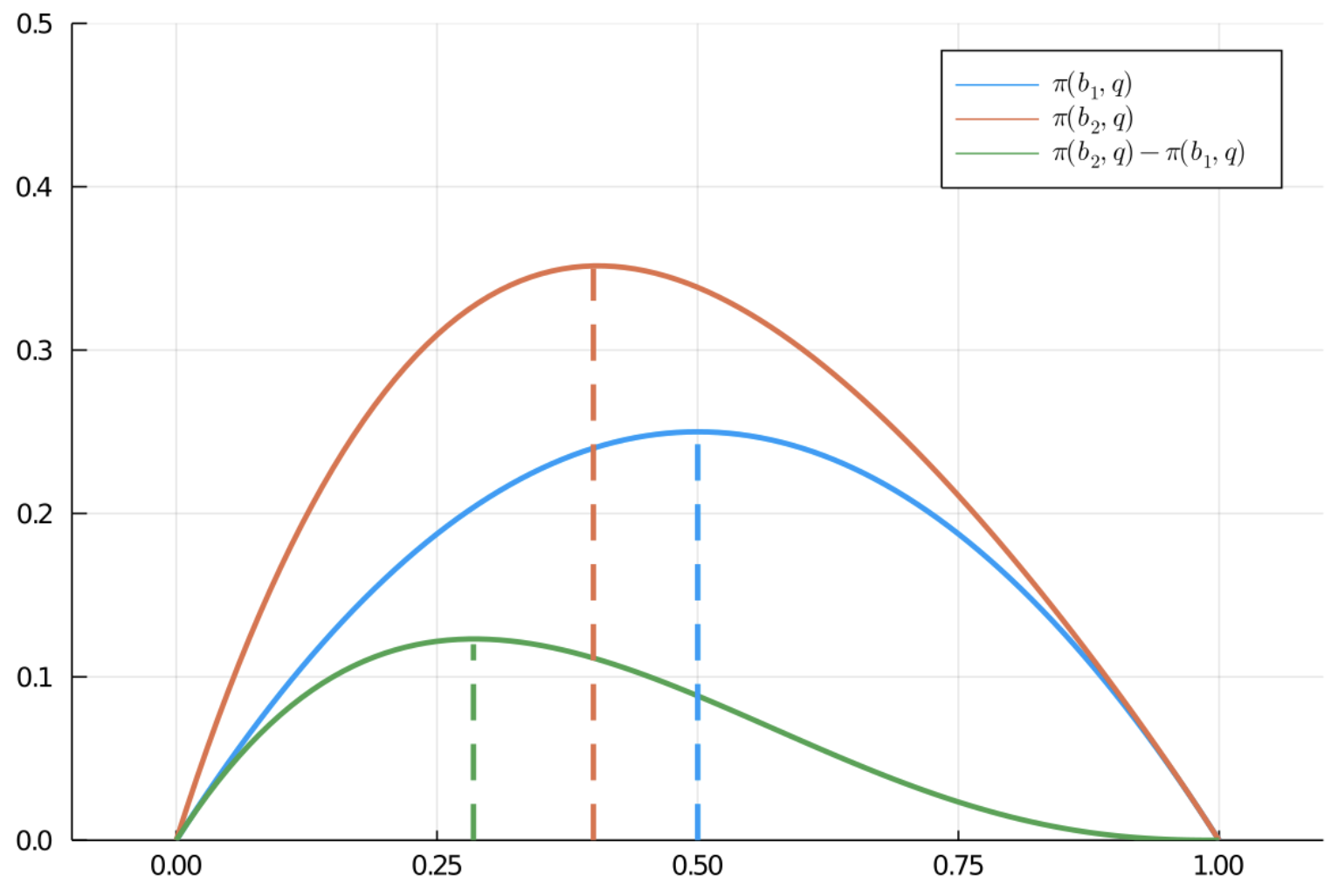}
    \caption{$k = 0$}
\end{subfigure}
\hspace{0.02\linewidth}
\begin{subfigure}[b]{0.48\linewidth}
    \centering
        \includegraphics[scale=0.3]{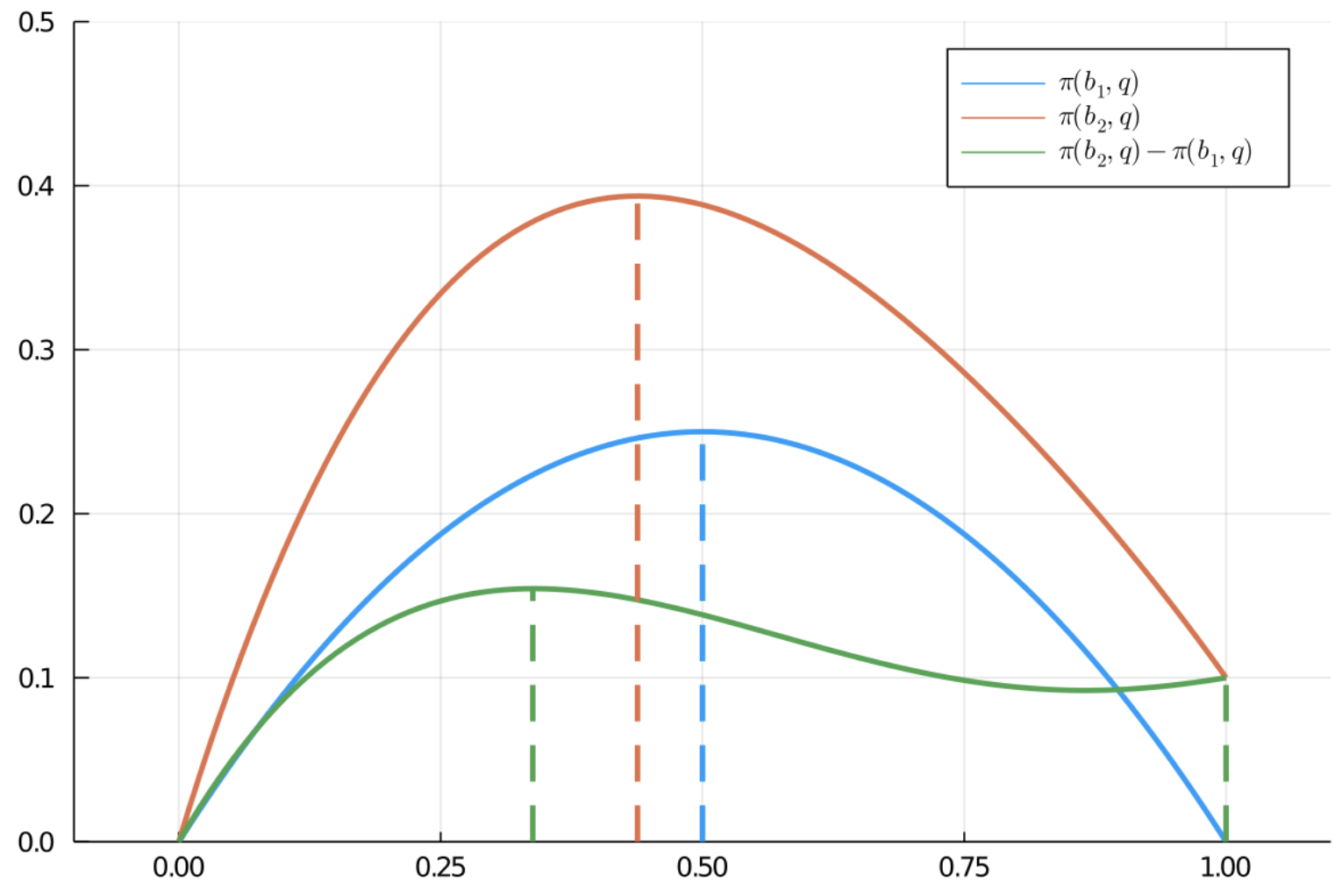}
       \caption{$k = 0.1$}
\end{subfigure}
\caption{Revenue curves given $v(b_1, t) = t$, $v(b_2, t) = t + t^{2.5} + k$, and $t\sim U[0, 1]$}
\label{fig:rev}
\end{figure}

\paragraph{Incremental Profits.}\hspace{-2mm}Consider any pair of nested bundles $b_1 \subset b_2$. Since the incremental value $v(b_2, t) - v(b_1, t)$ is monotone, note that the incremental profit function $\pi(b_2, q) - \pi(b_1, q)$ is equivalent to the profit function of a monopolist optimizing the quantity of the \textit{incremental bundle} $b_2 \backslash b_1$, given the plan of selling every consumer bundle $b_1$.

\paragraph{Local Quasi-concavity.}\hspace{-2mm}We impose only a \textit{local} quasi-concavity condition on the incremental profit function, which states that, within the interval $[0, \min(D^*(b_1), D^*(b_2))]$, the function $\pi(b_2, q) - \pi(b_1, q)$ has at most one peak. This interval is where both profit functions, $\pi(b_1, q)$ and $\pi(b_2, q)$, are increasing (i.e. before both of their peaks). This condition assumes that, within this interval, the sum of an increasing function $\pi(b_2, q)$ and a decreasing function $-\pi(b_1, q)$ is single-peaked. This local quasi-concavity condition is weaker than global quasi-concavity, which in turn is weaker than global concavity. The latter holds when the incremental values follow a regular distribution in the sense of \citet{Myerson1981}. To illustrate, suppose that $v(b_1, t) = t$, $v(b_2, t) = t + t^{2.5} + k$, $C(b_1) = C(b_2) = 0$, and types $t$ follow a uniform distribution $U[0, 1]$. Consider two cases: (i) $k = 0$ and (ii) $k = 0.1$. As shown in \Cref{fig:rev}, the incremental profit function in the first case is globally quasi-concave, whereas the incremental profit function in the second case has two peaks.  However, both cases satisfy our local quasi-concavity assumption.

\section{Main Result}\label{sec:main}

Our main result characterizes (i) when nested bundling is optimal and (ii) which nested menu is optimal. In \Cref{subsec:po}, we introduce a partial order on the set of bundles and show how this partial order characterizes the optimality of nested bundling. In \Cref{subsec:char}, we provide conditions that further characterize the optimality of a given nested menu. In \Cref{subsec:example}, we present a parameterized example to illustrate. In \Cref{subsec:intuition}, we discuss the key intuition behind our results. 

\subsection{Optimality of Nested Bundling}\label{subsec:po}

We define a partial order on the set of bundles $\mathcal{B}$ as follows:
\[b_1 \preceq b_2: \text{ $b_1 \subseteq b_2$ and $D^*(b_1) \leq D^*(b_2)$\,.}\]
A bundle $b$ is \textit{dominated} if there exists $b'\neq b$ such that $b \preceq b'$ and \textit{undominated} otherwise. We say that the \textit{nesting condition} holds if the undominated bundles can be ordered by set inclusion: that is, for any two bundles $b$ and $b'$, 
\[\text{both $b$ and $b'$ are undominated} \implies \text{either $b \subseteq b'$ or $b' \subseteq b$}\,. \tag{\textit{\textbf{Nesting condition}}}\]
\Cref{fig:hasse} illustrates this condition for a three-item example using a diagram, where an upward arrow from $b_1$ to $b_2$ represents $b_1 \preceq b_2$.

\begin{figure}
    \centering
    \includegraphics[scale=0.55]{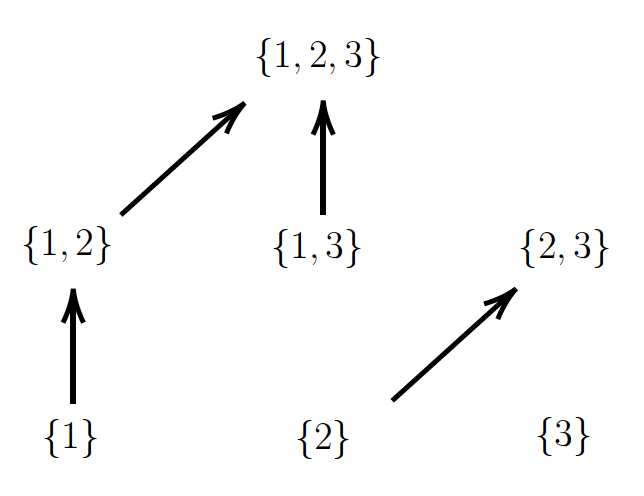}
    \caption{Illustration of the nesting condition for a three-item example}
    \label{fig:hasse}
\end{figure}

\begin{theorem}\label{thm:po}
Under the nesting condition, nested bundling is optimal and every optimal mechanism is equivalent to nested bundling. 
\end{theorem}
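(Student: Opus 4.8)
The plan is to exploit the revenue-equivalence representation of profits sketched in the introduction: for any IC–IR mechanism $(a,p)$, the expected profit equals $\int_0^1 \bar{a}(q)$-weighted marginal-revenue integrals, so that the seller's problem reduces to pointwise-maximizing a ``generalized marginal revenue'' $\mathrm{MR}(b,q)$ over bundles $b$, subject to the chosen assignment $q \mapsto a(q)$ being monotone in the set-inclusion order (which is exactly the implementability constraint for one-dimensional types). The target is then to show two things: (1) the unconstrained pointwise upper envelope $\overline{\mathrm{MR}}(q) := \max_b \mathrm{MR}(b,q)$ is attained, for a.e.\ $q$, by a bundle lying in the chain of undominated bundles, and (2) along that chain the maximizing bundle is monotone in $q$, so the envelope is in fact achieved by an implementable (indeed deterministic, nested) assignment. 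Combining (1) and (2) gives an optimal nested mechanism; uniqueness up to equivalence then follows from strict quasi-concavity forcing the pointwise maximizer to be essentially unique.

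First I would set up the marginal-revenue machinery precisely: define $\mathrm{MR}(b,q)$ as the derivative of $\pi(b,q)$ (equivalently $\frac{d}{dq}[(P(b,q)-C(b))q]$), note that strict quasi-concavity of $\pi(b,\cdot)$ means $\mathrm{MR}(b,\cdot)$ is positive on $(0,D^*(b))$ and negative on $(D^*(b),1)$ — a single crossing from above at $D^*(b)$ — and record the revenue-equivalence identity expressing expected profit as $\int$ of $\mathrm{MR}$ against the assignment, citing \Cref{sec:sketch}. Next I would prove the key lemma comparing two nested bundles $b_1 \subset b_2$: the incremental profit $\pi(b_2,q)-\pi(b_1,q)$ is, by assumption, strictly quasi-concave on $[0,\min(D^*(b_1),D^*(b_2))]$, hence its derivative $\mathrm{MR}(b_2,q)-\mathrm{MR}(b_1,q)$ is single-crossing from above on that interval; and on the complementary region at least one of $\pi(b_1,\cdot),\pi(b_2,\cdot)$ is past its peak. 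From this I want to deduce the ``configuration'' facts used in the two-item sketch: if $D^*(b_1) \le D^*(b_2)$ (i.e.\ $b_1 \preceq b_2$) then $b_2$'s MR curve dominates $b_1$'s on a terminal segment $[D^*(b_1),1]$ and more; whereas if $b_1$ sells strictly more than the larger $b_2$, then $b_2 \setminus b_1$'s marginal revenue is negative exactly where it matters, so $b_1$ is never needed above $b_2$ on the envelope.

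With these comparisons in hand, the combinatorial heart is to show that for a.e.\ $q$ the envelope-maximizing bundle can be taken undominated, and that as $q$ increases the maximizer moves \emph{down} the undominated chain (monotone in set inclusion). The idea: if $b$ is dominated by $b'$ with $b \subseteq b'$ and $D^*(b) \le D^*(b')$, the single-crossing comparison should show $\mathrm{MR}(b',q) \ge \mathrm{MR}(b,q)$ on the whole relevant region $[0,D^*(b')]$ where either could be the positive maximizer, so $b$ can always be replaced by $b'$ without lowering the envelope — pushing all mass onto undominated bundles. Then, restricting to the undominated chain $b^{(1)} \subset b^{(2)} \subset \cdots$, pairwise single-crossing of consecutive $\mathrm{MR}$ differences forces the pointwise argmax to be a monotone step function of $q$ (larger bundles win at small $q$, smaller at large $q$), exactly as in \Cref{fig:MR2}; a monotone deterministic assignment is implementable by posted bundle prices, and by revenue equivalence its profit equals $\int \overline{\mathrm{MR}}$, which upper-bounds every mechanism's profit. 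Uniqueness: any optimal mechanism must, by strict quasi-concavity of the $\pi$'s (which makes ties among MR curves occur only on measure-zero $q$-sets), attain $\overline{\mathrm{MR}}(q)$ pointwise a.e., hence assign the same bundles a.e., hence be equivalent to the nested mechanism.

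The main obstacle I anticipate is step (1)–(2) in full generality with many items and only \emph{local} (not global) quasi-concavity: the excerpt's own footnote flags that global quasi-concavity of incremental revenue would make a clean single-crossing argument, but the model only assumes single-peakedness of $\pi(b_2,q)-\pi(b_1,q)$ on the sub-interval $[0,\min(D^*(b_1),D^*(b_2))]$, so MR curves may cross multiple times outside that window. Handling this requires carefully confining all the crossing arguments to the regions where the relevant profit functions are still increasing — using that $D^*(b)$ is the unique peak to argue no dominated bundle is ever on the positive part of the envelope beyond its own $D^*$ — and then arguing that outside all those windows the envelope is governed by the grand-bundle/largest-undominated bundle whose MR is the last to go negative (this uses the efficiency assumption $\argmax_b\{v(b,\bar t)-C(b)\}=\bar b$ to pin down behavior near $q=0$). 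I would also need a separate small argument that the upper envelope restricted to the undominated chain is genuinely attained by a \emph{nested} selection and not merely by some non-monotone measurable choice — i.e.\ ruling out the pathology where, say, $b^{(1)}$ beats $b^{(2)}$ on two disjoint $q$-intervals sandwiching an interval where $b^{(2)}$ wins — which is precisely what the consecutive single-crossing property of incremental MR curves is designed to prevent, but assembling it across a long chain (transitivity of the crossing pattern) is the delicate bookkeeping I'd expect to spend the most effort on.
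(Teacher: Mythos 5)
Your overall architecture matches the paper's proof closely: you pass to the marginal-revenue / virtual-surplus representation, eliminate dominated bundles from the pointwise envelope, and then try to show the envelope-maximizing bundle is a monotone step function on the undominated chain, which you implement by posted prices and check gives the global bound. Step (1) is correct and essentially identical to the paper's \Cref{lem:remove}: since the dominance partial order is finite, every dominated bundle is dominated by some \emph{undominated} bundle $b'$, and for $b\preceq b'$ the single-crossing of $\pi(b',\cdot)-\pi(b,\cdot)$ on $[0,D^*(b)]$ plus the boundary condition $\mathrm{MR}(b,D^*(b))=0\le \mathrm{MR}(b',D^*(b))$ force $\max\{0,\mathrm{MR}(b',q)\}\geq \mathrm{MR}(b,q)$ everywhere.

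The genuine gap is exactly the place you yourself flag as ``delicate bookkeeping'': your claim that ``pairwise single-crossing of consecutive MR differences forces the pointwise argmax to be a monotone step function'' does not follow as stated, and the paper's proof is organized around a stronger property than pairwise comparisons. Pairwise single-crossing of $\mathrm{MR}(b_{j+1})-\mathrm{MR}(b_j)$ (or even of $\mathrm{MR}(b_k)-\mathrm{MR}(b_j)$ for all $j<k$, which is what the model actually supplies) holds only on the interval $[0,\min(D^*(b_j),D^*(b_k))]$, and the running envelope $V_j(q)=\max\{0,\mathrm{MR}(b_1,q),\dots,\mathrm{MR}(b_j,q)\}$ is a kinked object whose kinks come from the zero line and from \emph{earlier} bundles; you need $\mathrm{MR}(b_k,q)$ to single-cross \emph{this whole envelope}, not merely each neighbor in the chain. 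The paper's key technical lemma (\Cref{lem:envelope}) proves this by a non-obvious induction on $j$: writing $V_j=\max\{V_{j-1},\max\{0,\phi(b_j,\cdot)\}\}$, it combines (i) the pairwise comparison of $\phi(b_k)$ against $\max\{0,\phi(b_j)\}$, (ii) the inductive hypothesis that $\phi(b_j)$ single-crosses $V_{j-1}$, and (iii) the inductive hypothesis that $\phi(b_k)$ single-crosses $V_{j-1}$, and runs a three-case analysis on the relative positions of these crossing points to conclude that $\phi(b_k)$ single-crosses $V_j$. That is the step your proposal identifies but leaves unresolved; without it, you cannot rule out the non-monotone ``$b^{(1)}$ beats $b^{(2)}$ on two disjoint $q$-intervals'' pathology you worry about, because the curves can indeed cross multiple times outside the local quasi-concavity window. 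A secondary imprecision: monotonicity in the set-inclusion order is not ``exactly'' the implementability constraint in this multidimensional-allocation problem; the paper never claims a characterization, it verifies implementability \emph{ex post} using monotonicity together with the assumption that incremental values $v(b_2,t)-v(b_1,t)$ are increasing in $t$.
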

\begin{proof}[Proof of \Cref{thm:po}]
See  \Cref{subsec:proof-po}.
\end{proof}

We sketch the proof in \Cref{subsec:sketch-po}. The proof is constructive. It shows that, under the nesting condition, simply selling the set of undominated bundles is optimal. It also provides a simple algorithm to determine the minimal optimal menu and associated prices (see \Cref{alg:alg1} in \Cref{subsec:proof-po}). In the case of two items, the construction shows that the minimal optimal menu is exactly the set of undominated bundles, which we record below:
\begin{cor}\label{cor:two-item}
Suppose that there are two items. Under the nesting condition, the set of undominated bundles is the minimal optimal menu.
\end{cor}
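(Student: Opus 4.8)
The plan is to read off \emph{optimality} of the undominated set from the constructive proof of \Cref{thm:po} and to devote the real work to \emph{minimality} (and uniqueness). Write $U$ for the set of undominated bundles. With two items the only nonempty bundles are $\{1\}$, $\{2\}$, $\{1,2\}$; since $\{1\}$ and $\{2\}$ are incomparable, the nesting condition lets at most one of them be undominated, while $\overline{b}=\{1,2\}$ is always undominated. So (as always, the menu also contains $\emptyset$) $U$ is one of $\{\emptyset,\{1,2\}\}$, $\{\emptyset,\{1\},\{1,2\}\}$, $\{\emptyset,\{2\},\{1,2\}\}$; in the latter two the undominated singleton $\{i\}$ has $D^*(\{i\})>D^*(\{1,2\})$ and the dominated singleton $\{j\}$ has $D^*(\{j\})\le D^*(\{1,2\})$. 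By \Cref{thm:po} selling $U$ is optimal (the proof shows that selling the undominated bundles is optimal) and every optimal mechanism is equivalent to nested bundling; hence the optimal profit equals the largest profit attainable with one of the (at most six) nested menus over $\{1,2\}$, and it suffices to check that $U$ is the only one of these that is both optimal and minimal.

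I would compute those profits via the revenue-equivalence representation (the ``sum of the integrals of the MR curves'' from the paper's footnote): for a nested menu $\emptyset\subsetneq c_1\subsetneq c_2$ with quantities $q_1\ge q_2$ the best implementing mechanism earns
\[
\pi(c_1,q_1)+\bigl(\pi(c_2,q_2)-\pi(c_1,q_2)\bigr),
\]
and for a singleton menu $\{\emptyset,c\}$ it earns $\pi(c,q)\le\pi(c,D^*(c))$. The engine is the incremental profit function $g(q):=\pi(c_2,q)-\pi(c_1,q)$, which has $g(0)=0$ and right-derivative $g'(0^+)=\bigl(v(c_2,\overline{t})-C(c_2)\bigr)-\bigl(v(c_1,\overline{t})-C(c_1)\bigr)$; when $c_2=\overline{b}$, the only case arising with two items, the hypothesis that $\argmax_b\{v(b,\overline{t})-C(b)\}$ is the singleton $\{\overline{b}\}$ forces $g'(0^+)>0$, so $g$ is strictly increasing out of $0$. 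Together with (local) strict quasi-concavity of $g$ on $[0,\min(D^*(c_1),D^*(c_2))]$ this gives two facts. (a) If $c_1$ is dominated by $c_2$ (so $D^*(c_1)\le D^*(c_2)$) then $g$ is strictly increasing on $[0,D^*(c_1)]$, so an envelope computation shows the best profit with $\{\emptyset,c_1,c_2\}$ equals $\pi(c_2,D^*(c_2))$, which strictly exceeds $\pi(c_1,D^*(c_1))$. (b) If instead $D^*(c_1)>D^*(c_2)$ then $g$ has a strictly positive maximum on $[0,D^*(c_2)]$, so taking $q_1=D^*(c_1)$ and $q_2$ at that maximizer shows the best profit with $\{\emptyset,c_1,c_2\}$ strictly exceeds $\pi(c_1,D^*(c_1))$, and taking $q_2=D^*(c_2)$ together with strict monotonicity of $\pi(c_1,\cdot)$ on $[D^*(c_2),D^*(c_1)]$ shows it also strictly exceeds $\pi(c_2,D^*(c_2))$.

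Then I would finish with a short case check. If $U=\{\emptyset,\{1,2\}\}$: its only proper subset $\{\emptyset\}$ earns $0<\pi(\{1,2\},D^*(\{1,2\}))$, so $U$ is minimally optimal; by (a) each $\{\emptyset,\{j\}\}$ earns strictly less than $\pi(\{1,2\},D^*(\{1,2\}))$ and each $\{\emptyset,\{j\},\{1,2\}\}$ earns exactly $\pi(\{1,2\},D^*(\{1,2\}))$ but is not minimal, so $U$ is the unique minimally optimal menu. If $U=\{\emptyset,\{i\},\{1,2\}\}$: by (b) the best profit with $U$ strictly exceeds $\pi(\{i\},D^*(\{i\}))$ (add $\{1,2\}$ on top) and $\pi(\{1,2\},D^*(\{1,2\}))$ (add $\{i\}$ below), hence strictly exceeds the best profit of each proper subset $\{\emptyset\}$, $\{\emptyset,\{i\}\}$, $\{\emptyset,\{1,2\}\}$, so $U$ is minimally optimal; and the remaining nested menus $\{\emptyset,\{j\}\}$, $\{\emptyset,\{j\},\{1,2\}\}$ earn at most $\pi(\{1,2\},D^*(\{1,2\}))$ by (a), again strictly below $U$. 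In every case $U$ is the unique minimally optimal menu, which is \Cref{cor:two-item}.

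The step I expect to be the main obstacle is turning every comparison above into a \emph{strict} inequality: a priori $g$ could be nonpositive throughout $[0,D^*(c_1)]$, and what excludes this is precisely the interaction between the strict quasi-concavity hypotheses and the efficiency assumption in its \emph{strict} form (that $\overline{b}$ is the \emph{unique} maximizer of $v(\cdot,\overline{t})-C(\cdot)$), which forces $g'(0^+)>0$. One also has to handle with care the knife-edge $D^*(c_1)=D^*(c_2)$ (using the paper's convention that the first-order condition holds only at a maximum) and to check that the revenue-equivalence decomposition applies with the monotone increment $v(c_2,\cdot)-v(c_1,\cdot)$ playing the role of the ``demand for the upgrade'' — which is exactly why the model assumes that increment to be monotone.
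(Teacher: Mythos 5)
Your approach is correct and reaches the stated conclusion, but it takes a genuinely different route from the paper's. The paper obtains \Cref{cor:two-item} by tracing through \Cref{alg:alg1}: one verifies that with two items neither of the pruning steps (4(iii), 4(iv)) can fire — step 4(iv) would require the incremental virtual surplus at $\overline{t}$ to be nonpositive, ruled out by the assumption that $\overline{b}$ uniquely maximizes $v(\cdot,\overline{t})-C(\cdot)$, and step 4(iii) would require the incremental profit to be maximized at the upper endpoint, ruled out by \Cref{lem:cross2} — so the algorithm returns $B=U$, and \Cref{thm:po}'s proof already certifies that the returned $B$ is minimally optimal. You instead enumerate all nested menus over two items, express each menu's best profit through the revenue-equivalence decomposition $\pi(c_1,q_1)+\bigl(\pi(c_2,q_2)-\pi(c_1,q_2)\bigr)$, and compare directly. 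Both hinge on the same two inputs — the efficiency assumption forcing $g'(0^+)>0$, and local quasi-concavity of the incremental profit — so you have correctly isolated the load-bearing hypotheses. Your route is more elementary and self-contained in the two-item case; the paper's is shorter only because it amortizes the work already done for \Cref{thm:po}.

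One step in your fact (a) is under-argued and you should shore it up. You infer that $D^*(c_1)\le D^*(c_2)$ together with $g'(0^+)>0$ and local strict quasi-concavity on $[0,D^*(c_1)]$ forces $g$ to be strictly increasing on $[0,D^*(c_1)]$. But a positive initial derivative plus quasi-concavity does not by itself exclude an interior peak; you need an extra argument (this is exactly what the paper's \Cref{lem:cross2} supplies). The fix is short: if $g$ had an interior peak $q^\ast\in(0,D^*(c_1))$, then $g'$ would be strictly negative on $(q^\ast,D^*(c_1)]$, in particular $g'(D^*(c_1))=\phi(c_2,t^*(c_1))-\phi(c_1,t^*(c_1))=\phi(c_2,t^*(c_1))<0$; since $\phi(c_2,\cdot)$ crosses zero exactly once at $t^*(c_2)$, this gives $t^*(c_1)<t^*(c_2)$, i.e.\ $D^*(c_1)>D^*(c_2)$, contradicting the hypothesis. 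A similar remark applies to your upper bound on the profit of a menu $\{\emptyset,\{j\},\{1,2\}\}$ with $\{j\}$ dominated: that the best implementable profit over that menu is capped by $\pi(\{1,2\},D^*(\{1,2\}))$ is the content of the elimination step (\Cref{lem:remove}), which you invoke implicitly when you pass from "best implementing mechanism" over monotone thresholds to "best profit with the menu"; it is worth stating that you rely on it.
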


\begin{rmk}\label{rmk:efficiency}
Note that \Cref{thm:po} holds even when the socially efficient allocations require bundles that are not nested. That is, the nesting condition implies the optimality of nested bundling even when the efficient allocations require a far richer set of bundles. Note also that \Cref{thm:po} implies that the optimal mechanism is deterministic. This is in general not true when the nesting condition is not satisfied.
\end{rmk}

\begin{rmk}\label{rmk:eta}
The nesting condition is implied by suitable conditions on price elasticities (when marginal costs are zero). Let $\eta(b, q)$ be the usual price elasticity for bundle $b$ at quantity $q$.\footnote{That is, $\eta(b, q) := \big[\frac{\d \log P(b, q)}{\d \log q}\big]^{-1}$.} In \Cref{subsec:demand}, we show (see \Cref{lem:union}) that a sufficient condition for the nesting condition is the \textit{union elasticity condition} which says that if the demand curves for two bundles are both elastic at a certain quantity $q$, then the demand curve for their union is also elastic at quantity $q$:
\[\eta(b_1, q) < -1 \,\,\text{ and }\,\, \eta(b_2, q) < -1 \implies \eta(b_1 \cup b_2, q) < -1\,.\tag{\textit{\textbf{Union elasticity condition}}}\]
Therefore, as a corollary of \Cref{thm:po}, nested bundling is optimal under the union elasticity condition and zero costs.\footnote{When costs are present, we can simply modify the price elasticity to be $\Tilde{\eta}(b, q):= \big[\frac{\d \log (P(b, q) - C(b))}{\d \log q}\big]^{-1}$.} That is, nested bundling is optimal when bundling results in a demand curve that has a larger elastic region than at least one of the individual demand curves. In \Cref{subsec:demand}, we also characterize (see \Cref{prop:elasticity}) the undominated bundles under the union elasticity condition, and show that the optimal menu can be constructed iteratively by using items with a more elastic demand curve as the ``basic items'' and items with a more inelastic demand curve as the ``upgrade items'', with both measured by the size of their elastic regions.  
\end{rmk}

\begin{rmk}
Both the nesting condition and the union elasticity condition are agnostic to whether the items are complements or substitutes. To illustrate, consider two items and zero costs. Suppose that $v(\{1,2\}, t) = \kappa \cdot \big(v(\{1\}, t) + v(\{2\}, t)\big)$ where $\kappa$ is a positive constant. Depending on the value of $\kappa$, the two items can be complements ($\kappa > 1$), substitutes ($\kappa < 1$), or additive ($\kappa = 1$). However, one can verify that the union elasticity condition always holds in this case, regardless of the value of $\kappa$.
\end{rmk}

\subsection{Characterization of Optimal Menu}\label{subsec:char}

Our second main result provides conditions that further characterize the optimality of a given nested menu.
 
\begin{theorem}\label{thm:char}
Selling a nested menu $B$ is optimal if:
\begin{itemize}
    \item[(i)]  For all $b_1 \in B$, $D^*(b_1) > D^*(b_2)$ for all $b_2 \in B$ such that $b_1 \subset b_2$\,;
    \item[(ii)] For all $b_1 \not\in B$, $D^*(b_1) \leq D^*(b_2)$ for some $b_2 \in B$ such that $b_1 \subset b_2$\,.
\end{itemize} 
Conversely, selling a nested menu $B$ is minimally optimal only if: 
\begin{itemize}
    \item[(i)]  For all $b_1 \in B$, $D^*(b_1) > D^*(b_2)$ for all $b_2 \in B$ such that $b_1 \subset b_2$\,;
    \item[(ii)] For all $b_1 \not\in B$, $D^*(b_1) \leq D^*(b_2)$ for some non-empty $b_2 \in B$\,. 
\end{itemize}
\end{theorem}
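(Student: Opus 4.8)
The plan is to treat the two directions separately, in each case reducing to the ``profit decomposition'' for nested (tiered) mechanisms that underlies the sketch in \Cref{sec:sketch}: for a nested menu $\emptyset \subsetneq b^{(1)} \subsetneq \cdots \subsetneq b^{(k)}$ sold with tier quantities $q_1 \ge \cdots \ge q_k \ge 0$, incentive compatibility, individual rationality and the monotonicity of incremental values pin the profit down to
\[
\pi\bigl(b^{(1)}, q_1\bigr) + \sum_{j=2}^{k}\Bigl(\pi\bigl(b^{(j)}, q_j\bigr) - \pi\bigl(b^{(j-1)}, q_j\bigr)\Bigr) \;=\; \int_0^1 \MR_{a(q)}(q)\,\d q ,
\]
where $\MR_b(q) := \partial_q \pi(b, q)$ is positive on $\bigl(0, D^*(b)\bigr)$ and negative afterward. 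This expression is \emph{separable}: the variable $q_j$ enters only the $j$-th summand $g_j(q) := \pi(b^{(j)}, q) - \pi(b^{(j-1)}, q)$ (with $g_1 := \pi(b^{(1)}, \cdot)$), which is strictly quasi-concave with peak at $D^*(b^{(1)})$ when $j=1$ and at some $u_j \le \min\bigl(D^*(b^{(j-1)}), D^*(b^{(j)})\bigr)$ when $j \ge 2$, by the incremental-profit assumption.

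For the sufficiency direction, I would first show that (i)--(ii) force $B$ (stripped of $\emptyset$) to coincide with the set of undominated bundles, so that the nesting condition holds. Indeed, if an undominated $b^*$ were not in $B$, then (ii) would supply $b_2 \in B$ with $b^* \subsetneq b_2$ and $D^*(b^*) \le D^*(b_2)$, i.e.\ $b^* \preceq b_2 \ne b^*$, contradicting undominatedness; conversely, if some $b \in B$ were dominated, it would satisfy $b \preceq \hat b$ for an undominated $\hat b$ with $b \subsetneq \hat b$, and then $\hat b \in B$ by the previous step, so since $B$ is nested, (i) would give $D^*(b) > D^*(\hat b)$ --- again a contradiction. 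As $B$ is nested, the undominated bundles are therefore nested, so \Cref{thm:po} applies; and since its proof is constructive and shows that \emph{selling the set of undominated bundles} is optimal, selling $B$ is optimal. This direction is thus essentially bookkeeping once \Cref{thm:po} is in hand.

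For the converse, let $B = \{\emptyset, b^{(1)}, \dots, b^{(k)}\}$ be nested and minimally optimal, with the $b^{(j)}$ ordered by inclusion; then the associated tiered mechanism is optimal, so its quantities maximize the displayed objective subject to $q_1 \ge \cdots \ge q_k \ge 0$. For (i): since a violation of (i) for any pair forces one for an adjacent pair, it suffices to exclude $D^*(b^{(j)}) \le D^*(b^{(j+1)})$. In that case $\min\bigl(D^*(b^{(j)}), D^*(b^{(j+1)})\bigr) = D^*(b^{(j)})$ and $\MR_{b^{(j+1)}}(D^*(b^{(j)})) \ge 0 = \MR_{b^{(j)}}(D^*(b^{(j)}))$, so $g_{j+1}$ is increasing on all of $\bigl[0, D^*(b^{(j)})\bigr]$; a short analysis of the separable, monotonicity-constrained maximization then shows that the constraint $q_j \ge q_{j+1}$ binds at the optimum, so tier $b^{(j)}$ serves a null set of types and the mechanism is equivalent to one on $B \setminus \{b^{(j)}\}$ with the same profit, contradicting minimal optimality. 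For (ii): suppose some $b_1 \notin B$ has $D^*(b_1) > D^*(b)$ for every non-empty $b \in B$; since the top type must receive $\overline b$ we have $b^{(k)} = \overline b \in B$, hence $b_1 \subsetneq \overline b$. One then wants to splice $b_1$ into the chain just above the largest $B$-bundle it contains (replacing it, where necessary, by its union with the $B$-bundle immediately below, and iterating), producing a nested super-menu on which reallocating an interval of types near quantity $D^*(b_1)$ --- where every bundle of $B$ already has negative marginal revenue --- to $b_1$ strictly increases $\int_0^1 \MR_{a(q)}(q)\,\d q$, since $\MR_{b_1} > 0$ there; this contradicts optimality of $B$.

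I expect the (ii) half of the converse to be the main obstacle. The ``splice $b_1$ in'' argument is immediate when $b_1$ is comparable to every bundle of $B$, but in general $b_1$ may be incomparable to several of them, and one must then check that replacing the offending segment of the chain by suitable unions (or passing to the sub-menu of $B$-bundles comparable to $b_1$ and adjoining $b_1$) still yields a \emph{strict} improvement. Verifying this requires controlling the relative positions of the marginal-revenue curves of these unions, which is exactly where the quasi-concavity hypotheses and the envelope analysis developed in the proof of \Cref{thm:po} are needed. By contrast, the (i) half is routine once the separable profit decomposition is available, modulo care with the monotonicity constraints in the optimization.
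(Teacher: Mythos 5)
Your sufficiency argument is essentially the paper's: you show that (i)--(ii), together with nestedness of $B$, force $B$ (minus $\emptyset$) to equal the set $U$ of undominated bundles, so the nesting condition holds and the constructive proof of \Cref{thm:po} delivers optimality of selling $U = B$. (The paper's writeup also strengthens this to a version under weaker assumptions applying only to pairs with the larger bundle in $B$, which you do not need for the theorem statement itself.) This direction is fine.

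For the converse, your treatment of (i) is in the right spirit: a violation of (i) forces one for an adjacent pair $b^{(j)} \subset b^{(j+1)}$ with $D^*(b^{(j)}) \le D^*(b^{(j+1)})$, and one wants to drop $b^{(j)}$. The paper does this by noting $b^{(j)}$ is dominated within $B$, so the virtual-surplus envelope machinery from \Cref{thm:po} produces an optimal mechanism whose range is $U_B \subsetneq B$, contradicting minimal optimality. Your alternative via the separable tiered-profit decomposition is plausible but is not ``short'': you would have to justify restricting to deterministic tiered mechanisms, and the local quasi-concavity hypothesis pins down $g_j$ only on $[0, \min(D^*(b^{(j-1)}), D^*(b^{(j)}))]$, so showing the constraint $q_j \ge q_{j+1}$ binds at an optimum requires ruling out optima with $q_j^*$ outside that range, which your sketch does not do.

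The genuine gap, which you flag yourself, is the converse for (ii). Your ``splice $b_1$ into the chain'' idea has no safe landing when $b_1$ is incomparable to several bundles of $B$: a deterministic nested super-menu containing $b_1$ may simply not exist, and the candidate unions $b_1 \cup b^{(j)}$ are \emph{different} bundles with their own $D^*$ values, about which the hypotheses say nothing, so there is no reason any of them yields a strict improvement. The paper sidesteps implementability entirely by going \emph{stochastic}: to the optimal $(a,p)$ it adds one new option, a lottery that delivers $b_1$ with probability $\epsilon$ at price $\epsilon\, v(b_1, t^*(b_1))$. Since $D^*(b_1) > D^*(\underline b)$ for the smallest non-empty $\underline b \in B$, we have $t^*(b_1) < t^*(\underline b)$; types in $[t^*(b_1), t^*(\underline b))$, who previously bought nothing, take the lottery (a gain of order $\epsilon$ because $\phi(b_1,\cdot)>0$ there), while only a sliver $[t^*(\underline b), t^\dagger_\epsilon)$ downgrades from $\underline b$, and because the original prices were already optimal and $\phi(\underline b, t^*(\underline b))=0$, that loss is of order $\epsilon^2$. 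No relation between $b_1$ and the bundles of $B$ in the set-inclusion order is ever used, which is precisely why condition (ii) of the converse, unlike the sufficiency direction, does not assert $b_1 \subset b_2$. This lottery perturbation is the missing key idea in your proposal.
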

\begin{proof}[Proof of \Cref{thm:char}]
See  \Cref{subsec:proof-char}.
\end{proof}
We sketch the proof in \Cref{subsec:sketch-char}. The first statement follows from the proof of \Cref{thm:po} by unpacking the definition of undominated bundles.\footnote{Additionally, we also show that to characterize the optimality of a given nested menu $B$, we can relax our monotone incremental values assumption and locally quasi-concave incremental profits assumption to only apply to pairs of nested bundles $b_1 \subset b_2$ where $b_2 \in B$.} The second statement is proved by constructing a smaller optimal menu when part (i) fails, and a strict improvement when part (ii) fails. The strict improvement when part (ii) fails is \textit{not} a nested-menu mechanism; rather, it perturbs the original mechanism by adding a new option involving probabilistic assignments.  

Let the \textit{best-selling bundle} $b^*$ be the bundle with the highest sales volume when sold alone:
\[b^* := \argmax_{b \neq \emptyset} \{D^*(b)\}\,,\]
which we assume to be unique. A consequence of \Cref{thm:char} is the following result:
\begin{cor}\label{cor:minmax}
Every minimal optimal, nested menu $B$ includes (i) the best-selling bundle $b^*$ as the smallest non-empty bundle in the menu, and (ii) the grand bundle $\overline{b}$ as the least-selling bundle in the menu. 
\end{cor}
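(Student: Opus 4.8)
The plan is to derive \Cref{cor:minmax} directly from the converse (``only if'') direction of \Cref{thm:char}, so the work is mostly unpacking what conditions (i) and (ii) there say about a minimal optimal nested menu $B$. Write $B = \{\emptyset = b_0 \subset b_1 \subset \cdots \subset b_m = ?\}$ for its chain of bundles; the two claims to establish are that the smallest non-empty element $b_1$ equals the best-selling bundle $b^*$, and that the largest element $b_m$ equals the grand bundle $\overline{b}$ and is the least-selling bundle in the menu.

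First I would handle the claim about $b^*$. Suppose toward a contradiction that $b^* \notin B$. Since $b^* \neq \emptyset$, apply part (ii) of the converse in \Cref{thm:char}: there is a non-empty $b_2 \in B$ with $D^*(b^*) \leq D^*(b_2)$. But by definition of the best-selling bundle, $D^*(b^*) \geq D^*(b)$ for every non-empty $b$, with the max attained uniquely; combining gives $D^*(b^*) = D^*(b_2)$, contradicting uniqueness of $b^*$ unless $b_2 = b^*$, i.e.\ $b^* \in B$ after all. So $b^* \in B$. It remains to show $b^*$ is the \emph{smallest} non-empty bundle in $B$, i.e.\ $b^* \subseteq b$ for all non-empty $b \in B$. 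This is where part (i) of the converse does the work: if some non-empty $b_1 \in B$ satisfied $b_1 \subsetneq b^*$, then taking $b_2 = b^*$ (which is in $B$ and strictly contains $b_1$), condition (i) would demand $D^*(b_1) > D^*(b^*)$, again contradicting that $b^*$ is the unique best-seller. Since $B$ is nested and $b^* \in B$, every non-empty $b \in B$ is comparable to $b^*$; the previous sentence rules out $b \subsetneq b^*$, so $b^* \subseteq b$. Hence $b^*$ is the smallest non-empty bundle in $B$.

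Next the claim about $\overline{b}$. Suppose $\overline{b} \notin B$. Since $\overline{b}$ is non-empty, part (ii) of the converse gives a non-empty $b_2 \in B$ with $D^*(\overline{b}) \leq D^*(b_2)$, and necessarily $b_2 \subsetneq \overline{b}$. I claim this is impossible: the efficiency assumption $\argmax_b\{v(b,\overline t) - C(b)\} = \overline b$ together with the monotone-incremental-value and quasi-concavity assumptions forces $\overline b$ to be undominated, in particular $b_2 \not\preceq \overline b$ fails to be the obstruction — wait, the cleaner route is to invoke the ``if'' direction's logic via $\preceq$. Actually the slickest argument: \Cref{thm:po}'s proof (which the excerpt says is constructive) shows selling the undominated bundles is optimal, and \Cref{cor:minmax} should follow by noting $\overline b$ is always undominated (nothing strictly contains it, so it cannot be dominated) and any minimal optimal nested menu must ``reach'' it. Concretely: if $\overline b \notin B$, then since $B$ is nested with top element $b_m \subsetneq \overline b$, I would show the assignment generated by $B$ never sells $\overline b$, yet the efficiency condition at $\overline t$ implies a profitable deviation selling $\overline b$ to a top interval of types — contradicting optimality of $B$. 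Given $\overline b \in B$ and $B$ nested, $\overline b = b_m$ is the largest bundle. Finally, $\overline b = b_m$ is the \emph{least-selling} bundle in $B$: for any other non-empty $b = b_j \in B$ with $j < m$, we have $b_j \subsetneq b_m = \overline b$, and $b_j \in B$, so part (i) of the converse gives $D^*(b_j) > D^*(b_m)$; thus $D^*(\overline b) < D^*(b)$ for every other non-empty $b \in B$, i.e.\ $\overline b$ has the strictly smallest sales volume in the menu.

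The main obstacle is the argument that $\overline{b} \in B$: unlike the $b^*$ part, which is a one-line application of the converse's conditions plus uniqueness, ruling out $\overline b \notin B$ needs the global efficiency hypothesis on $\overline t$ and a genuine improvement argument (selling the grand bundle to the highest types raises profit, or equivalently noting $\overline b$ is undominated and the minimal menu from \Cref{thm:po}'s construction must contain every undominated bundle that is ``active''). I would want to be careful here about whether ``minimal optimal'' forces inclusion of $\overline b$ or only of $b^*$ — the statement of \Cref{cor:minmax} asserts both, so the efficiency assumption must be doing the job, and I would lean on the constructive proof of \Cref{thm:po} cited in \Cref{subsec:proof-po} to close this gap cleanly rather than re-deriving the improvement from scratch.
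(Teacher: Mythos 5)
Your handling of part (i), the claim about $b^*$, is clean and correct: membership $b^* \in B$ follows from condition (ii) of the converse of \Cref{thm:char} plus uniqueness of the best-seller, and minimality of $b^*$ in $B$ follows from condition (i) applied with $b_2 = b^*$. Likewise, once $\overline{b}\in B$ is known, the ``least-selling'' assertion is an immediate reading of condition (i), since every other non-empty $b\in B$ is a strict subset of $\overline{b}$.

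The genuine gap, which you yourself flag, is the claim $\overline{b}\in B$, and your first attempt at it does not work: part (ii) of the converse only produces a non-empty $b_2\in B$ with $D^*(\overline{b})\le D^*(b_2)$, and since any such $b_2$ is necessarily a strict subset of $\overline{b}$, no dominance relation between $\overline{b}$ and $b_2$ (in either direction) is implied by the definition of $\preceq$. That $\overline{b}$ is undominated (nothing strictly contains it) is true but irrelevant here, because the obstruction would have to run the other way. So \Cref{thm:char} alone cannot deliver $\overline{b}\in B$; the paper's phrase ``a consequence of \Cref{thm:char}'' glosses over this. What actually does the work is the primitive assumption $\argmax_b\{v(b,\overline{t}) - C(b)\}=\overline{b}$, which says $\phi(\overline{b},\overline{t}) = v(\overline{b},\overline{t})-C(\overline{b})$ strictly exceeds $\phi(b,\overline{t})$ for every $b\ne\overline{b}$ (recall $\frac{1-F(\overline{t})}{f(\overline{t})}=0$). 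Your gesture at an improvement ``selling $\overline{b}$ to a top interval of types'' is the right idea, but you should actually carry it out: if $\overline{b}\notin B$ and $b_m$ is the largest element of $B$, add the option $\big(\overline{b},\, p(b_m)+v(\overline{b},\overline{t})-v(b_m,\overline{t})-\delta\big)$ for small $\delta>0$. Because the incremental value $v(\overline{b},t)-v(b,t)$ is strictly increasing whenever positive for every $b\subsetneq\overline{b}$, the set of types taking the new option is a top interval $[t^\dagger_\delta,\overline{t}]$ with $t^\dagger_\delta\uparrow\overline{t}$ as $\delta\downarrow 0$, and for $\delta$ small these types were previously buying $b_m$. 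Each switcher's profit contribution rises by $\big(v(\overline{b},\overline{t})-v(b_m,\overline{t})-\delta\big)-\big(C(\overline{b})-C(b_m)\big)$, which is strictly positive for small $\delta$ by the efficiency assumption, and the mass of switchers is positive, so the original mechanism was not optimal --- contradiction. (Leaning on ``the constructive proof of \Cref{thm:po}'' as you suggest is not quite an out, because \Cref{thm:po} assumes the nesting condition, which \Cref{cor:minmax} does not.) With this lemma in place, your argument for the remainder of part (ii) goes through.
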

In \Cref{cor:minmax}, the term ``least-selling'' also refers to sales volumes when bundles are sold alone. Under the optimal menu of bundles, the grand bundle $\overline{b}$ may or may not be bought by the smallest fraction of consumers. The best-selling bundle $b^*$ and the grand bundle $\overline{b}$ are the two ``extremal'' bundles with respect to our partial order. \Cref{cor:minmax} says that if nested bundling is optimal, then the minimal optimal menu must include these two ``extremal'' bundles and must exclude any bundles dominated by either of the two ``extremal'' bundles.

\Cref{thm:char} can be used to provide sufficient conditions for nested bundling to be strictly suboptimal. For example, a consequence of  \Cref{cor:minmax} is the following result:
\begin{cor}\label{cor:sub}
Suppose that there are two items and that bundle $\{2\}$ is the best-selling bundle. If the optimal profit under menu $\big\{\{2\}, \{1, 2\}\big\}$ is strictly less than the optimal profit under menu $\big\{\{1\}, \{1, 2\}\big\}$, then nested bundling is strictly suboptimal.
\end{cor}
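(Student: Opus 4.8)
The plan is to derive this as a quick consequence of \Cref{cor:minmax}. Suppose, toward a contradiction, that nested bundling is optimal. Since $\mathcal{B}$ is finite, the (nonempty) collection of optimal nested menus has an inclusion-minimal element $B$; because every subset of a nested menu is nested, such a $B$ is in fact minimally optimal. Apply \Cref{cor:minmax}: the menu $B$ must contain the best-selling bundle $b^* = \{2\}$ as its smallest non-empty element, and it must contain the grand bundle $\overline{b} = \{1,2\}$.

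Next I would use the two-item structure to pin $B$ down exactly. Since $\{2\}$ is the smallest non-empty bundle of $B$, every non-empty $b \in B$ satisfies $\{2\} \subseteq b$; with only two items the sole bundles containing $\{2\}$ are $\{2\}$ and $\{1,2\}$, so $B \subseteq \big\{\emptyset, \{2\}, \{1,2\}\big\}$. Combined with $\{2\}, \{1,2\} \in B$, this forces $B = \big\{\{2\}, \{1,2\}\big\}$. Hence selling $\big\{\{2\}, \{1,2\}\big\}$ is optimal, so the optimal profit under this menu equals the unconstrained optimal profit, and in particular is weakly greater than the optimal profit under $\big\{\{1\}, \{1,2\}\big\}$. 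This contradicts the hypothesis that the former is strictly less than the latter, so no nested menu is optimal; that is, nested bundling is strictly suboptimal.

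The only point needing (routine) care is the reduction to a minimally optimal nested menu together with ruling out degenerate candidates such as the empty menu or the singletons $\big\{\{1\}\big\}$ and $\big\{\{1,2\}\big\}$. This is harmless: selling nothing yields zero profit, whereas selling $\{1,2\}$ alone at the monopoly price yields $\pi(\{1,2\}, D^*(\{1,2\})) > \pi(\{1,2\}, 0) = 0$ since $D^*(\{1,2\}) \in (0,1)$ strictly maximizes the strictly quasi-concave $\pi(\{1,2\}, \cdot)$, so the empty menu is never optimal; and any optimal singleton menu would itself be minimally optimal and therefore excluded by \Cref{cor:minmax} since it omits $\{2\}$ or $\{1,2\}$. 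In other words, the argument above already shows that $\big\{\{2\}, \{1,2\}\big\}$ is the unique candidate for a minimally optimal nested menu, so there is nothing further to check.
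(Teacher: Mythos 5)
Your proof is correct and follows the route the paper itself indicates: the paper presents \Cref{cor:sub} as a direct consequence of \Cref{cor:minmax} without spelling out the details, and you supply exactly those details (reduce to an inclusion-minimal optimal nested menu, observe that such a menu is minimally optimal because every subset of a nested menu is nested, then apply \Cref{cor:minmax} and the two-item structure to force $B = \big\{\{2\},\{1,2\}\big\}$). The final paragraph on degenerate menus is redundant given the main argument but does no harm.
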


\begin{rmk}\label{rmk:pb}
The second statement of \Cref{thm:char} is a partial converse because (i) it requires the menu $B$ to be minimally optimal and (ii) it does not show $b_1 \subset b_2$ in the second part. However, in the case where $B = \{\overline{b}\}$ (the menu consists only of the grand bundle), neither (i) nor (ii) has bite, and hence \Cref{thm:char} becomes an ``if-and-only-if'' characterization of pure bundling. Thus, our result nests the full characterization of pure bundling by \citet{ghili2021characterization} as a special case.
\end{rmk}

\subsection{An Example}\label{subsec:example}

\begin{ex}\label{ex:two-item}
Suppose that there are two items $\{1, 2\}$ and zero costs. The valuations for each bundle are given by:
\[v(\{1\}, t) = t^{\alpha},\quad v(\{2\}, t) = t^{\beta},\quad  v(\{1, 2\}, t) = t^{\alpha} + t^{\beta} + t^{\gamma}\,.\]
Types $t$ follow a uniform distribution on $[0, 2]$. We fix parameter $\alpha = 1$ and consider two cases: (i) $\gamma = 0.5$ and (ii) $\gamma = 4.5$. In both cases, we vary parameter $\beta$ from $0$ to $2$.\footnote{Note that types $t < 1$ and types $t > 1$ have different ordinal rankings for items $1$ and $2$ whenever $\beta \neq 1$.}

\begin{figure}[t]
    \centering
\begin{subfigure}[b]{0.48\linewidth}
\centering
        \includegraphics[scale=0.3]{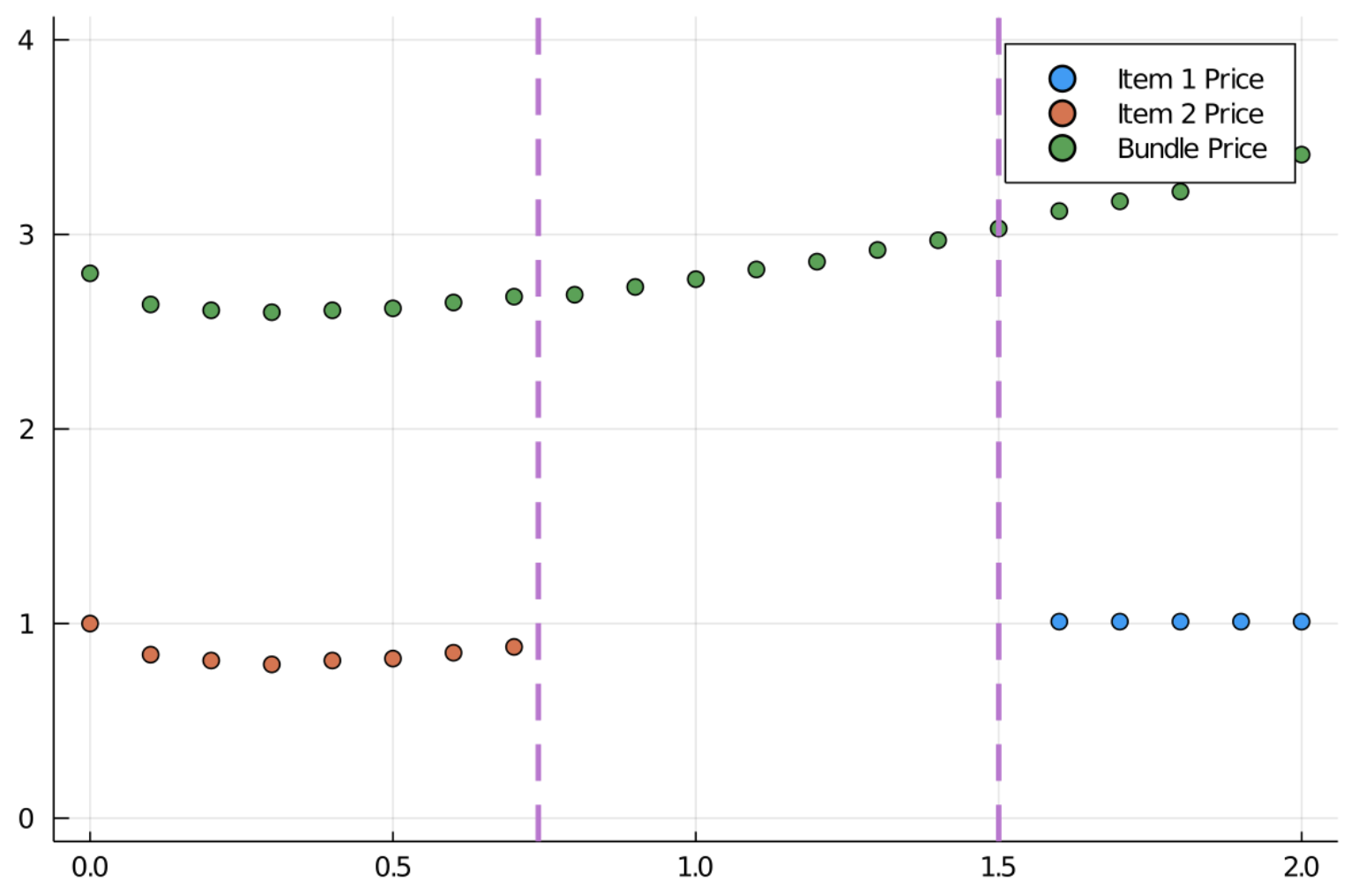}
    \caption{Optimal mechanisms vs. $\beta$ when $\gamma = 0.5$\label{fig:lowgammaP}}
\end{subfigure}
\hspace{0.02\linewidth}
\begin{subfigure}[b]{0.48\linewidth}
    \centering
        \includegraphics[scale=0.3]{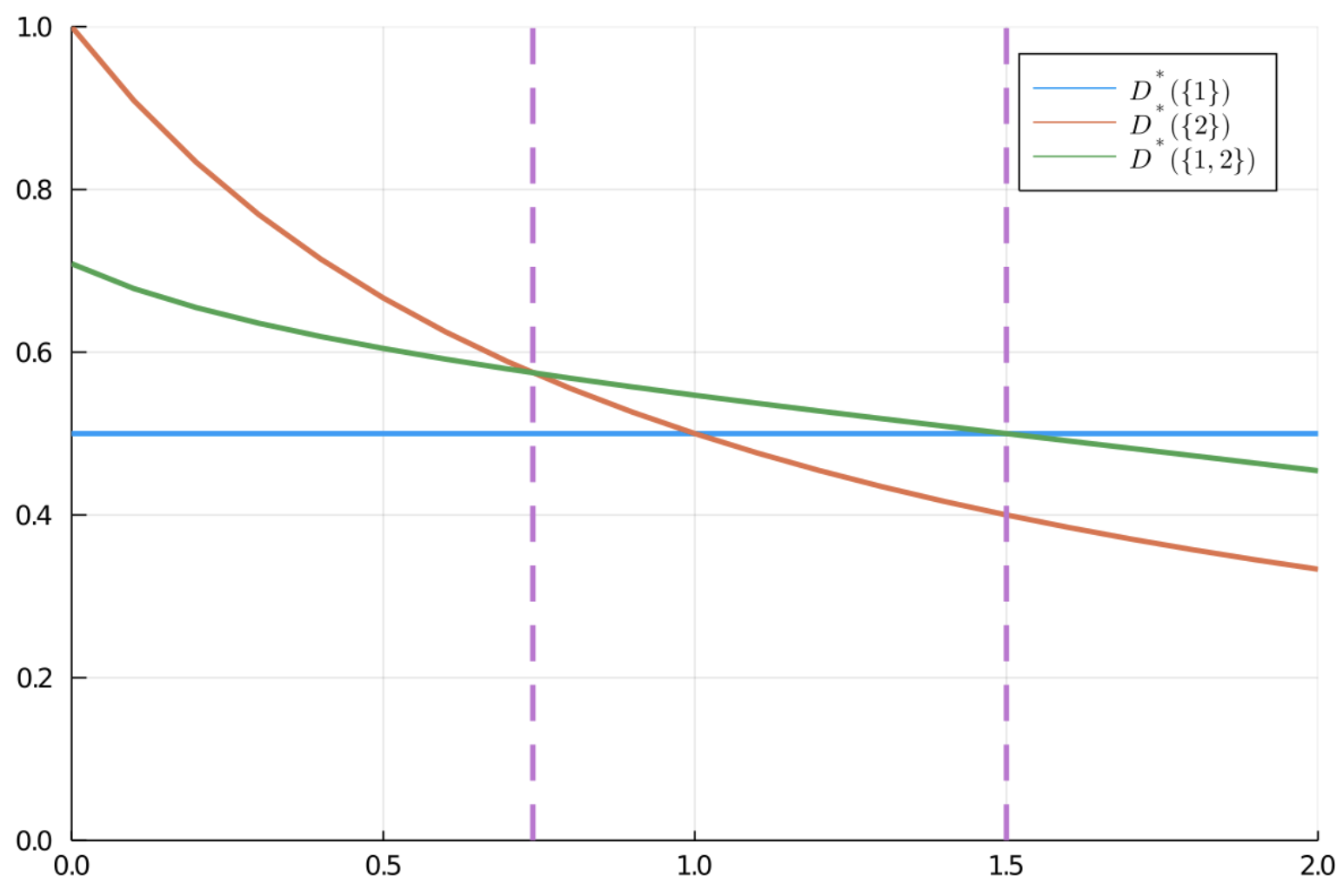}
       \caption{Sales volumes $D^*(b)$ vs. $\beta$ when $\gamma = 0.5$\label{fig:lowgammaD}}
\end{subfigure}
\begin{subfigure}[b]{0.48\linewidth}
\centering
        \includegraphics[scale=0.3]{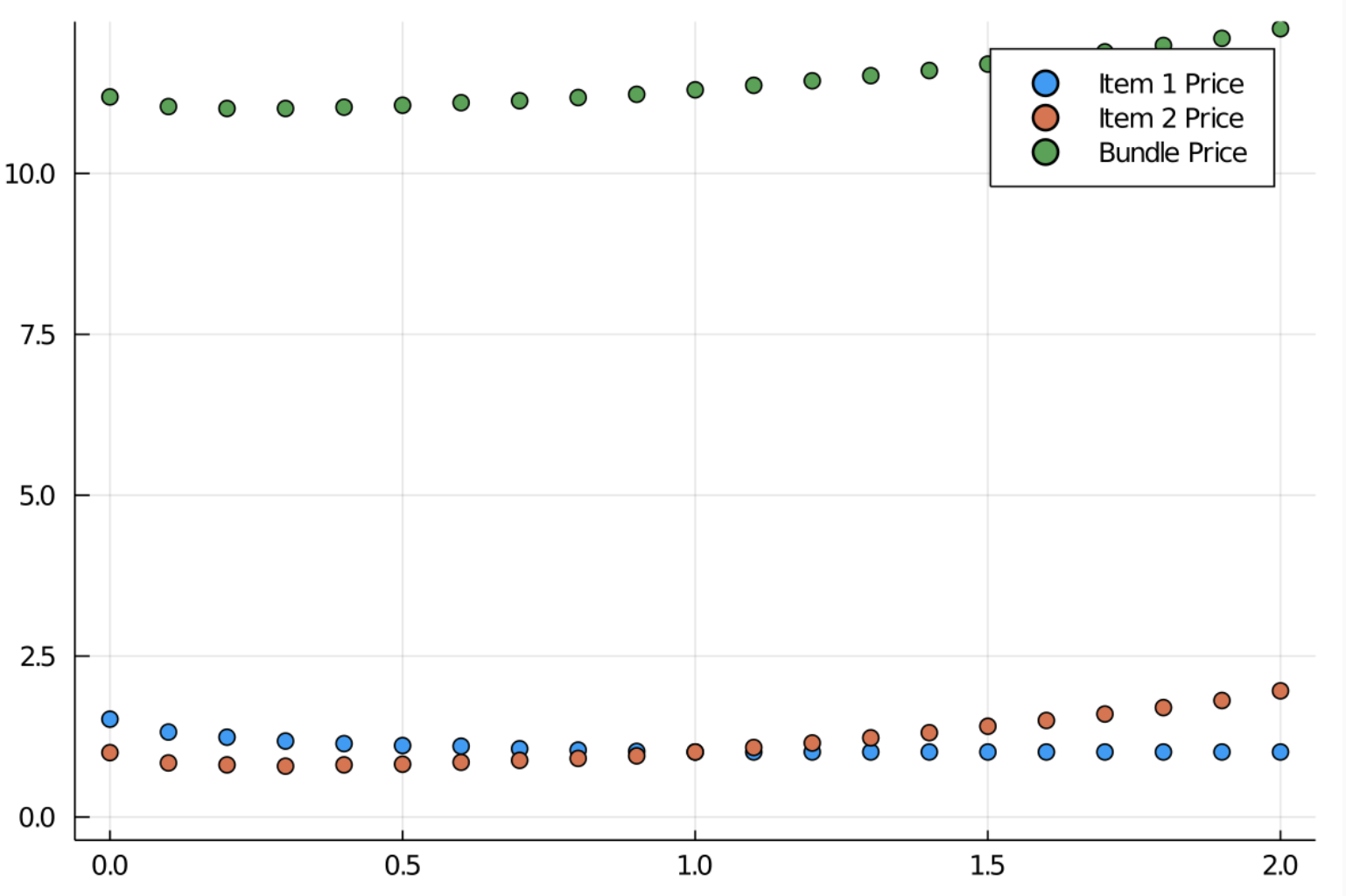}
    \caption{Optimal mechanisms vs. $\beta$ when $\gamma = 4.5$\label{fig:highgammaP}}
\end{subfigure}
\hspace{0.02\linewidth}
\begin{subfigure}[b]{0.48\linewidth}
    \centering
        \includegraphics[scale=0.3]{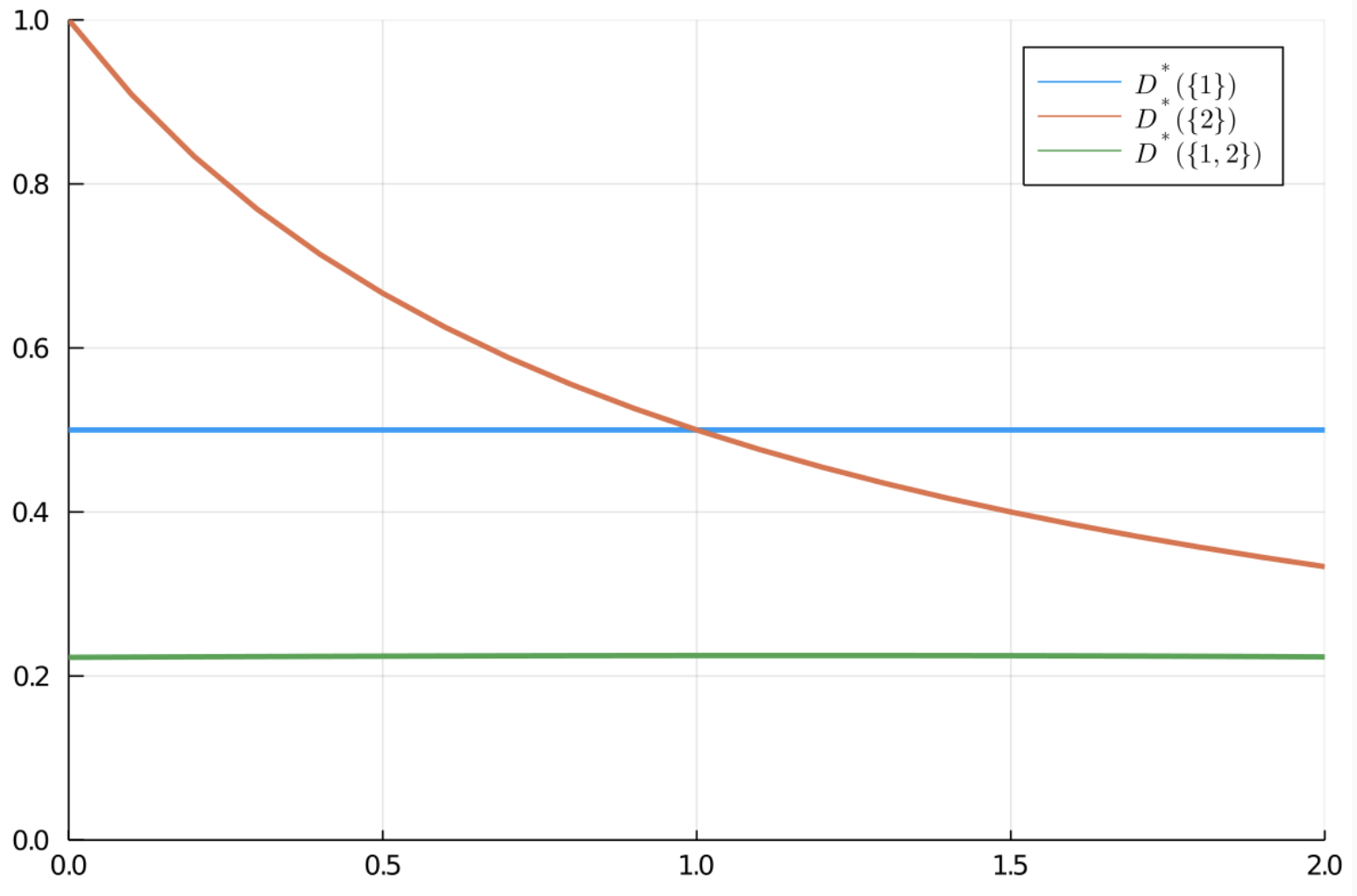}
       \caption{Sales volumes $D^*(b)$ vs. $\beta$ when $\gamma = 4.5$\label{fig:highgammaD}}
\end{subfigure}
\caption{Optimal mechanisms and sales volumes $D^*(b)$ as $\beta$ varies for \Cref{ex:two-item}\label{fig:Ex1}}
\end{figure}

First, consider the case of $\gamma = 0.5$. \Cref{fig:lowgammaP} plots the numerically computed optimal mechanism in terms of prices, as parameter $\beta$ varies in $0.1$ increments. As \Cref{fig:lowgammaP} shows, the optimal mechanism takes different forms as parameter $\beta$ varies. Specifically, the optimal menu is given by: 
\begin{itemize}
\item $\big\{\{2\}, \{1, 2\}\big\}$ when $\beta \in [0, 0.74)$\;;
\item $\big\{\{1, 2\}\big\}$ when $\beta \in [0.74, 1.5]$\;;
\item $\big\{\{1\}, \{1, 2\}\big\}$ when $\beta \in (1.5, 2]$\;.
\end{itemize}
The critical parameter values $\beta = 0.74$ and $\beta = 1.5$
are highlighted by the two vertical dashed lines in \Cref{fig:lowgammaP}. These transitions are characterized by our results. One can verify the union elasticity condition holds for all values of parameter $\beta$.  \Cref{fig:lowgammaD} plots the sales volumes $D^*(b)$ for each of the three bundles as parameter $\beta$ varies. As \Cref{fig:lowgammaD} shows, the undominated bundles are nested for all values of parameter $\beta$. Specifically, the plot can be partitioned into three regions $[0, 0.74)$, $[0.74, 1.5]$, and $(1.5, 2]$. The set of undominated bundles is $\big\{\{2\}, \{1, 2\}\big\}$ in the first region, $\big\{\{1, 2\}\big\}$ in the second region, and $\big\{\{1\}, \{1, 2\}\big\}$ in the third region, coinciding with the optimal menu.

Now, consider the case of $\gamma = 4.5$. In contrast to the previous case, as \Cref{fig:highgammaP} shows, nested bundling is never optimal except at the degenerate parameter value $\beta = 1$ where the two items are indistinguishable. As our results show, this is only possible if the undominated bundles are not nested for any value of parameter $\beta$, which can be seen from \Cref{fig:highgammaD}. 
\end{ex}

\subsection{Discussion of Intuition for \texorpdfstring{\Cref{thm:po}}{} and \texorpdfstring{\Cref{thm:char}}{}
}\label{subsec:intuition}

\subsubsection{Intuition Based on Marginal Revenue Curves}\label{subsubsec:MR}
In this section, we provide the key intuition behind our results, building on the earlier discussion in the introduction. 

Suppose that there are two items $\{1, 2\}$ and zero costs. Consider again the case of $D^*(\{1\}) < D^*(\{1, 2\}) < D^*(\{2\})$. In this case, bundle $\{1\}$ is dominated by bundle $\{1, 2\}$, while bundle $\{2\}$ is not dominated by bundle $\{1, 2\}$. Suppose that the revenue function for any incremental bundle is strictly quasi-concave.\footnote{For the sake of this example we assume that the incremental profit functions are globally quasi-concave. As discussed in \Cref{subsec:discussion}, we need a weaker local quasi-concavity assumption.} This implies that the MR curve of a bigger bundle crosses the MR curve of a smaller bundle at most once from above. \Cref{fig:MR3} illustrates the MR curves under this case as in the introduction.

There are three key observations. First, because of the ordering $D^*(\{1\}) < D^*(\{1, 2\})$, these two quantities must be located in the region where the marginal revenue of upgrading consumers from bundle $\{1\}$ to bundle $\{1, 2\}$ is \textit{positive}. This then implies that if it is profitable to sell a consumer the smaller bundle $\{1\}$, it is even more profitable to \textit{upgrade} the consumer to the bigger bundle $\{1, 2\}$. 

Second, because of the opposite ordering $D^*(\{2\}) > D^*(\{1, 2\})$, these two quantities must be located in the region where the marginal revenue of upgrading consumers from bundle $\{2\}$ to bundle $\{1, 2\}$ is \textit{negative}. This then implies that, after a certain quantity threshold, it is always more profitable to \textit{downgrade} a consumer from the bigger bundle $\{1, 2\}$ to the smaller bundle $\{2\}$. 

Third, with these two operations, we can attain the \textit{upper envelope} of the MR curves by allocating bundles to consumers in a \textit{monotone} fashion such that a higher type consumer receives a bigger bundle in the set-inclusion order. As explained in the introduction, this step is the key to guaranteeing that we can in fact ``climb up'' the MR curves. In the case of two items, this step is self-evident once we recognize that the configuration of MR curves must resemble \Cref{fig:MR3} (as shown in the introduction). However, in the general case, it is impossible to exhaustively list all possible configurations of the MR curves. We will shortly discuss the key intuition behind why our nesting condition is sufficient in the general case. 

Before that, let us consider a case where nested bundling is strictly suboptimal (see \Cref{cor:sub}). By the above arguments, this must be the case where all three bundles are undominated. Without loss of generality, suppose $D^*(\{1, 2\}) < D^*(\{1\}) < D^*(\{2\})$. Now, suppose further that the revenue under menu $\big\{\{2\}, \{1, 2\}\big\}$ is less than the revenue under menu $\big\{\{1\}, \{1, 2\}\big\}$. \Cref{fig:MR4} illustrates the MR curves under this case. Note that, in this case, the upper envelope of the MR curves cannot be attained by a nested menu, so we cannot use the argument of ``climbing up'' the MR curves to find the optimal mechanism. 

There are two opposing forces in this case. On the one hand, it is more profitable to attract the ``medium-type'' consumers than attract the ``low-type'' consumers since the marginal revenue of selling bundle $\{1\}$ to the ``medium-type'' consumers is high enough. On the other hand, it is always possible to attract a small fraction of  the ``low-type'' consumers using bundle $\{2\}$ which can bring in a positive marginal revenue. It turns out that the second force always wins if the monopolist can ration and sell bundle $\{2\}$ with a small probability $\epsilon$. This is because, roughly speaking, the gain from expanding the market this way is on the order of $O(\epsilon)$, whereas the loss from the consumers who no longer purchase bundle $\{1\}$ is on the order of $O(\epsilon^2)$. Intuitively, the reason why the loss is on the higher order is that before introducing bundle $\{2\}$, the monopolist would have already \textit{optimized} the prices for the menu $\big\{\{1\}, \{1, 2\}\big\}$, and hence suffers only a second-order loss for a small perturbation. Thus, nested bundling is strictly suboptimal in this case. 

\begin{figure}
    \centering
\hspace{-0.07\linewidth}
\begin{subfigure}[b]{0.48\linewidth} 
\centering
        \includegraphics[scale=0.62]{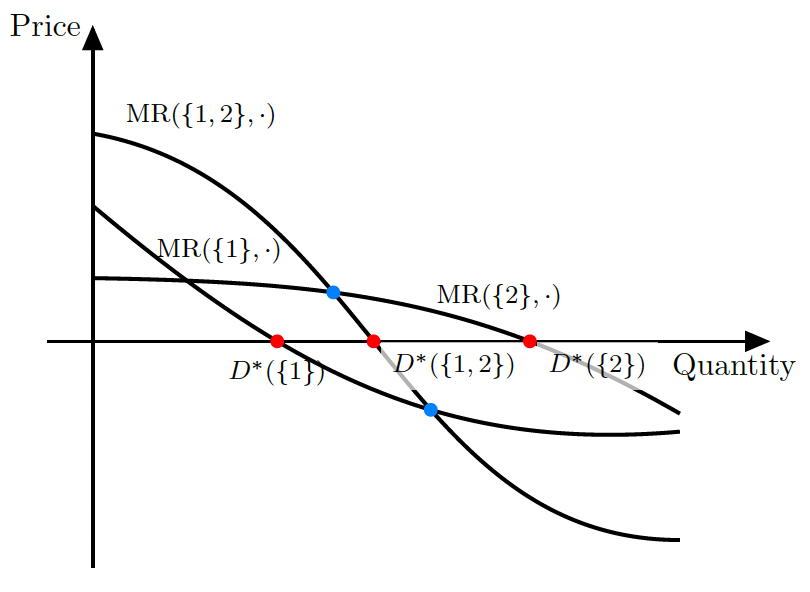}    \caption{MR curves under  $\{1\} \preceq \{1, 2\}$, $\{2\} \not\preceq \{1, 2\}$\label{fig:MR3}}
\end{subfigure}
\hspace{0.04\linewidth}
\begin{subfigure}[b]{0.48\linewidth}
    \centering
        \includegraphics[scale=0.63]{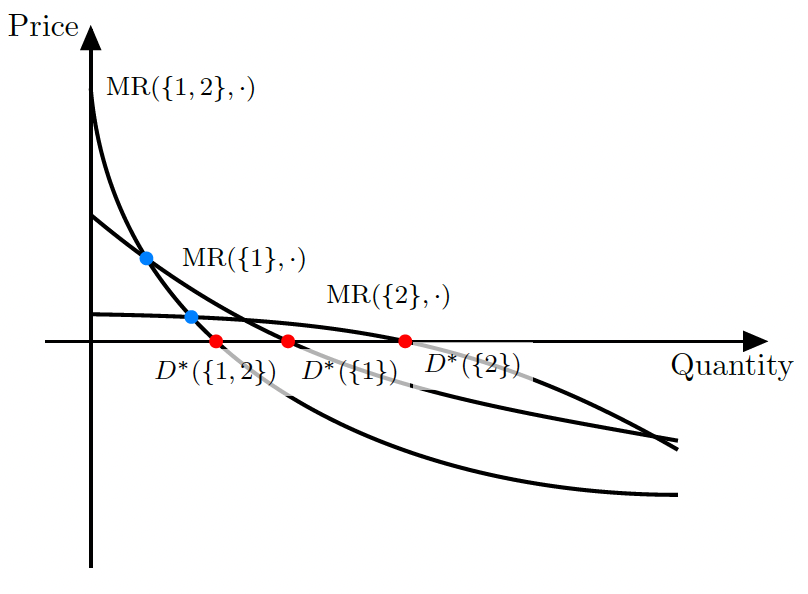}
 \caption{MR curves under $\{1\} \not\preceq \{1, 2\}$, $\{2\} \not\preceq \{1, 2\}$\label{fig:MR4}}
\end{subfigure}
       \caption{Further illustration of the marginal revenue curves}
\end{figure}

\paragraph{Beyond Two-item Cases: Nesting Condition.}\hspace{-2mm}We now explain the key insight that helps understand our results beyond the two-item cases. The intuition as discussed earlier still holds when there are more than two items, but we may run into issues with both the upgrade and downgrade improvements, because these improvements may not be implementable in the price space. To illustrate, suppose that there are three items and that bundle $\{1\}$ is dominated by bundle $\{1, 2\}$. Suppose that we are given an initial allocation rule in the quantity space as depicted in \Cref{fig:s0}. By the previous discussion, we know that if we can upgrade the consumers who are currently consuming bundle $\{1\}$ to bundle $\{1, 2\}$, then we would achieve an improvement. However, this upgrade may not be feasible, because there may not be prices that can support this change in allocations, given that there are higher types who are currently purchasing bundle $\{2, 3\}$, as depicted in \Cref{fig:s1} (highlighted by the double-headed arrow). This is the key difference between our bundling problem and the standard one-dimensional screening problem --- the set of implementable allocation rules is both much richer and much more complex.\footnote{Implementability in multidimensional settings is characterized by \textit{cyclic monotonicity} (see \citealt{rochet1987necessary}) which is much more complex than standard monotonicity conditions.}

\begin{figure}[t]
    \centering
\begin{subfigure}[b]{0.48\linewidth}
\centering
        \includegraphics[scale=0.35]{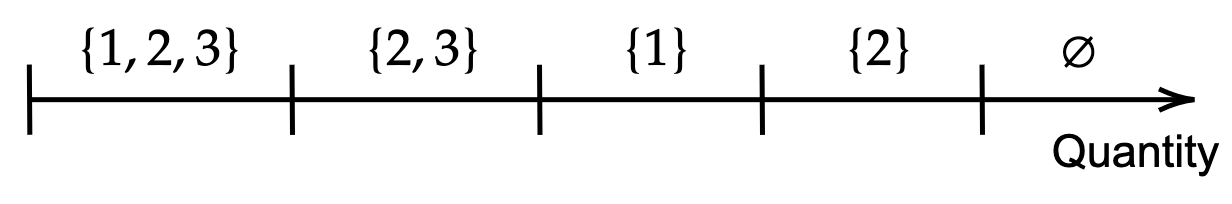}
    \caption{\label{fig:s0}}
\end{subfigure}
\hspace{0.02\linewidth}
\begin{subfigure}[b]{0.48\linewidth}
    \centering
        \includegraphics[scale=0.35]{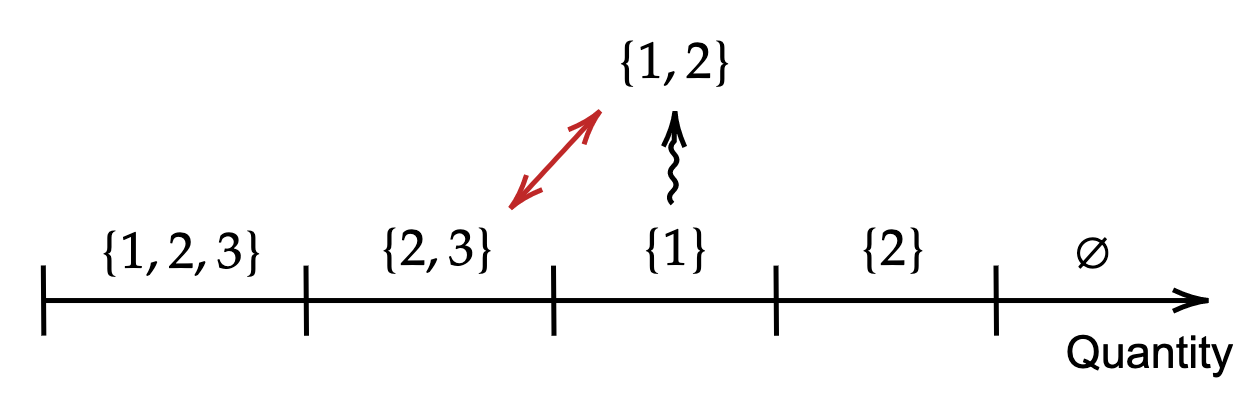}
       \caption{\label{fig:s1}}
\end{subfigure}
\begin{subfigure}[b]{0.48\linewidth}
\centering
        \includegraphics[scale=0.35]{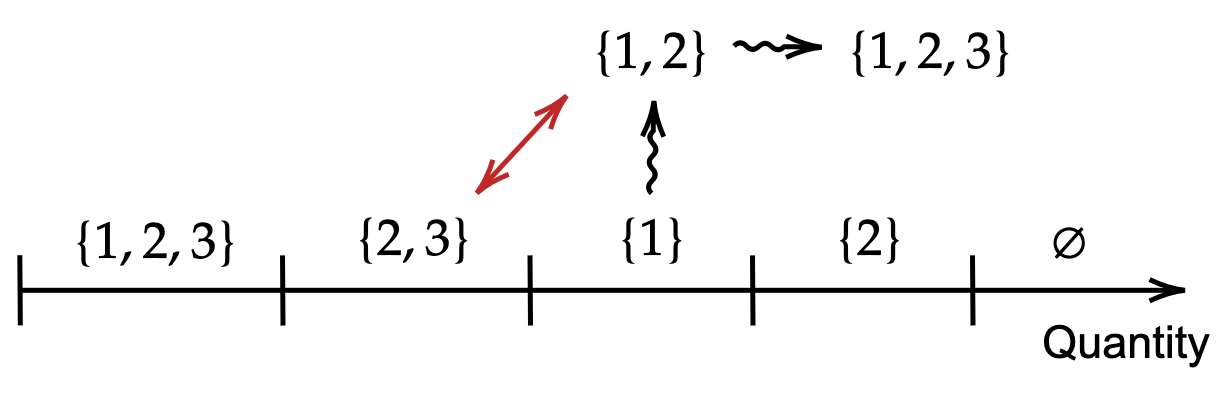}
    \caption{\label{fig:s2}}
\end{subfigure}
\hspace{0.02\linewidth}
\begin{subfigure}[b]{0.48\linewidth}
    \centering
        \includegraphics[scale=0.35]{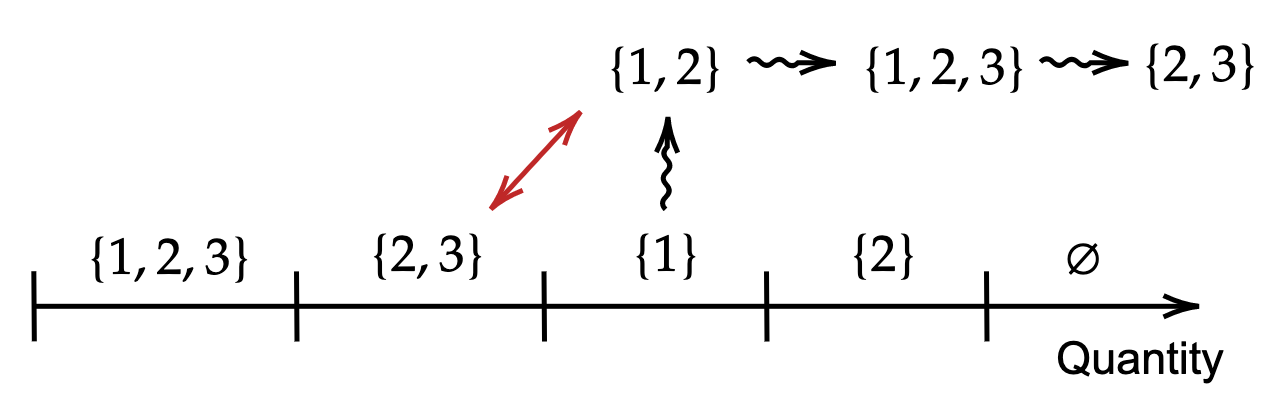}
       \caption{\label{fig:s3}}
\end{subfigure}
\caption{Illustration of the improvement argument for a three-item example}
\end{figure}

The key insight that resolves this problem is the following: Such a potential conflict arises when bundles $b$ and $b'$ are \textit{not} nested, but then the nesting condition implies that at least one of them must be \textit{dominated} (recall that the nesting condition requires the undominated bundles to be nested). In our example, this means that either bundle $\{1, 2\}$ or bundle $\{2, 3\}$ must be dominated. This gives us a way out because we can apply this argument again by going one layer up and further upgrading the consumers from either $b$ or $b'$ to the bundle that dominates one of them. Repeating this process would always result in a pair of nested bundles. 

For our running example, suppose that bundle $\{1, 2\}$ is dominated by bundle $\{1, 2, 3\}$ and bundle $\{2, 3\}$ is undominated, as depicted in \Cref{fig:s2}; so the process in this example terminates in one iteration. Of course, the resulting pair of bundles can be in the ``wrong'' order in the sense that the higher types are assigned the smaller bundle, which we know cannot be implemented by prices. However, when that happens, since the smaller bundle is undominated, we know that if it is ever profitable to downgrade from the bigger bundle to the smaller bundle at some quantity then it is always profitable to downgrade after that quantity. For our running example, suppose that we can further profitably downgrade bundle $\{1, 2, 3\}$ to bundle $\{2, 3\}$, as depicted in \Cref{fig:s3}. The allocation rule is now monotone and hence implementable. Moreover, it is more profitable than the initial allocation rule by construction. Because these arguments can be applied to any initial allocation rule, the upper envelope of the MR curves, under the nesting condition, must be attained by an implementable allocation rule. 

\begin{rmk}
As the above discussion shows, our nesting condition is essential in two ways: (i) it facilities the comparison of marginal revenues, and (ii) it provides a way out from complex implementability constraints by guiding us toward an even more profitable allocation rule that we know is implementable. The actual proof follows these intuitions closely. In addition, the proof considers stochastic mechanisms and shows that, under the nesting condition, randomization cannot increase profit. 
\end{rmk}

\subsubsection{Alternative Intuition Based on Price Elasticities}\label{subsubsec:eta}

In this section, we provide an alternative, price-theoretical intuition based on price elasticities (recall from \Cref{rmk:eta} that the nesting condition is implied by suitable conditions on price elasticities). Suppose that there are two items and zero costs, and that $\eta(\{2\}, q) < \eta(\{1, 2\}, q) <  \eta(\{1\}, q)$ for all quantities $q$. That is, bundle $\{2\}$ has a pointwise more elastic demand curve than bundle $\{1, 2\}$, which in turn has a pointwise more elastic demand curve than bundle $\{1\}$.
This assumption implies that $D^*(\{2\}) > D^*(\{1, 2\}) >  D^*(\{1\})$, but it is much stronger than our union elasticity condition stated in  \Cref{rmk:eta}. 

Let $ \eta(\{1, 2\}, q \mid \{1\})$ be the price elasticity of the demand curve for the \textit{incremental values} for bundle $\{1, 2\}$ given bundle $\{1\}$. We can write 
\[ \eta(\{1, 2\}, q) = \frac{P(\{1, 2\}, q)}{q\cdot\frac{\d}{\d q}P(\{1, 2\}, q)} = \frac{\big(P(\{1,2\}, q) - P(\{1\}, q)\big) + P(\{1\}, q) }{q\cdot\frac{\d}{\d q}\big(P(\{1,2\}, q) - P(\{1\}, q)\big) + q\cdot\frac{\d}{\d q}P(\{1\}, q)}\,.\]
By the mediant inequality,\footnote{That is, for any $\frac{a}{c} < \frac{b}{d}$ such that $c \cdot d > 0$, we have $\frac{a}{c} < \frac{a+b}{c+d} < \frac{b}{d}$.} this implies that
\[  \eta(\{1, 2\}, q \mid \{1\}) < \eta(\{1, 2\}, q) < \eta(\{1\}, q)\,.\]
That is, the demand curve for upgrading from bundle $\{1\}$ to bundle $\{1, 2\}$ is even more \textit{elastic}. In particular, it is profitable to charge a \textit{low} enough price to sell the upgrade option to all consumers in the elastic region of the demand curve for bundle $\{1\}$. Thus, it is profitable to \textit{exclude} bundle $\{1\}$, which is a dominated bundle, as an option from the menu. 

Symmetrically, for bundle $\{2\}$, we have 
\[  \eta(\{1, 2\}, q \mid \{2\}) > \eta(\{1, 2\}, q) > \eta(\{2\}, q)\,.\]
That is, the demand curve for upgrading from bundle $\{2\}$ to bundle $\{1, 2\}$ is even more \textit{inelastic}. In particular, it is profitable to charge a \textit{high} enough price that leaves at least some consumers unserved in terms of the upgrade option. Thus, it is profitable to \textit{include} bundle $\{2\}$, which is an undominated bundle, as an option in the menu. 

\begin{rmk}
Note however that, even in this two-item case, these price-theoretical arguments are incomplete. This is because they do not take into account the presence of the other item in the market when pricing the upgrade from one item to the bundle. In addition, the arguments assume that the elasticities can be pointwise ranked, which is much stronger than our nesting condition. The actual \textit{joint} pricing problem is much more complex and cannot be simply reduced to two separate pricing problems. We emphasize that it is a \textit{consequence} of our results that, under the nesting condition, one can pairwise compare the bundles. As explained in \Cref{subsubsec:MR}, the proof deals with the joint pricing problem by working in the quantity space rather than in the price space. 
\end{rmk}

\section{Proof Sketch for the Main Result}\label{sec:sketch}

In this section, we sketch the proofs for \Cref{thm:po} and \Cref{thm:char}. Following \citet{Myerson1981}, let 
\[\phi(b, t) := v(b, t) - C(b) - \frac{1 - F(t)}{f(t)} v_t(b, t)\]
be the \textit{virtual surplus} function for bundle $b$. As observed by  \citet{bulow1989simple}, this can be interpreted as the \textit{marginal profit} for bundle $b$ evaluated at the quantity such that the marginal consumer is of type $t$. 

A key difference between our problem and the standard one-dimensional mechanism design problem is that we do not have access to a simple characterization of implementable allocation rules. However, the following revenue equivalence theorem continues to hold in our setting: For every implementable allocation rule $a: T \rightarrow \Delta(\mathcal{B})$, the expected profit to the seller is given by 
\[\E\Big [\sum_{b\in\mathcal{B}} a_b(t) \phi(b, t) \Big]\,.\]
The proofs for both \Cref{thm:po} and \Cref{thm:char} involve (i) a series of observations about the collection of functions $\big\{\phi(b, t)\big\}_{b\in \mathcal{B}}$ and (ii) addressing the implementability issues. In the proof of \Cref{thm:po}, we handle implementability by verifying it ex post. In the proof of \Cref{thm:char}, we handle implementability by explicitly constructing an improvement. 

\subsection{Proof Sketch for \texorpdfstring{\Cref{thm:po}}{}}\label{subsec:sketch-po}

The proof sketch involves three steps. In the first step, we establish a key claim about the crossings of $\phi(b_1, t)$ and $\phi(b_2, t)$ for any pair of nested bundles $b_1 \subset b_2$. For any $b_1\subset b_2$, we define the \textit{last crossing type} of the virtual surplus functions for $b_1$ and $b_2$ as 
\[s(b_1, b_2) := \inf_{s}\Big \{s \in T: \phi(b_2, t) > \phi(b_1, t) \text{ for all $t > s$}\Big\}\,.\]
We also define the \textit{last crossing virtual surplus} of $b_1$ and $b_2$ as
\[\chi(b_1, b_2) := \phi(b_1, s(b_1, b_2))\,.\]
The key claim is that for every $b_1 \subset b_2$, we have
\[\sign\Big[\chi(b_1, b_2)\Big]  = \sign\Big[D^*(b_1) - D^*(b_2) \Big]\,.\tag{\textbf{Claim 1}} \label{claim:1}\]

To prove this claim, we first make two basic observations. First, we have 
\[D^*(b_1) \geq D^*(b_2) \iff t^*(b_1) \leq t^*(b_2)\]
where $t^*(b)$ is the cutoff type at which $\phi(b, t)$ single-crosses $0$ from below (which exists by our quasi-concavity assumption on the profit function). Second, we have that $\phi(b_2, t)$ single-crosses $\phi(b_1, t)$ from below on the interval $[\max(t^*(b_1), t^*(b_2)), \overline{t}]$. This is implied by our local quasi-concavity condition, which assumes that the incremental profit function is quasi-concave on the interval $[0, \min(D^*(b_1), D^*(b_2))]$.

Using these two observations, we can prove \ref{claim:1} by making a deeper geometric observation. To illustrate, consider \Cref{fig:crossing}, which depicts the \textit{relative} positions of $\phi(b_1, t)$ and $\phi(b_2, t)$. The key idea is to vary the \textit{horizontal axis} and observe that the sign of $\chi(b_1, b_2)$ is fully determined by the ordering of $t^*(b_1)$ and $t^*(b_2)$, given that $\phi(b_2, t)$ single-crosses $\phi(b_1, t)$ from below on the interval $[\max(t^*(b_1), t^*(b_2)), \overline{t}]$. 

\begin{figure}
    \centering
    \includegraphics[scale=0.54]{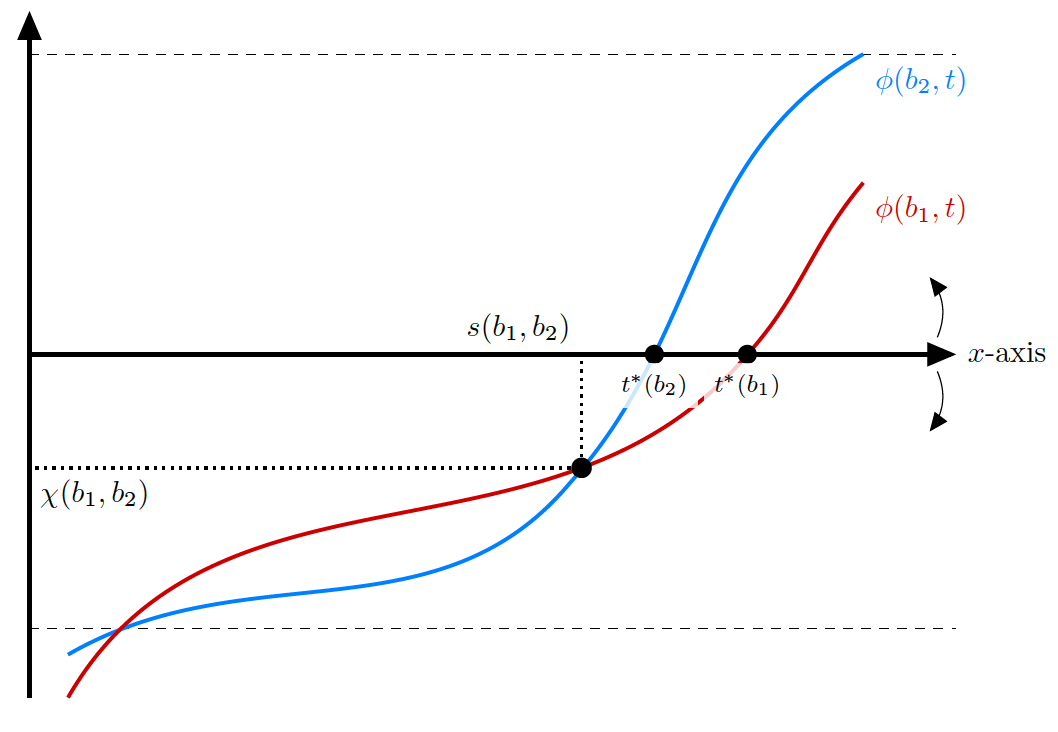}
    \caption{Illustration of possible crossings of $\phi$ when $b_1 \preceq b_2$}
    \label{fig:crossing}
\end{figure}

In the second step, we establish the following claim:  
\[\sup_{a:\, T \rightarrow \Delta(\mathcal{B})} \E\Big [\sum_b a_b(t) \phi(b, t) \Big] = \sup_{a:\, T \rightarrow U} \E\Big [\sum_b a_b(t) \phi(b, t) \Big]\,,\tag{\textbf{Claim 2}} \label{claim:2}\]
where $U$ is the set of undominated bundles (including $\emptyset$). Because the objective of the left-hand side is linear in probabilities, we only need to consider deterministic allocation rules $a: T \rightarrow \mathcal{B}$ for the left-hand side problem.

Now, to prove this claim, we exploit two key facts. First, because $(\mathcal{B}, \preceq)$ is a finite partially ordered set, every dominated bundle must be dominated by an \textit{undominated} bundle. Second, for any $b_1 \subset b_2$, if $\chi(b_1, b_2) \leq 0$, then we must have 
\[\phi(b_1, t) \leq \max\big\{0,\phi(b_2, t) \big\}\,,\]
because by definition $\chi(b_1, b_2) \leq 0$ means that the last time $\phi(b_2, t)$ crosses $\phi(b_1, t)$ from below happens when $\phi(b_1, t)$ is non-positive.

We then combine these two facts as follows. For every dominated bundle $b_1$, we can find an undominated bundle $b_2 \in U$ such that $b_1 \subset b_2$ and $D^*(b_1) \leq D^*(b_2)$, which then implies that $\chi(b_1, b_2) \leq 0$ by \ref{claim:1}. Thus, we have $\phi(b_1, t) \leq \max\big\{0,\phi(b_2, t)\big\}$. Since this argument holds for all dominated bundles, \ref{claim:2} follows.

\begin{figure}
    \centering
    \includegraphics[scale=0.52]{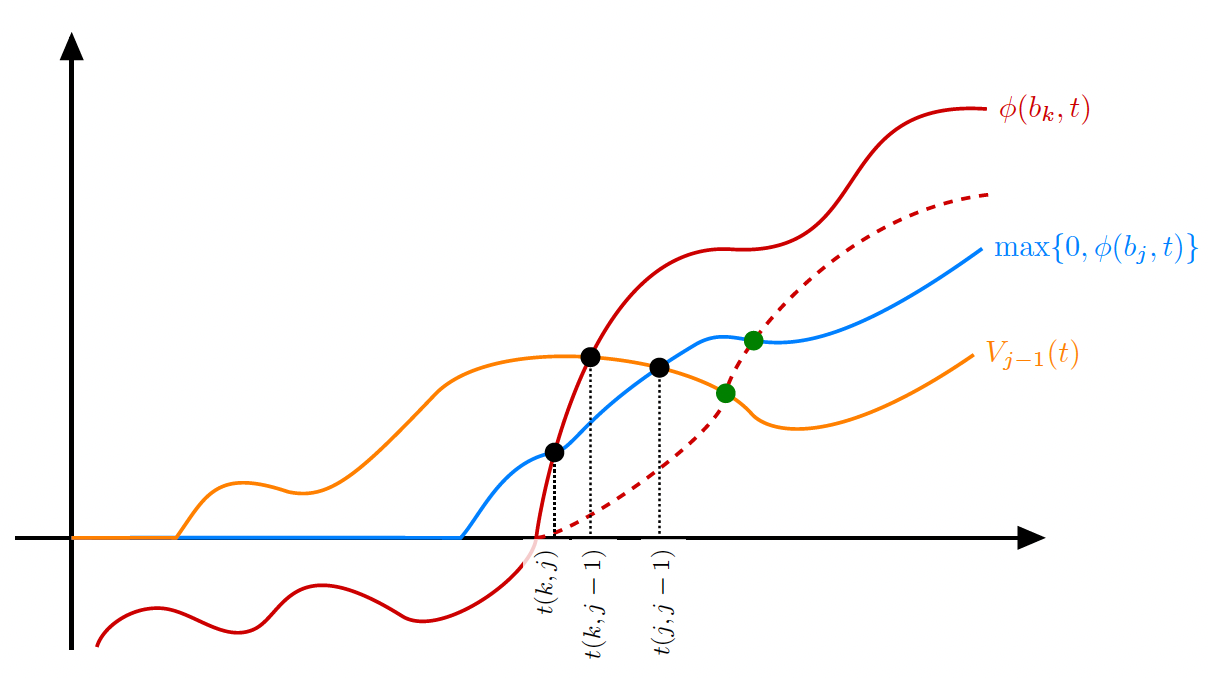}
    \caption{Illustration of possible crossings of $\phi$ in the iterative construction}
    \label{fig:envelope}
\end{figure}

In the third step, we define $a^{\dagger}$ to be the allocation rule induced by the menu $B\subseteq U$ and prices given by \Cref{alg:alg1} in \Cref{subsec:proof-po}. Our goal in this step is to prove the following claim:
\[a^{\dagger} \in  \argmax_{a: \, T \rightarrow U} \E\Big [\sum_b a_b(t) \phi(b, t) \Big]\,.\tag{\textbf{Claim 3}} \label{claim:3}\]
To do so, we use an induction argument. By assumption, the set of undominated bundles $U$ is a nested menu, so we can write $U = \{ \emptyset, b_1, \dots, b_m\}$ such that $b_1 \subset \cdots \subset b_m$. We then construct a sequence of monotone allocation rules $\big\{a^{(j)}\big\}_{j=1,\dots,m}$ using induction such that: (i) the allocation rule $a^{(j)}$ attains the envelope of the virtual surplus functions for the $j$ smallest non-empty bundles in $U$ (and the empty bundle)
\[\phi\big(a^{(j)}(t), t\big) =  \max\Big\{0, \phi(b_1, t), \dots, \phi(b_j, t)\Big\}\,,\]
and (ii) the final allocation rule $a^{(m)}$ coincides with the proposed allocation rule $a^\dagger$. 

The key to this argument is the following geometric observation, which we also prove by induction: For any $k > j$, the function $\phi(b_k, t)$ single-crosses the envelope $V_j(t):= \max\big\{0, \phi(b_1, t), \dots, \phi(b_j, t)\big\}$ from below. To illustrate, consider \Cref{fig:envelope}, which depicts three curves: $\phi(b_k, t)$, $V_{j-1}(t)$, and $\max\{0, \phi(b_j,t)\}$. Note that $V_j(t)$ can be \textit{decomposed} into the envelope of $V_{j-1}(t)$ and $\max\{0, \phi(b_j,t)\}$. By varying $\phi(b_k, t)$ and observing that these three curves single-cross each other, which holds by the inductive hypothesis (on $j-1$), we can establish this single-crossing property. This property allows us to ``paste'' the virtual surplus functions $\phi(b_j, t)$ together one at a time, from the smallest bundle $b_1$ to the largest bundle $b_m$, and trace out the successive envelopes in this process.

Now, recall that the proposed allocation rule $a^\dagger$ is implementable by construction. But then it must be optimal since we have from \ref{claim:2} and \ref{claim:3} that:
\[\E\Big [\sum_b a^\dagger_b(t) \phi(b, t) \Big] =  \sup_{a: \, T \rightarrow U} \E\Big [\sum_b a_b(t) \phi(b, t) \Big] =  \sup_{a:\, T \rightarrow \Delta(\mathcal{B})} \E\Big [\sum_b a_b(t) \phi(b, t) \Big] \,.\]
Thus, under the nesting condition, selling the set of undominated bundles is optimal, and hence nested bundling is optimal. Moreover, as the allocation rule $a^\dagger$ attains the envelope of the virtual surplus functions, the menu $B$ given by \Cref{alg:alg1} must be minimally optimal. The uniqueness part of \Cref{thm:po} follows from a strengthening of \ref{claim:2}.

\subsection{Proof Sketch for \texorpdfstring{\Cref{thm:char}}{}}\label{subsec:sketch-char}

The proof sketch for the first statement is similar to \Cref{subsec:sketch-po} and is hence omitted. The proof sketch for the second statement involves three steps, which we turn to next. 

Suppose that selling a nested menu $B$ (including $\emptyset$) is optimal and selling any proper subset of $B$ is not optimal. Suppose, for contradiction, that either 
\begin{itemize}
    \item[(i)] there exist $b_1 \in B$ and $b_2 \in B$ such that $b_1 \subset b_2$ and $D^*(b_1) \leq D^*(b_2)$, or 
    \item[(ii)] there exists $b_1 \not \in B$ such that $D^*(b_1) > D^*(b_2)$ for all non-empty $b_2 \in B$.
\end{itemize}

In the first step, we define 
\[U_B := \Big\{b \in B: \text{$b$ is not dominated by another bundle $b' \in B$}\Big\}\,.\]
Using similar arguments as in \ref{claim:1} and \ref{claim:2}, we show that there exists an optimal mechanism $(a, p)$ such that $a: T \rightarrow U_B$.

In the second step, we consider Case (i). In this case, the menu $U_B$ must be a strict subset of the menu $B$. Then, since $(a, p)$ is an optimal mechanism, we have that selling menu $U_B$ is also optimal, which contradicts the minimal optimality of menu $B$. Hence, this case cannot occur.

In the third step, we consider Case (ii). We construct a strict improvement in this case (which would then contradict the optimality of menu $B$). Specifically, we propose adding the following option to the menu given by $(a, p)$: A lottery offering bundle $b_1$ with probability $\epsilon$, for a price of $\epsilon \times v(b_1, t^*(b_1))$.

Let us compute the payoff to the monopolist after the addition of this new option. The profit change to the monopolist after adding this new option can be computed as
\[\E\Big[\sum_{b} a'_b(t) \phi(b, t)\Big] - \E\Big[\sum_{b} a_b(t) \phi(b, t)\Big]\]
where $a'$ is the induced allocation rule after the types re-adjust their optimal choices given the new option.

Let $\underline{b}$ be the smallest non-empty bundle in $B$. Note that, by the construction of $(a, p)$, the following holds:
\begin{itemize}
    \item[(i)] all types $t \in [\underline{t}, t^*(b_1))$ will not take this option\,;
    \item[(ii)] all types $t \in [t^*(b_1), t^*(\underline{b}))$ will switch from $\emptyset$ to this option\,;
    \item[(iii)] all types $t \in [t^*(\underline{b}), t^\dagger_\epsilon)$ will switch from $\underline{b}$ to this option, for some threshold $t^\dagger_\epsilon$ that depends on $\epsilon$ (the subscript emphasizes this dependence)\,.
\end{itemize}

\begin{figure}
    \centering
    \includegraphics[scale=0.5]{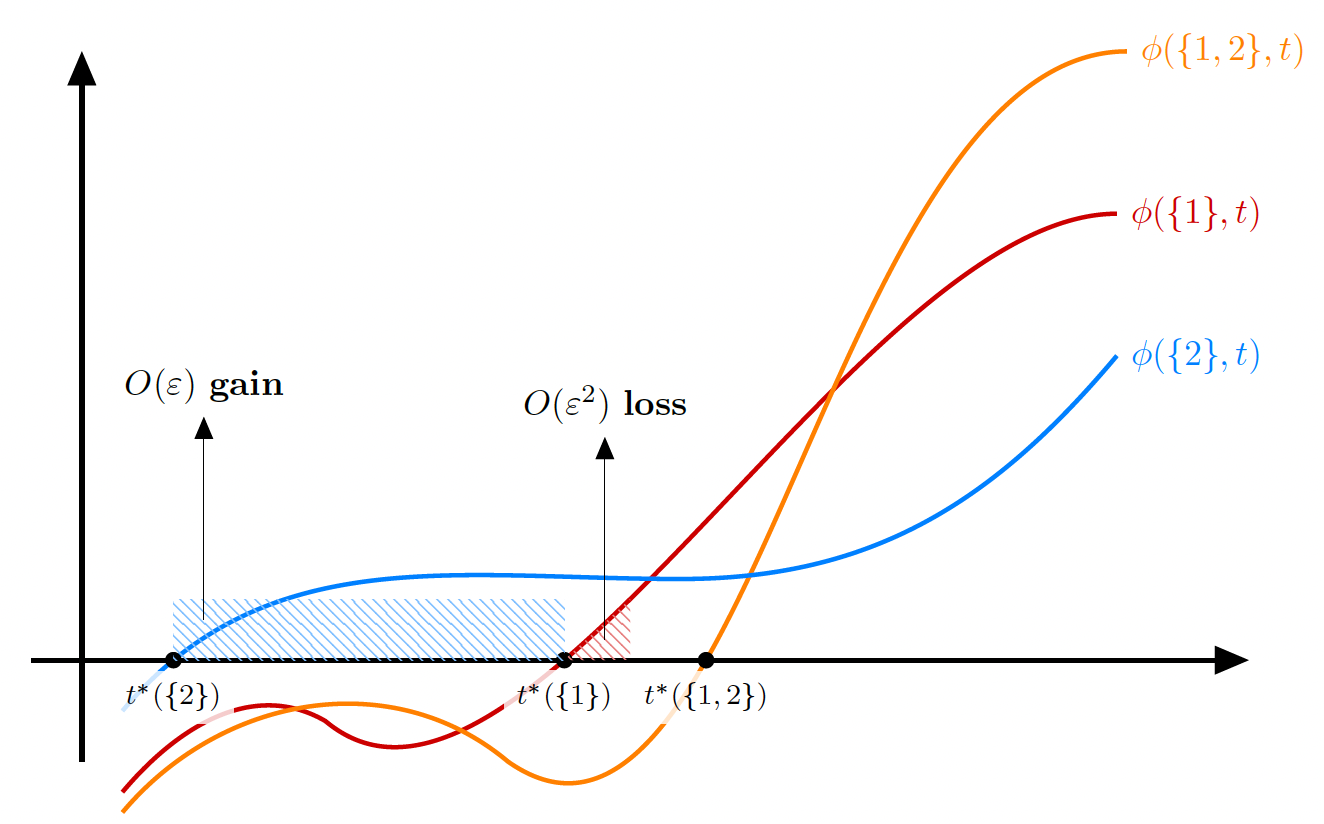}
    \caption{Illustration of the perturbation argument}
    \label{fig:perturb}
\end{figure}

The monopolist makes a gain from the types $t \in [t^*(b_1), t^*(\underline{b}))$ and suffers a loss from the types $t \in [t^*(\underline{b}), t^\dagger_\epsilon)$. It is crucial to compute the gain and the loss in terms of the virtual surplus. Denote them by $\text{Gain}(\epsilon)$ and $\text{Loss}(\epsilon)$. The key claim is the following: For any $\epsilon >0$ small enough, we have 
\[\text{Gain}(\epsilon) > \text{Loss}(\epsilon)\,. \tag{\textbf{Claim 4}}\label{claim:4}\]
Roughly speaking, this is because $\text{Gain}(\epsilon)$ is on the order of $O(\epsilon)$ but $\text{Loss}(\epsilon)$ is on the order of $O(\epsilon^2)$. The argument for \ref{claim:4} relies on a key geometric observation, illustrated in \Cref{fig:perturb} for a two-item example. In particular, observe that the total gain from the types $t \in [t^*(b_1), t^*(\underline{b}))$ forms a \textit{rectangle} whose area varies in $\epsilon$ linearly whereas the total loss from the types $t \in [t^*(\underline{b}), t^\dagger_\epsilon)$ forms a \textit{triangle} whose area varies in $\epsilon$ quadratically.

\section{Applications}\label{sec:app}

In this section, we present three applications of our main result. In the first two applications, we connect empirically relevant demand parameters (price elasticities) and supply parameters (cost structures) to optimal bundling and quality design. In \Cref{subsec:demand}, we study the effect of price elasticities on optimal bundling strategies. In \Cref{subsec:supply}, we study the effect of cost structures on optimal quality design. In the last application, in \Cref{subsec:screening}, we connect costly nonprice screening to optimal bundling. Using our main result, we obtain a necessary and sufficient condition for costly screening to be optimal. 

\subsection{Optimal Bundling Strategies and Price Elasticities}\label{subsec:demand}

We first introduce a sufficient condition for the nesting condition in \Cref{thm:po}, in terms of price elasticities. Let 
\[\eta(b, q) := \Big[\frac{\d \log P(b, q)}{\d \log q}\Big]^{-1}\]
be the usual \textit{price elasticity} for bundle $b$ evaluated at quantity $q$. We say that the \textit{union elasticity condition} holds if for any bundles $b_1$ and $b_2$, we have 
\[\eta(b_1, q) < -1 \,\,\text{ and }\,\, \eta(b_2, q) < -1 \implies \eta(b_1 \cup b_2, q) < -1\,.\tag{\textit{\textbf{Union elasticity condition}}}\]
That is, if the demand curves for two bundles are both elastic at a certain quantity $q$, then the demand curve for their union is also elastic at quantity $q$.

\begin{lemma}\label{lem:union}
Under zero costs, the union elasticity condition implies the nesting condition.   
\end{lemma}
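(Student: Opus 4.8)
The plan is to reduce the combinatorial nesting condition to a one-line computation by exploiting the textbook fact that, with zero marginal costs, having elasticity below $-1$ is equivalent to being on the strictly increasing part of the revenue curve. So the first step is to record the elementary equivalence: with $C(b)=0$, for every bundle $b$ and every $q$ at which $P(b,q)>0$,
\[ \eta(b,q) < -1 \iff \tfrac{\partial}{\partial q}\big(P(b,q)\,q\big) > 0 \iff q < D^*(b)\,, \qquad\text{and } \eta\big(b,D^*(b)\big) = -1\,. \]
For the first equivalence, write $E:=\tfrac{q}{P(b,q)}\tfrac{\partial P}{\partial q}<0$ (the demand curve is strictly decreasing where $P>0$), so that $\eta(b,q)=1/E$ and $\tfrac{\partial}{\partial q}(Pq)=P\,(1+E)$; dividing by $P>0$ gives $\tfrac{\partial}{\partial q}(Pq)>0\iff E>-1\iff 1/E<-1$, and at the maximizer $E=-1$. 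The second equivalence is exactly strict quasi-concavity of $\pi(b,\cdot)=P(b,\cdot)\,q$ on $[0,1]$ together with the paper's convention that the first-order condition holds only at $D^*(b)$: revenue is strictly increasing on $[0,D^*(b)]$ and strictly decreasing afterward.

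The second step is the contrapositive. Suppose the nesting condition fails, so there are two undominated, nonempty bundles $b,b'$ that are incomparable under set inclusion; then $b\subsetneq b\cup b'$ and $b'\subsetneq b\cup b'$. Since $b$ is undominated and $b\neq b\cup b'$, we cannot have $b\preceq b\cup b'$, which forces $D^*(b)>D^*(b\cup b')$; symmetrically $D^*(b')>D^*(b\cup b')$. Put $q^\star:=D^*(b\cup b')$. Because $|b\cup b'|\ge 2$ we have $q^\star\in(0,1)$; moreover $q^\star$ lies strictly inside the increasing region of both $\pi(b,\cdot)$ and $\pi(b',\cdot)$ (using $D^*(b),D^*(b')>q^\star>0$), and at an interior point of the increasing region the revenue exceeds its value $0$ at $q=0$, so $P(b,q^\star),P(b',q^\star)>0$. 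Hence Step 1 applies and gives $\eta(b,q^\star)<-1$ and $\eta(b',q^\star)<-1$. Now the union elasticity condition at the quantity $q^\star$ yields $\eta(b\cup b',q^\star)<-1$; but $q^\star=D^*(b\cup b')$, so Step 1 applied to $b\cup b'$ gives $\eta(b\cup b',q^\star)=-1$, a contradiction. Therefore the nesting condition holds.

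The step I expect to require the most care is not the main logic, which is short, but the boundary bookkeeping inside Step 1: ensuring $P(\cdot,\cdot)>0$ wherever the elasticity is invoked (handled by the revenue-positivity argument above), and the bundles whose $D^*$ is at a boundary. If some undominated $b$ had $D^*(b)=0$, then $b\preceq b\cup b'$ and $b$ would be dominated, contradiction, so this cannot occur; and if $D^*(b)=1$ the increasing-region characterization still applies to every $q\in(0,1)$, in particular to $q^\star$. The empty bundle never causes a violation of the nesting condition since it is set-inclusion-comparable to every bundle, so it may be ignored throughout.
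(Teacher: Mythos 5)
Your proof is correct and takes essentially the same route as the paper's: both argue by contradiction from two incomparable undominated bundles, use undominatedness to place $q^\star = D^*(b\cup b')$ strictly below $D^*(b)$ and $D^*(b')$, translate this via the zero-cost identity $\eta(b,q)<-1 \iff q<D^*(b)$ into the hypothesis of the union elasticity condition, and derive the contradiction $\eta(b\cup b',q^\star)<-1$ against $\eta(b\cup b',D^*(b\cup b'))=-1$. The only difference is cosmetic: the paper states the equivalence as ``the elasticity curve single-crosses $-1$ because the MR curve single-crosses $0$,'' while you unpack the same fact from the derivative of $P(b,q)q$ and add careful boundary bookkeeping ($P>0$ and extreme $D^*$ values), which is a reasonable elaboration but not a different argument.
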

\begin{proof}[Proof of \Cref{lem:union}]
Suppose for contradiction that there exist undominated bundles $b_1$ and $b_2$ that are not nested. Then $b_1 \subset b_1 \cup b_2$ and $b_2 \subset b_1 \cup b_2$. Since costs are zero, the price elasticity for any bundle $b$ at quantity $D^*(b)$ must satisfy 
\[\eta(b, D^*(b)) = -1\,.\]
Note that the elasticity curve $\eta(b, \,\cdot\,)$ single-crosses $-1$ from below because the MR curve $\text{MR}(b, \,\cdot\,)$ single-crosses $0$ from above.\footnote{To see this, recall that $\text{MR}(b, q) = P(b, q) \big[1 + \frac{1}{\eta (b, q)}\big]$.} Then, since $b_1$ and $b_2$ are undominated, we have
\[\eta\big(b_1, D^*(b_1 \cup b_2)\big) < -1 \,\,\,\text{ and }\,\,\, \eta\big(b_2, D^*(b_1 \cup b_2)\big) < -1\,,\]
which, by the union elasticity condition, implies that 
\[\eta\big(b_1 \cup b_2, D^*(b_1 \cup b_2)\big) < -1\,.\]
A contradiction.  
\end{proof}

\begin{rmk}\label{rmk:neteta}
When costs are present, we can simply modify the price elasticity $\eta(b, q)$ to be $\Tilde{\eta}(b, q):= \big[\frac{\d \log (P(b, q) - C(b))}{\d \log q}\big]^{-1}$ to incorporate arbitrary costs.
\end{rmk}

Now, applying our main result, we can fully characterize the optimal menu under the union elasticity condition. To do so, we order the bundles according to their sales volumes $D^*(b)$ when sold alone and define $b^*_i$ as the \textit{$i$-th best-selling bundle}, with ties broken arbitrarily. Then, we have the following result: 

\begin{prop} \label{prop:elasticity}
Under the union elasticity condition and zero costs, selling the following nested menu is optimal: 
\[\Big\{b_1^*, \,\,\,  b_1^* \cup b_2^*, \,\,\,  b_1^* \cup b_2^* \cup b_3^*, \,\,\, \dots, \,\,\, \overline{b}\Big\}\,.\]
\end{prop}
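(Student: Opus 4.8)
The plan is to deduce Proposition~\ref{prop:elasticity} from \Cref{thm:po} and \Cref{thm:char} by showing that the proposed menu $B = \big\{b_1^*,\ b_1^*\cup b_2^*,\ \dots,\ \overline b\big\}$ satisfies the two sufficient conditions in \Cref{thm:char}. The key intermediate step is to establish, under the union elasticity condition and zero costs, the following monotonicity fact about the partial sums: writing $B_k := b_1^*\cup\dots\cup b_k^*$ for the $k$-th element of the menu, one has $D^*(B_1) > D^*(B_2) > \dots > D^*(\overline b)$, and moreover every bundle $b$ not in $B$ is dominated by some element of $B$ that strictly contains it. Once this is in hand, condition (i) of \Cref{thm:char} is immediate (a smaller menu element sells strictly more than a bigger one, since the $B_k$ are nested and their sales volumes are strictly decreasing), and condition (ii) follows because any $b\notin B$ satisfies $b\subseteq B_k$ for the smallest $k$ with $b\subseteq B_k$ (such $k$ exists since $b\subseteq\overline b$), and we will have shown $D^*(b)\le D^*(B_k)$; combined with $b\subsetneq B_k$ when $b\notin B$, this is exactly what (ii) asks. (One must check $b\subsetneq B_k$ rather than $b\subseteq B_k$: if $b=B_k$ then $b\in B$, contradiction, so the inclusion is strict.)

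The engine for both facts is a strengthened version of the argument in the proof of \Cref{lem:union}. Recall that under zero costs, $\eta(b,D^*(b))=-1$, and the elasticity curve $\eta(b,\cdot)$ single-crosses $-1$ from below (equivalently, $q>D^*(b)$ iff $\eta(b,q)<-1$, i.e.\ the demand for $b$ is elastic exactly to the right of its monopoly quantity). So I would argue by induction on $k$ that at the quantity $q_k := \max_{i\le k} D^*(b_i^*)$ — which, since $b_1^*$ is the best-seller, equals $D^*(b_1^*)$ — actually I should be more careful: the right pivot quantity is the \emph{largest} single-bundle sales volume among the constituents, but since we added bundles in order of decreasing $D^*$, $q_k = D^*(b_1^*) = D^*(B_1)$ is not what controls $D^*(B_k)$. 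Instead, the cleaner induction is: $D^*(B_{k})\le D^*(B_{k-1})$ — no, I want strict inequality, which is why the ``ties broken arbitrarily'' and the assumed uniqueness of $D^*$ values matter. The precise claim to prove by induction is: for each $k$, $\eta\big(B_k, D^*(B_{k-1})\big) < -1$, hence $D^*(B_k) < D^*(B_{k-1})$. Here $B_k = B_{k-1}\cup b_k^*$; at quantity $q=D^*(B_{k-1})$ we have $\eta(B_{k-1},q)=-1$ which is \emph{not} strictly elastic, so the union elasticity condition does not directly apply. This is the crux and the main obstacle.

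To get around the non-strictness I would instead pivot at a quantity slightly to the right: for any $q > D^*(B_{k-1})$ we have $\eta(B_{k-1},q)<-1$, and also $q > D^*(b_k^*)$ provided $q > D^*(B_{k-1}) \ge D^*(b_k^*)$ — and indeed $D^*(b_k^*)\le D^*(b_{k-1}^*)\le\dots\le D^*(b_1^*)$, but the relation between $D^*(b_k^*)$ and $D^*(B_{k-1})$ is exactly what the induction is supplying: the inductive hypothesis $D^*(B_{k-1})<D^*(B_{k-2})<\dots<D^*(B_1)=D^*(b_1^*)$ together with $D^*(B_{k-1})\ge ?$ — hmm, we need $D^*(B_{k-1})\ge D^*(b_k^*)$. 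I would prove this as a companion invariant: $D^*(B_j)\ge D^*(b_{j+1}^*)$ for all $j$, using that $b_{j+1}^*\subseteq B_{j+1}=B_j\cup b_{j+1}^*$ and $b_{j+1}^*$ has the $(j{+}1)$-th largest single-bundle volume, so... actually the slickest route is: since $b_{j+1}^*\subseteq B_j\cup b_{j+1}^*$, and comparing $D^*(b_{j+1}^*)$ with $D^*(B_j)$ — we do \emph{not} automatically know $b_{j+1}^*\subseteq B_j$, so I cannot invoke the partial order directly; but I can invoke it in the form: for any $q>D^*(B_j)$, $\eta(B_j,q)<-1$; I want also $\eta(b_{j+1}^*,q)<-1$, i.e.\ $q>D^*(b_{j+1}^*)$, which needs $D^*(b_{j+1}^*)\le D^*(B_j)$. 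So these two invariants are entangled and must be proved by a single simultaneous induction: assuming $D^*(b_j^*)\le D^*(B_{j-1})$ and $D^*(B_{j-1})<D^*(B_{j-2})$ for all relevant indices, one shows for any $q>D^*(B_{j-1})$ that $\eta(b_j^*,q)<-1$ and $\eta(B_{j-1},q)<-1$, hence by the union elasticity condition $\eta(B_j,q)=\eta(B_{j-1}\cup b_j^*,q)<-1$; letting $q\downarrow D^*(B_{j-1})$ and using that $\eta(B_j,\cdot)$ single-crosses $-1$ from below gives $D^*(B_j)\le D^*(B_{j-1})$, and uniqueness of monopoly quantities upgrades this to strict inequality. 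The remaining companion invariant $D^*(b_{j+1}^*)\le D^*(B_j)$ then follows since $D^*(b_{j+1}^*)\le D^*(b_j^*)\le D^*(B_{j-1})$ if $j\ge 2$... wait, that gives $D^*(b_{j+1}^*)\le D^*(B_{j-1})$, not $\le D^*(B_j)$, and since $D^*(B_j)<D^*(B_{j-1})$ this is the wrong direction. So I need $D^*(b_{j+1}^*)\le D^*(B_j)$ directly: here $b_{j+1}^*\subseteq B_{j+1}$ but not $B_j$ in general; however $B_j\supseteq b_1^*$, and... the honest statement is that this step needs its own short argument, and I expect it to be the real technical obstacle — pinning down why adding the $(j{+}1)$-th best-seller to an already-large bundle does not push its sales volume \emph{below} that of the new ingredient. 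I would resolve it by the same elasticity single-crossing logic applied at $q=D^*(b_{j+1}^*)$: there $\eta(b_{j+1}^*,q)=-1$, and if $\eta(B_j,q)\le -1$ we are done by monotonicity of $\eta(B_j,\cdot)$; and $\eta(B_j,q)\le -1$ should follow from $q=D^*(b_{j+1}^*)\le D^*(b_j^*)\le D^*(B_1)=D^*(b_1^*)$ being... no. The cleanest fix: order matters only through $B_j\supseteq\{b_1^*,\dots,b_j^*\}$, so $D^*(B_j)\ge\max_i D^*(\text{undominated part})$; rather than belabor this here, I would prove the companion invariant by noting $b_{j+1}^* \subseteq B_{j+1}$ and applying \ref{claim:1}-style reasoning, or simply restructure the induction so that only the first invariant is needed and condition (ii) of \Cref{thm:char} is verified by a separate direct computation: for $b\notin B$, take the smallest $B_k\supseteq b$; then $b\subsetneq B_k$, and I claim $D^*(b)\le D^*(B_k)$ because $b$ omits at least one of $b_1^*,\dots,b_k^*$, say $b_i^*$, so $b\subseteq B_k\setminus\{?\}$ — this still isn't a containment argument. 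I will accept that verifying condition (ii) requires showing $D^*(b)\le D^*(B_k)$ for the minimal enclosing $B_k$, and I will do it via the union-elasticity single-crossing argument at $q=D^*(b)$: if $\eta(B_k,D^*(b))\le -1$ then $D^*(B_k)\ge D^*(b)$ and we are done; and $\eta(B_k,D^*(b))\le -1$ follows by writing $B_k = b\cup(B_k\setminus b)$ — wait, $B_k\setminus b$ need not be one of the $b_i^*$. At this point I would instead invoke \Cref{thm:po}: the union elasticity condition gives the nesting condition (by \Cref{lem:union}), so by \Cref{thm:po} nested bundling is optimal and the minimal optimal menu is the set of undominated bundles; it then suffices to show the set of undominated bundles equals $B$ (up to dominance-equivalent choices), and conditions (i)–(ii) of \Cref{thm:char} characterize exactly this set — so the whole Proposition reduces to the single claim that, under the union elasticity condition, the undominated bundles are precisely $\{B_1,\dots,\overline b\}$, which is established by the simultaneous induction sketched above (first invariant: strictly decreasing $D^*(B_k)$, so the $B_k$ are pairwise $\preceq$-incomparable hence all undominated; and every other bundle is $\subseteq$ some $B_k$ with no larger $D^*$, hence dominated). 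I expect the delicate point throughout to be the passage from the non-strict elasticity $\eta(B_{k-1},D^*(B_{k-1}))=-1$ to a strict inequality enabling the union elasticity condition — handled by approaching from the right, $q\downarrow D^*(B_{k-1})$, and using the single-crossing of $\eta(\cdot,q)$ through $-1$ together with the assumed uniqueness of each monopoly quantity to recover strictness.
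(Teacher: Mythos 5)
You have the right overall shape---reduce to showing that every bundle outside $B$ is dominated, and establish this by an induction on the partial unions $B_k := b_1^*\cup\dots\cup b_k^*$ driven by the union elasticity condition---and this is essentially how the paper proceeds. But three things go wrong in the execution. First, your elasticity--quantity equivalence is backwards: with zero costs, $\eta(b,q)<-1$ holds iff $q<D^*(b)$, not $q>D^*(b)$ (marginal revenue is positive iff $\eta<-1$, and MR is positive to the \emph{left} of $D^*(b)$). This flip is load-bearing: with the correct sign, fixing $q<\min\{D^*(B_{k-1}),D^*(b_k^*)\}$ and applying the union elasticity condition yields $\eta(B_k,q)<-1$ and hence $q<D^*(B_k)$, which gives the \emph{lower} bound $D^*(B_k)\geq\min\{D^*(B_{k-1}),D^*(b_k^*)\}$---not the upper bound $D^*(B_k)\leq D^*(B_{k-1})$ you were trying to extract. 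Second, the strict chain $D^*(B_1)>D^*(B_2)>\dots$ (condition (i) of \Cref{thm:char}) is neither provable by this route nor needed: the union elasticity condition gives only a one-sided bound, and $D^*(B_k)$ need not decrease. What \Cref{prop:elasticity} asserts is only that selling $B$ is optimal, not that $B$ is minimally optimal, so it suffices that every bundle outside $B$ is dominated by one inside $B$---then the undominated bundles form a subset of $B$, and \Cref{thm:po} already shows that selling that subset is optimal.

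Third, you circle the right invariant but never land on it. The clean claim is $D^*(B_i)\geq D^*(b_i^*)$, proved by
\[
D^*(B_i) \;\geq\; \min\big\{D^*(B_{i-1}),\, D^*(b_i^*)\big\} \;\geq\; \min\big\{D^*(b_{i-1}^*),\, D^*(b_i^*)\big\} \;=\; D^*(b_i^*)\,,
\]
where the first inequality is the union-elasticity estimate, the second is the inductive hypothesis, and the equality is the best-seller ordering. Your ``companion invariant'' $D^*(b_{j+1}^*)\leq D^*(B_j)$---the step you correctly identify as the obstacle---then falls out immediately from this together with $D^*(b_{j+1}^*)\leq D^*(b_j^*)$, so no simultaneous induction is needed. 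To finish: any omitted $b=b_j^*\notin B$ satisfies $b_j^*\subsetneq B_j$ (strict since $b\neq B_j$) and $D^*(b_j^*)\leq D^*(B_j)$, hence $b$ is dominated by $B_j\in B$, which is exactly condition (ii) of \Cref{thm:char}.
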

\begin{proof}[Proof of \Cref{prop:elasticity}]
By \Cref{lem:union} and \Cref{thm:po}, nested bundling is optimal. Let $B$ be the proposed menu. By \Cref{thm:char}, it suffices to show that any (non-empty) bundle $b \not \in B$ is dominated. We start by showing that for all $i$, we have
\[D^*(b^*_1 \cup \dots \cup b^*_i) \geq D^*(b^*_i)\,.\]
We prove this by induction on $i$. The base case $i = 1$ is trivial. For the inductive step, suppose that the claim holds for $i-1$. Now, observe that  
\[D^*(b^*_1 \cup \dots \cup b^*_i) \geq \min\Big\{D^*(b^*_1 \cup \dots \cup b^*_{i-1}),\, D^*(b^*_i)\Big\} \geq \min\Big\{D^*(b^*_{i-1}),\, D^*(b^*_i)\Big\} = D^*(b^*_i)\,,\]
where (i) the first inequality follows from the union elasticity condition and the argument used in the proof of \Cref{lem:union}, (ii) the second inequality follows from the inductive hypothesis, and (iii) the last equality follows from the definition of $b^*_i$ and $b^*_{i-1}$. This proves the inductive step. 

Now, fix any $b \notin B$. There exists some index $j$ such that $b = b^*_j$. Since $b \notin B$, we have 
\[b = b^*_j \subset b^*_1 \cup \dots \cup b^*_j\,.\]
But we also have 
\[D^*(b) = D^*(b^*_j) \leq  D^*(b^*_1 \cup \dots \cup b^*_j)\,.\]
Thus, bundle $b$ is dominated, completing the proof.
\end{proof}

\begin{rmk}
Under the union elasticity condition, \Cref{prop:elasticity} provides a simple recipe for constructing the optimal menu: (i) arrange all bundles in descending order based on their sales volumes, and (ii) successively merge them, excluding duplicates. The optimal mechanism iteratively creates nests such that items with a more elastic demand curve become the ``basic items'' and items with a more inelastic demand curve become the ``upgrade items'', with both measured by the size of their \textit{elastic regions}. Note also that this mechanism sorts the \textit{bundles}, rather than the \textit{items}, by their sales volumes. This construction fully accounts for the complementarity or substitutability patterns across different items. 
\end{rmk}

\paragraph{Comparative Statics.}\hspace{-2mm}Price elasticities can be affected by advertising and marketing, which can act as a demand rotation in the sense of \citet{johnson2006simple}. Using \Cref{prop:elasticity}, we can analyze the comparative statics of optimal bundling given a sequence of demand rotations. Suppose that there are two items and zero costs. Consider a family of demand systems indexed by parameter $s \in \R$, with $\eta(b, q; s)$ denoting the price elasticities and $D^*(b; s)$ denoting the sales volumes. We use the following notion of demand rotations: There is a sequence of \textit{(clockwise, sales-ordered) demand rotations} for item $i$ if for all $s < s'$
\[D^*(\{i\}; s') \leq D^*(\{i\}; s), \qquad  D^*(\{j\}; s') = D^*(\{j\}; s) , \qquad D^*(\{1,2\}; s') \leq D^*(\{1,2\}; s),\]
and 
\[  D^*(\{i\}; s) \leq  D^*(\{1,2\}; s)  \implies   D^*(\{i\}; s') \leq D^*(\{1,2\}; s') \,. \]
That is, as parameter $s$ increases, the demand curve for item $i$ and the demand curve for bundle $\{1, 2\}$ become more inelastic in the sense of a smaller elastic region.\footnote{A sufficient (but far from necessary) condition is that, as parameter $s$ increases, the demand curves become pointwise more inelastic.} The last condition ensures that the indirect change in the demand curve for bundle $\{1, 2\}$ is smaller than the direct change in the demand curve for item $i$. To state our result, we define the \textit{tier of item} $i$ in a nested menu $B:= \{b_1, \dots, b_m\}$, where $b_1 \subset \cdots \subset b_m$, as the index of the smallest bundle in $B$ that includes item $i$, denoted by $r_i(B)$.

\begin{prop}\label{prop:rotation}
Suppose that there are two items and zero costs and that the union elasticity condition holds for all $s$. Let $B^*(s)$ be the minimal optimal menu. Then, in a sequence of demand rotations for item $i$, we have that: 
\begin{itemize}
    \item[(i)] the tier of item $i$ in the optimal menu $r_i(B^*(s))$ is nondecreasing in $s$\,;
    \item[(ii)] the tier of item $j \neq i$ in the optimal menu $r_j(B^*(s))$ is nonincreasing in $s$\,;
    \item[(iii)] the size of the optimal menu $|B^*(s)|$ is quasi-convex in $s$\,.
\end{itemize}
\end{prop}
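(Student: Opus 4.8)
The plan is to reduce the proposition to a finite case analysis of the set of undominated bundles and then track how that set moves as $s$ increases. Since there are two items and zero costs, \Cref{lem:union} guarantees that the nesting condition holds for every $s$, so by \Cref{cor:two-item} the minimal optimal menu $B^*(s)$ is exactly the set of undominated bundles (in particular $B^*(s)$ is well defined). Take $i=1$ and $j=2$ without loss of generality, and abbreviate $x(s) := D^*(\{1\};s)$, $y := D^*(\{2\})$ (which does not depend on $s$), and $z(s) := D^*(\{1,2\};s)$. The grand bundle $\{1,2\}$ is always undominated, and $\{1\}$ (resp.\ $\{2\}$) can only be dominated by $\{1,2\}$; hence the set of undominated bundles, and so $B^*(s)$, is pinned down entirely by the two comparisons $x(s)$ vs.\ $z(s)$ and $y$ vs.\ $z(s)$. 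This yields exactly three configurations: \textbf{(B)} $z(s) < x(s)$, in which case $\{1\}$ is undominated, so the nesting condition forces $\{2\}$ to be dominated, i.e.\ $y \le z(s)$, and $B^*(s) = \{\{1\},\{1,2\}\}$; \textbf{(A)} $x(s) \le z(s)$ and $y \le z(s)$, in which case $B^*(s) = \{\{1,2\}\}$; and \textbf{(C)} $x(s) \le z(s) < y$, in which case $B^*(s) = \{\{2\},\{1,2\}\}$. The triple $\big(r_1(B^*(s)),\, r_2(B^*(s)),\, |B^*(s)|\big)$ equals $(1,2,2)$, $(1,1,1)$, $(2,1,2)$ in configurations B, A, C respectively, so the proposition reduces to showing that, as $s$ increases, the configuration passes through B, then A, then C, with any of these stretches possibly empty.

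To prove that ordering I would use the monotonicity of $z(\cdot)$ and the single-crossing clause in the definition of a demand rotation. Since $z(\cdot)$ is nonincreasing, the set $\{s : y \le z(s)\}$ is downward closed; let $s_2$ be its supremum. The single-crossing clause says precisely that $\{s : x(s) \le z(s)\}$ is upward closed; let $s_1$ be its infimum. The one nontrivial point is $s_1 \le s_2$: for any $s$ with $z(s) < x(s)$ we are in configuration B, which, as noted, forces $y \le z(s)$; hence $\{s : z(s) < x(s)\} \subseteq \{s : y \le z(s)\}$, so $s_1 \le s_2$. Consequently the parameter line splits into the region below $s_1$ (configuration B), the region between $s_1$ and $s_2$ (configuration A), and the region above $s_2$ (configuration C), up to the harmless ambiguity at the two threshold points themselves, which affects none of the three monotonicity statements. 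Reading the triple off the table: $r_1(B^*(s))$ runs through $1,1,2$ and is nondecreasing; $r_2(B^*(s))$ runs through $2,1,1$ and is nonincreasing; and $|B^*(s)|$ runs through $2,1,2$, whose lower level sets are intervals, i.e.\ it is quasi-convex.

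The main obstacle is not depth but careful bookkeeping of the structural claims that drive the case analysis: that $\{1\}$ and $\{2\}$ can each be dominated only by the grand bundle, that configuration B genuinely forces $y \le z(s)$ through the nesting condition (this is where \Cref{lem:union} and the ``for all $s$'' hypothesis are used), and that the degenerate situations ($s_1 = s_2$, so configuration A is absent; or one single configuration covering the whole range) remain consistent with all three claims. It is worth noting that monotonicity of $x(\cdot) = D^*(\{i\};s)$ is not itself needed; what does the work is the monotonicity of $z(\cdot) = D^*(\{1,2\};s)$ together with the single-crossing clause, which are exactly what force $s_1 \le s_2$.
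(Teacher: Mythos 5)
Your proof is correct and takes essentially the same approach as the paper: both reduce via \Cref{lem:union} and \Cref{cor:two-item} to tracking the set of undominated bundles across $s$, and both exploit the same three facts --- monotonicity of $D^*(\{1,2\};\cdot)$, the single-crossing clause, and the nesting condition forcing exactly one of $\{1\},\{2\}$ to be dominated. You package the argument as a single global threshold structure (configuration \textbf{B} below $s_1$, \textbf{A} between $s_1$ and $s_2$, \textbf{C} above $s_2$, with $s_1\le s_2$) and read all three claims off a table, whereas the paper proves (i) and (ii) by separate contradiction arguments and derives (iii) from them; your side remark that monotonicity of $D^*(\{i\};\cdot)$ is never actually used is correct and holds for the paper's version as well.
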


\begin{proof}[Proof of \Cref{prop:rotation}]
Without loss of generality, suppose that there is a sequence of demand rotations for item $2$. By \Cref{prop:elasticity}, nested bundling is always optimal at any parameter $s$. By \Cref{cor:two-item}, the minimal optimal menu $B^*(s)$ equals the set of undominated bundles. To prove claim (i), observe that it suffices to show that if $B^*(s) = \big\{\{1\}, \{1, 2\}\big\}$, then $B^*(s')$ must be $\big\{\{1\}, \{1, 2\}\big\}$ for any $s < s'$. Suppose not. Then, for some $s < s'$, we have 
\[D^*(\{1, 2\}; s') \geq D^*(\{1\}; s') = D^*(\{1\}; s) > D^*(\{1, 2\}; s)\,\,,\]
which is impossible by our notion of demand rotations. 

To prove claim (ii), observe that it suffices to show that if $B^*(s') = \big\{\{2\}, \{1, 2\}\big\}$, then $B^*(s)$ must be $\big\{\{2\}, \{1, 2\}\big\}$ for any $s < s'$. Suppose not. Then, for some $s < s'$, we have 
\[D^*(\{2\}; s) \leq D^*(\{1, 2\}; s)  \,, \qquad D^*(\{2\}; s') > D^*(\{1, 2\}; s')\,,\]
which is impossible by our notion of demand rotations. 

To prove claim (iii), observe that it suffices to show that it cannot be $|B^*(s)| = 1, |B^*(s')| = 2, |B^*(s'')| = 1$ for any $s < s' < s''$. To see why this is impossible, note that: if $B^*(s') = \big\{\{1\}, \{1, 2\}\big\}$, then $r_2(B^*(\,\cdot\,))$ cannot be nondecreasing, contradicting claim (i); if $B^*(s') = \big\{\{2\}, \{1, 2\}\big\}$, then $r_1(B^*(\,\cdot\,))$ cannot be nonincreasing, contradicting claim (ii). 
\end{proof}

\begin{rmk}\label{rmk:demand}
\Cref{prop:rotation} says that if there is a sequence of demand rotations for item $i$, i.e. an increase in the dispersion of consumers' values for item $i$, then the item gets promoted to be the ``upgrade item'' while the other item gets demoted to be the ``basic item'', and the optimal menu first gets coarser and then gets finer. This result complements \citet{johnson2006simple}, who study the effect of value dispersion on a monopolist's quality design. They show that a demand rotation always leads to an expansion of the product line. In contrast, our bundling setting involves the monopolist switching the tiers of different items and adopting a menu size that is $U$-shaped in the dispersion parameter. As an example, consider case (i) of \Cref{ex:two-item}, where $\gamma = 0.5$ and $\beta$ varies from $0$ to $2$. As parameter $\beta$ increases, there is a sequence of demand rotations for item $2$. The optimal menu changes in a way that is consistent with \Cref{prop:rotation}: It shifts from $\big\{\{2\}, \{1, 2\}\big\}$ to $\big\{\{1, 2\}\big\}$, and then to $\big\{\{1\}, \{1, 2\}\big\}$, as parameter $\beta$ increases. 
\end{rmk}

\subsection{Optimal Quality Design and Cost Structures}\label{subsec:supply}

A special case of our model is the quality discrimination model a la \citet{mussa1978monopoly}. Our result provides new insights even in this well-studied setting. Let $\X := \{x_1, \dots, x_n\} \subset \R_{++}$ be a set of qualities, with $x_1 < \cdots < x_n$. In this model, a type-$t$ consumer has value $v_Q(x, t)$ for a good of quality $x$; the monopolist incurs cost $C_Q(x)$ to supply a good of quality $x$. This can be viewed as a special case of our model, where we define the values and costs for the bundles as follows: For all $k = 1, \dots, n$, let 
\[v(\{1, \dots, k\}, t) := v_Q(x_k, t)\,, \quad C(\{1, \dots, k\}) := C_Q(x_k)\,.\]
Let $v(b, t) =  C(b) = 0$ for all bundles $b$ that are not of the form $\{1, \dots, k\}$ (i.e. these bundles can be simply ignored when applying our results). Note that, in this case, the nesting condition is always satisfied. 

With a slight abuse of notation, we drop the subscript $Q$ whenever it is clear. Let $D^*(x)$ be the unique sales volume of the good of quality $x$ when it is sold alone; thus, $D^*: \X \rightarrow [0, 1]$. For simplicity of exposition, assume that $D^*(x) \in (0, 1)$ for all $x \in \X$. Our next result provides a new characterization of optimal quality discrimination:
\begin{prop}\label{prop:D-hat}
Consider the upper decreasing envelope of $D^*$:
\[\widehat{D}^*(x):= \inf\Big\{g(x) : \text{ $g$ is nonincreasing and $g \geq D^*$}\Big\}\,.\]
Let 
\[X^* := \Big\{x: \widehat{D}^*(x) = D^*(x)\Big\}\,.\]
Then selling menu $X^*$ is optimal. 
\end{prop}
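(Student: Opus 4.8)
The plan is to reduce the claim to \Cref{thm:po} by showing that $X^*$ contains every undominated bundle. As already noted, in the quality model the only bundles carrying positive value form the chain $\emptyset \subset \{1\} \subset \{1,2\} \subset \cdots \subset \{1,\dots,n\}$ --- identified with the qualities $x_1 < \cdots < x_n$ together with the outside option --- so the relevant part of $(\mathcal B, \preceq)$ is totally ordered by set inclusion and the nesting condition holds trivially. Hence \Cref{thm:po} applies, and (by the constructive proof of that theorem, which shows that simply selling the set of undominated bundles is optimal) there is an optimal mechanism whose allocation takes values only in the set $U$ of undominated bundles.

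The heart of the argument is then to identify $U$ and $X^*$ and compare them. First I would observe that, since $\mathcal X$ is finite and linearly ordered, the least nonincreasing majorant of $D^*$ is the running maximum from the right, $\widehat D^*(x_i) = \max\{D^*(x_j): j \ge i\}$: this function is nonincreasing and dominates $D^*$, while any nonincreasing $g \ge D^*$ obeys $g(x_i) \ge g(x_j) \ge D^*(x_j)$ for all $j \ge i$. Therefore $x_i \in X^*$ precisely when $D^*(x_i) \ge D^*(x_j)$ for every $j > i$. On the other hand, for $j > i$ we have $x_i \subseteq x_j$ automatically, so $x_i \preceq x_j \iff D^*(x_i) \le D^*(x_j)$, and thus $x_i$ is undominated precisely when $D^*(x_i) > D^*(x_j)$ for every $j > i$. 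Comparing the two characterizations gives $U \subseteq X^*$ (where, following the paper's convention, both are understood to contain $\emptyset$).

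It then remains only to combine the two facts: the optimal mechanism supplied by \Cref{thm:po} has allocation contained in $U \subseteq X^*$, so by the definition of ``selling a menu is optimal'' the menu $X^*$ is optimal, which is the claim.

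I do not expect a serious obstacle; the one place requiring a word of care is that $X^*$ can be strictly larger than $U$ when $D^*$ attains the same value at two distinct qualities (then both satisfy the weak inequality defining $X^*$, but the smaller one is dominated). This is precisely why the argument routes through \Cref{thm:po} rather than verifying the strict-inequality hypothesis (i) of \Cref{thm:char} directly: ``selling a menu is optimal'' only asks for \emph{some} optimal mechanism with allocations in the menu, so enlarging an optimal menu preserves optimality. If one additionally assumes $D^*$ is injective on $\mathcal X$, then $X^* = U$ and $X^*$ is moreover the minimal optimal menu.
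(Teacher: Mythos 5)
Your proof is correct and takes essentially the same approach as the paper's: both reduce to \Cref{thm:po} by showing that every quality outside $X^*$ is dominated, i.e.\ $U \subseteq X^*$, and then invoke the definition of ``selling a menu is optimal.'' The only difference is expository---you make the envelope $\widehat D^*$ explicit as the running maximum from the right and read both membership criteria off that formula, whereas the paper argues the same containment by contradiction directly from the infimum definition; the content is identical, and your note about ties (weak vs.\ strict inequality, hence $U \subseteq X^*$ rather than equality) is exactly the right observation.
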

\begin{proof}[Proof of \Cref{prop:D-hat}]
By \Cref{thm:po}, it suffices to show that if $x^\dagger \not \in X^*$, then $x^\dagger$ is dominated by another quality level $x \neq x^\dagger$. Fix any $x^\dagger \not \in X^*$. Then $\widehat{D}^*(x^\dagger) > D^*(x^\dagger)$. Suppose, for contradiction, that there does not exist any $x > x^\dagger$ such that $D^*(x) \geq D^*(x^\dagger)$. Then, we have
\[ \max_{x> x^\dagger}\big\{\widehat{D}^*(x)\big\} =  \max_{x> x^\dagger}\big\{D^*(x)\big\} < D^*(x^\dagger)\,.\]
Let
\[\widetilde{D}^*(x)  := \begin{cases} \widehat{D}^*(x) &\text{ if $x \neq x^\dagger$\,;}\\
D^*(x) &\text{ otherwise\,.}
\end{cases}\]
Then note that $\widetilde{D}^*(x)$ is also nonincreasing and everywhere above $D^*(x)$. Moreover, $\widetilde{D}^*(x)$ is everywhere below $\widehat{D}^*(x)$ with $\widetilde{D}^*(x^\dagger) < \widehat{D}^*(x^\dagger)$. A contradiction. 
\end{proof}

\paragraph{Cost Structures.}\hspace{-2mm}Using \Cref{prop:D-hat}, we now characterize how optimal quality design depends on the cost structure. Suppose that the consumers' preferences are multiplicative: $v(x, t) = x \cdot t$.\footnote{This is equivalent to assuming that $v(x, t)= u(x)\cdot v(t)$ for strictly increasing functions $u$ and $v$, by a change of variables.} Let $C_{avg}(x):= C(x) \big / x$ be the average cost function. 
\begin{prop}\label{prop:C-check}
Suppose that $F$ is regular.\footnote{That is, the function $t - \frac{1 - F(t)}{f(t)}$ is strictly increasing (which is equivalent to that the quality-adjusted revenue curve is strictly concave).} Consider the lower increasing envelope of $C_{avg}$:
\[\widecheck{C}_{avg}(x) := \sup \Big\{g(x) : \text{ $g$ is nondecreasing and $g \leq C_{avg}$}\Big\}\,.\]
Let 
\[X^* := \Big\{x: \widecheck{C}_{avg}(x) = C_{avg}(x)\Big\}\,.\]
Then selling menu $X^*$ is optimal.
\end{prop}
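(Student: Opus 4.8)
The plan is to reduce \Cref{prop:C-check} to \Cref{prop:D-hat} by translating the ``upper decreasing envelope of sales volumes'' into the ``lower increasing envelope of average costs.'' First I would use the multiplicative structure $v(x,t)=x\cdot t$ to get an explicit handle on $D^*(x)$. With costs, the profit function for quality $x$ sold alone is $\pi(x,q)=(P(x,q)-C(x))q$, where $P(x,q)=x\cdot F^{-1}(1-q)$. Dividing by $x$ (which does not change the maximizer), we see $D^*(x)=\argmax_q\, q\big(F^{-1}(1-q)-C_{avg}(x)\big)$. So $D^*(x)$ depends on $x$ \emph{only through} $C_{avg}(x)$: there is a function $\psi$ with $D^*(x)=\psi(C_{avg}(x))$, where $\psi(c):=\argmax_q q\big(F^{-1}(1-q)-c\big)$ is the monopoly quantity of a firm with constant marginal cost $c$ facing inverse demand $F^{-1}(1-q)$. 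Regularity of $F$ (strict concavity of the quality-adjusted revenue curve $q F^{-1}(1-q)$) ensures this argmax is unique and, crucially, that $\psi$ is \emph{strictly decreasing} and continuous in $c$ on its relevant range (higher marginal cost strictly lowers the monopoly quantity). This is the key structural fact I would establish carefully.

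Next I would show the set equality $X^*_{D}=X^*_{C}$, where $X^*_D:=\{x:\widehat{D}^*(x)=D^*(x)\}$ is the set from \Cref{prop:D-hat} and $X^*_C:=\{x:\widecheck{C}_{avg}(x)=C_{avg}(x)\}$ is the set in the current statement; \Cref{prop:D-hat} then immediately gives optimality of $X^*_C$. The argument is that post-composing with the strictly decreasing, continuous bijection $\psi$ turns ``being on the lower increasing envelope of $C_{avg}$'' into ``being on the upper decreasing envelope of $\psi\circ C_{avg}=D^*$.'' Concretely: if $x\in X^*_C$, there is a nondecreasing $g\le C_{avg}$ with $g(x)=C_{avg}(x)$; then $\psi\circ g$ is nonincreasing, $\psi\circ g\ge \psi\circ C_{avg}=D^*$, and agrees with $D^*$ at $x$, so $x\in X^*_D$. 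Conversely, if $x\in X^*_D$, there is a nonincreasing $h\ge D^*$ with $h(x)=D^*(x)$; one would like to write $h=\psi\circ g$ for nondecreasing $g\le C_{avg}$, but $h$ need not lie in the range of $\psi$, so instead I would truncate: replace $h$ by $\tilde h:=\min\{h,\sup_x D^*(x)\}\vee(\inf_x D^*(x))$ — actually cleaner is to directly set $g:=\psi^{-1}(h)$ after first clamping $h$ to the interval $[\min_x D^*(x),\max_x D^*(x)]$, which changes nothing since $D^*$ already lies in that interval and $h\ge D^*$; then $g$ is well-defined, nondecreasing (as $\psi^{-1}$ is decreasing and $h$ is nonincreasing), satisfies $g\le C_{avg}$ (since $h\ge D^*$ means $g=\psi^{-1}(h)\le \psi^{-1}(D^*)=C_{avg}$), and $g(x)=C_{avg}(x)$. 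Hence $x\in X^*_C$.

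The main obstacle I anticipate is handling the range/domain bookkeeping for $\psi$ and $\psi^{-1}$ cleanly: $\psi$ is naturally defined for marginal costs $c$ in some interval, and one must check that all the envelope operations stay inside the region where $\psi$ is a genuine strictly-monotone bijection, and that monotonicity of the various envelope functions is preserved under composition even at points of discontinuity (envelopes of arbitrary functions over a finite set $\X$ are fine, but if one wants to phrase things for general domains one should be careful about left/right limits). A secondary subtlety is confirming, from regularity, that $\psi$ is \emph{strictly} decreasing rather than merely nonincreasing — this uses that the FOC $F^{-1}(1-q)-c+q\frac{d}{dq}F^{-1}(1-q)=0$ has a solution that moves strictly with $c$ because the left side is strictly decreasing in $q$ (the regularity/strict-concavity hypothesis) — and that the earlier blanket assumption $D^*(x)\in(0,1)$ keeps us at interior maximizers where this FOC analysis is valid. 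Once these are in place, the proof is the two-line set-equality argument above plus an appeal to \Cref{prop:D-hat}.
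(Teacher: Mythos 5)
Your proposal is correct and takes essentially the same route as the paper: both identify $D^*(x)$ as a strictly decreasing, continuous function of $C_{avg}(x)$ (your $\psi$ is exactly the paper's $\text{MR}^{-1}$, since the FOC is $\text{MR}(q)=c$), and both then observe that post-composing with a strictly decreasing map exchanges lower-increasing envelopes for upper-decreasing ones, reducing to \Cref{prop:D-hat}. The paper packages the envelope-exchange step as the abstract identity $\textbf{U}^-[\Phi\circ h]=\Phi\circ\textbf{L}^+[h]$, whereas you unpack it as a two-way set inclusion with a clamping step; both are fine here since $\X$ is finite, and your extra care about the domain/range of $\psi$ is a reasonable (if unnecessary) belt-and-suspenders check.
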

\begin{proof}[Proof of \Cref{prop:C-check}]
Note that since $D^*(x) \in (0, 1)$, we must have
\[\text{MR}(D^*(x)) \cdot x - C(x) = 0\,,\]
where the quality-adjusted marginal revenue curve $\text{MR}(q) := \frac{\d}{\d q} (F^{-1}(1-q) \cdot q)$ is a strictly decreasing, continuous function. Therefore, we have 
\[D^*(x) = \text{MR}^{-1}\Big(C_{avg}(x)\Big)\,.\]
Now, observe that for any function $h: \X \rightarrow \R$ and any strictly decreasing function $\Phi: \R \rightarrow \R$, we have 
\[\textbf{U}^-\big[\Phi\circ h\big] = \Phi\circ \textbf{L}^+\big[h\big]\,,\]
where $\textbf{U}^-[\,\cdot\,]$ denotes the upper decreasing envelope operator and $\textbf{L}^+[\,\cdot\,]$ denotes the lower increasing envelope operator. Thus, we have 
\[\widehat{D}^*(x) = \text{MR}^{-1}\Big(\widecheck{C}_{avg}(x)\Big)\]
because $\text{MR}^{-1}(\,\cdot\,)$ is strictly decreasing. The claim follows from \Cref{prop:D-hat}.
\end{proof}

\begin{rmk}\label{rmk:supply}
This result generalizes Proposition 1 of \citet{johnson2003multiproduct}, where the average cost curve is assumed to be $U$-shaped.\footnote{They define a cost structure to be \textit{$U$-shaped} if there exists some quality threshold $x_k$ below which the average cost is decreasing and above which the marginal and average costs are increasing. Note that in this case, the set of undominated qualities $X^*$ coincides with the region of increasing marginal and average costs $\{x_k, \dots, x_n\}$. Moreover, by \Cref{alg:alg1}, the menu $X^*$ in this case is the minimal optimal menu.} They conclude that ``It is optimal to segment the market with multiple products exactly in the region where average cost and marginal cost are increasing'' (\citealt{johnson2003multiproduct}). However, \Cref{prop:C-check} shows that this conclusion is incomplete when the cost structure is more complex.\footnote{Average costs are generally not $U$-shaped whenever there are kinks in the cost function due to a mix of production technologies, e.g. $C(x) = \min\big\{k_1 + x^{\alpha_1}, k_2 + x^{\alpha_2}\big\}$ where $k_1 < k_2$ and $\alpha_1 > \alpha_2$.} The optimal mechanism need not segment the market even when average cost and marginal cost are increasing. \Cref{fig:quality} illustrates. Specifically, the marginal and average costs can both increase within the blue region highlighted in \Cref{fig:c}, yet the optimal mechanism does not segment the market using these qualities. Instead, as illustrated in \Cref{fig:d}, optimal quality choices are characterized by our notion of dominance.
\end{rmk}

\begin{figure}
    \centering
\hspace{-0.08\linewidth}
\begin{subfigure}[b]{0.48\linewidth}
\centering
        \includegraphics[scale=0.58]{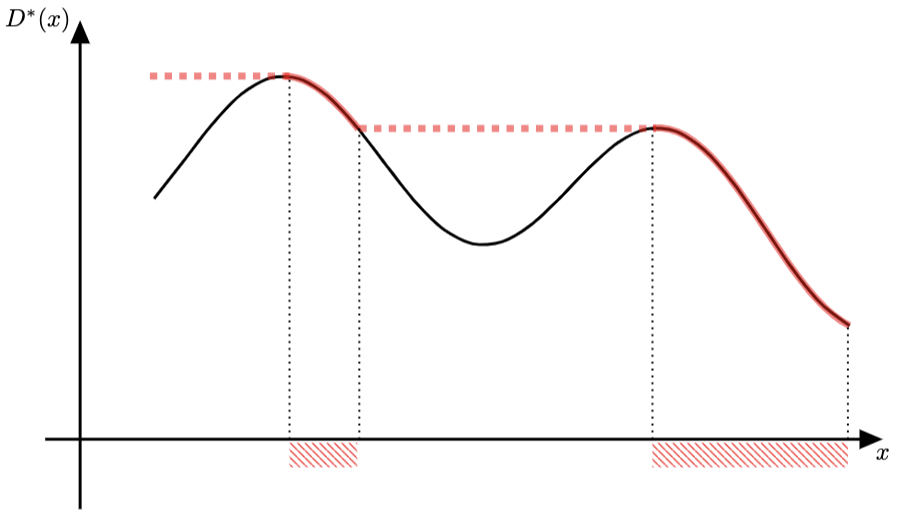}
    \caption{$D^*(x)$ and $\widehat{D}^*(x)$\label{fig:d}}
\end{subfigure}
\hspace{0.06\linewidth}
\begin{subfigure}[b]{0.48\linewidth}
    \centering
        \includegraphics[scale=0.54]{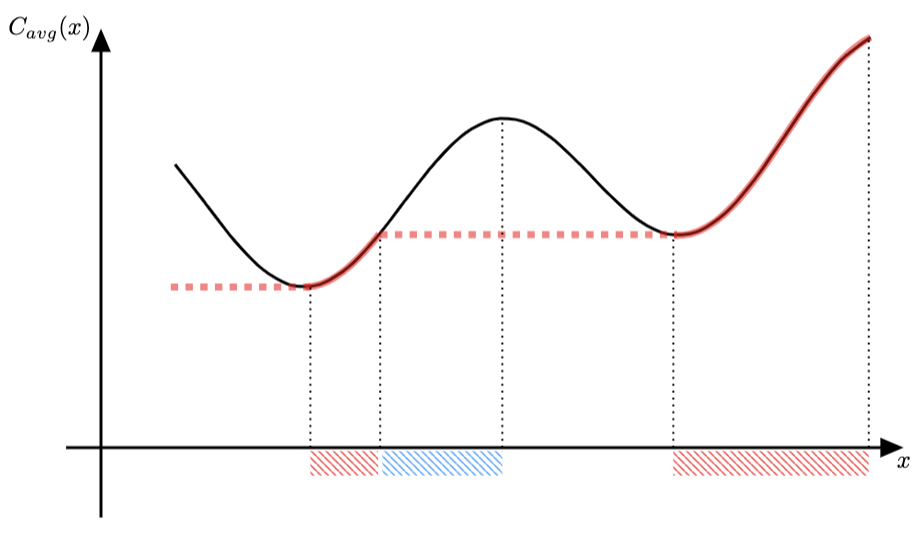}
       \caption{$C_{avg}(x)$ and $\widecheck{C}_{avg}(x)$\label{fig:c}}
\end{subfigure}
\caption{Illustration of the upper decreasing envelope $\widehat{D}^*$ and the lower increasing envelope $\widecheck{C}_{avg}$}
\label{fig:quality}
\end{figure}

\begin{rmk}\label{rmk:bunching}
At first glance, \Cref{prop:D-hat} and \Cref{prop:C-check} may seem related to the ``ironing procedures'' in \citet{mussa1978monopoly} and \citet{Myerson1981}. However, this connection is only superficial. Even though both \Cref{prop:D-hat} and \Cref{prop:C-check} characterize various ``bunching regions'', they operate in a setting where ``ironing'' is not needed. In the standard textbook treatment of one-dimensional screening problems, the regularity assumptions that rule out ``ironing'' also rule out the possibility of ``bunching'' (see pp. 262--268 of \citealt{Fudenberg1991}). Our assumptions are much weaker, allowing for various forms of ``bunching.'' This generality relies on our new characterization of the upper envelope of the MR curves. 
\end{rmk}

\subsection{Optimal Screening with Price and Nonprice Instruments}\label{subsec:screening}
Consider a monopolist selling a set of quality-differentiated goods, as described in \Cref{subsec:supply}. In addition to setting prices for these goods, the monopolist can also use nonprice instruments by requiring customers to perform certain costly actions, such as waiting in line or collecting coupons, in order to qualify for certain offers. When is such nonprice screening optimal? 

We consider a special case of the model introduced in \citet{yang2022costly}. A consumer's payoff is given by \[\underbrace{u(x, t)}_\text{utility of consuming quality $x$} - \underbrace{c(y, t)}_\text{disutility of costly action $y$} - \quad \underbrace{p}_\text{price}\]
where $x \in \{x_1, \dots, x_n\} =: \mathcal{X} \subset \R_{++}$ denotes the quality and $y \in \{y_1, \dots, y_m\} =: \mathcal{Y} \subset \R_{++}$ denotes the costly action, with the normalization $u(0, t) = c(0, t) = 0$. The seller's payoff is given by $-C(x) + p$, where $C(\,\cdot\,)$ is the production cost function. From \citet{yang2022costly}, we know that if $c(y, t)$ is \textit{nonincreasing} in $t$, then the optimal deterministic mechanism does not use the costly instruments (i.e. $y(t) = 0$ for all types $t$).

In this section, we consider the opposite case where $c(y, t)$ is strictly \textit{increasing} in $t$. We also restrict attention to deterministic mechanisms $(x, y, p): T \rightarrow (\{0\} \cup \X) \times (\{0\} \cup \Y) \times \R$. We say that \textit{costly screening is optimal} if every optimal mechanism requires a positive mass of consumers to perform some costly action $y \in \Y$. 

Let $\pi(x, q)$ be the profit function of selling quality $x$ alone and $D^*(x)$ the corresponding sales volume as in \Cref{subsec:supply}. We strengthen the quasi-concavity assumptions in \Cref{subsec:supply} to strict concavity and assume that $D^*(x) < 1$ for all $x \in \X$. To state our result, for any costly action $y \in \Y$, we introduce the following hypothetical profit function: 
\[\pi(y, q) := c\big(y, F^{-1}(1-q)\big) \cdot q\]
which is the profit that the seller would obtain if she were to sell the right to \textit{opt out of} the costly action $y$ to a mass $q$ of individuals. Suppose that $\pi(y, q)$ is strictly concave in $q$. Let $D^*(y)$ be the unique sales volume that maximizes this hypothetical profit function.

Let $\mathcal{S}$ be the set of allocations that involve costly actions but generate a positive total surplus for at least some type $t$:
\[\mathcal{S}:= \big\{(x, y) \in \X \times \Y: u(x, t) - c(y, t) - C(x) > 0 \text{ for some type $t$}\big\}\,.\]
Note that if there exists some $(x, y) \in \mathcal{S}$ such that the net value $u(x, t) - c(y, t)$ is strictly \textit{decreasing} in $t$, then costly screening is optimal.

Now, to consider the more interesting case, suppose that (i) the net value $u(x, t) - c(y, t)$ is strictly increasing in $t$ and (ii) the net profit function $\pi(x, q) - \pi(y, q)$ is strictly quasi-concave in $q$ for all $(x, y) \in \mathcal{S} \cup \{(x^*, y^*)\}$, where $x^* := \max\{\argmax_x D^*(x)\}$ and $y^* := \min \{ \argmin_y D^*(y)\}$. Then, we have the following result:
\begin{prop}\label{prop:screening}
Costly screening is optimal if and only if $\min_y D^*(y) < \max_{x} D^*(x)$. 
\end{prop}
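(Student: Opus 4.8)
The key observation is that this costly-screening model maps exactly onto the bundling model of Section \ref{sec:model}. The plan is to set up the translation so that Proposition \ref{prop:screening} follows from Corollary \ref{cor:minmax} (via Theorem \ref{thm:po} and Theorem \ref{thm:char}). Specifically, I would build a bundling instance whose ``bundles'' are the feasible allocations $(x,y)$ (together with the null allocation $\emptyset$), whose value function is the net value $v((x,y),t) := u(x,t) - c(y,t)$ when positive (and $0$ otherwise), and whose cost function is $C((x,y)) := C(x)$. The set-inclusion order on bundles must be replaced by the natural partial order on allocations: $(x,y) \preceq_{\text{incl}} (x',y')$ iff $x \le x'$ and $y \ge y'$ (you get a better quality with a weaker costly requirement). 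Under this identification, ``nested bundling'' corresponds to a menu that is a chain in this order, and ``pure bundling'' (selling only the grand bundle) corresponds to selling only $(x_n, 0)$ — i.e. \emph{no costly screening}. Thus ``costly screening is optimal'' is exactly the statement that the minimal optimal menu is \emph{not} $\{(x_n,0)\}$, equivalently that pure bundling is \emph{not} optimal.

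**Key steps, in order.** First, I would verify that the model's standing assumptions translate: (i) $v((x,y),t)$ is monotone in the allocation order and strictly increasing in $t$ whenever positive — this uses that $u$ is increasing in quality/in $t$ and $c$ is increasing in $y$/in $t$; (ii) the profit function $\pi((x,y),q) = (P((x,y),q) - C(x))q$ is strictly quasi-concave — here I would exploit the hypothesis that $\pi(x,q)-\pi(y,q)$ is strictly quasi-concave together with the assumed strict concavity of $\pi(x,\cdot)$ and of the hypothetical $\pi(y,\cdot)$, plus the observation that the demand curve for net value $u(x,\cdot)-c(y,\cdot)$ decomposes appropriately; (iii) the incremental-value and incremental-profit conditions hold along chains, which is where the hypothesis ``$\pi(x,q)-\pi(y,q)$ strictly quasi-concave for all $(x,y)\in\mathcal S\cup\{(x^*,y^*)\}$'' does its work (via the footnote to Theorem \ref{thm:char} that lets us restrict the incremental assumptions to pairs involving the relevant menu). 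Second, I would observe that in this allocation order the nesting condition is automatic — any two undominated allocations are comparable because the order is total on the ``relevant frontier'' — so Theorem \ref{thm:po} applies and the minimal optimal menu equals the set of undominated allocations. Third, I would compute $D^*$ for the two extremal allocations: the best-selling allocation is $(x^*, 0)$ with $D^*((x^*,0)) = \max_x D^*(x)$ (no costly action can only lower value, hence lower sales), and I would identify the ``grand'' allocation $(x_n, 0)$, whose role in Corollary \ref{cor:minmax} is that of the least-selling bundle. The remaining arithmetic is to show that an allocation $(x_n, y)$ with $y\neq 0$ enters the optimal menu precisely when $D^*((x_n,y))$ is small enough — and since $D^*((x_n,y))$ is governed by $D^*(y)$ through the net-value demand curve, the threshold is exactly $\min_y D^*(y) < \max_x D^*(x)$.

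**Main obstacle.** The routine direction is ``only if'': if $\min_y D^*(y) \ge \max_x D^*(x)$, then every allocation involving a costly action is dominated by some quality-only allocation (intuitively, requiring $y$ shrinks sales below even the best quality-only sales volume, while the set-inclusion order places $(x,y) \subset (x,0)$), so Theorem \ref{thm:po} plus Theorem \ref{thm:char}(ii) force the minimal optimal menu to contain only quality-only allocations — and by the analysis of Section \ref{subsec:supply} costly screening is not used. The harder direction is ``if'': given $\min_y D^*(y) < \max_x D^*(x)$, I must exhibit a costly action $y^*$ and quality $x^*$ that survive into the minimal optimal menu. The natural candidate is the pair $(x^*, y^*)$ in the hypothesis, and the crux is to show $(x^*,y^*)$ is \emph{undominated} — that is, no allocation $(x',y')$ with $x^*\le x'$, $y^*\ge y'$ has weakly larger $D^*$. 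The subtlety is that $D^*((x',y'))$ depends jointly on $u(x',\cdot)$ and $c(y',\cdot)$, not separably, so I would need a careful single-crossing / comparison argument — most cleanly, show that $D^*((x^*,y^*)) > \min_y D^*(y)$ is impossible to beat from above in the allocation order, using that $x^*$ maximizes $D^*(x)$ and $y^*$ minimizes $D^*(y)$ and that combining a costly action with a quality can only pull the sales volume down toward the $D^*(y)$ side. Once $(x^*,y^*)$ is shown undominated, Corollary \ref{cor:minmax} (or directly Theorem \ref{thm:char}) guarantees it is in the minimal optimal menu, so a positive mass of consumers performs a costly action, i.e. costly screening is optimal. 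I expect reconciling the two different partial orders — set-inclusion on the underlying bundle $\{x, \text{action bits}\}$ versus the ``quality up, cost down'' order — and checking that dominance is preserved across the translation to be the most delicate bookkeeping in the argument.
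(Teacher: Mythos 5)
Your mapping idea is in the right spirit, but there are several genuine gaps that would make the argument fail as written.

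First, the claim that ``costly screening is optimal'' is equivalent to ``pure bundling is not optimal'' is incorrect. A quality-only mechanism offering several quality tiers $\{(x_{i_1},0),\dots,(x_{i_k},0)\}$ is not pure bundling, yet it involves no costly screening. Costly screening being optimal means \emph{some} allocation with $y\neq 0$ must be used, which is a different (and strictly stronger, in one direction) statement. So \Cref{cor:minmax} about the grand bundle/pure bundling cannot carry the argument by itself.

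Second, your proposed partial order on allocations, $(x,y)\preceq (x',y')$ iff $x\le x'$ and $y\ge y'$, uses the numerical order on $\mathcal Y$. But the paper's remark following \Cref{prop:screening} explicitly drops any monotonicity of $c(y,t)$ in $y$: different consumers may rank costly actions differently. Under your order, the model requirement that the value $u(x,t)-c(y,t)$ be nondecreasing in the bundle order would fail whenever $c$ is non-monotone in $y$. The paper's construction sidesteps this entirely by making $(x,y_j)$ and $(x,y_{j'})$ \emph{incomparable} for $j\neq j'$; only $(x,y)\subset (x',0)$ for $x\le x'$ is a genuine set-inclusion relation.

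Third, your plan for the ``if'' direction — show $(x^*,y^*)$ is undominated, then invoke \Cref{cor:minmax} to put it in the minimal optimal menu — has two problems. Being undominated is necessary but not sufficient for membership in the minimal optimal menu (\Cref{alg:alg1} can discard undominated bundles), and \Cref{cor:minmax} presupposes a minimally optimal \emph{nested} menu exists, which needs the nesting condition — something you have not verified and which may in fact fail here. The paper's proof avoids both issues: it assumes for contradiction that an optimal mechanism uses no costly action, obtains from \Cref{prop:D-hat} a nested quality-only menu $B$ with $x^*$ as its smallest element, and then uses the \emph{necessity} part of \Cref{thm:char} to show $B$ cannot be minimally optimal, by exhibiting $(x^*,y^*)\notin B$ with $D^*(x^*,y^*)>D^*(b)$ for every $b\in B$. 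The crux — which your proposal never reaches — is the chain $D^*(y^*)<D^*(x^*)<D^*(x^*,y^*)$, which follows from the assumed strict quasi-concavity of $\pi(x^*,\cdot)$, $\pi(y^*,\cdot)$, and $\pi(x^*,\cdot)-\pi(y^*,\cdot)$ together with the sign of $D^*(y^*)-D^*(x^*)$. The ``only if'' direction is handled symmetrically, showing $D^*(y')\ge D^*(x')\ge D^*(x',y')$ so every costly allocation is dominated by its undamaged counterpart, at which point \Cref{thm:po} (applied to the surviving nested quality-only bundles) finishes the job. You should rework the argument around this inequality chain and the explicit $(n+m)$-item encoding rather than an ad hoc reordering of the allocation space.
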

\begin{proof}[Proof of \Cref{prop:screening}] Without loss of generality, we may assume that the seller only uses allocations in $\mathcal{S} \cup \{(x, 0): x \in \X\}$. Indeed, for any option that involves an allocation $(x, y)$ such that the surplus $u(x, t) - C(x) - c(y, t) \leq 0$ for all types $t$, the seller must obtain a non-positive profit (because of the IR constraints). But the seller can always weakly improve by simply removing all options with non-positive profits from the menu.

Now, we map this problem into a bundling problem as follows. Consider a bundling problem with $n + m$ many items, where the first $n$ items represent \textit{quality upgrades} exactly as in \Cref{subsec:supply}, and the remaining $m$ items represent \textit{opting out of} each of the $m$ costly activities. Specifically, for any $(x_i, y_j) \in \mathcal{S}$, we define 
\[v\Big(\big\{1, \dots, i\big\}\cup \{n+1, \dots, n+m\big\} \backslash \big\{n+j\big\} , t\Big) := u(x_i, t) - c(y_j, t)\,,\]
with $v\big(\big\{1, \dots, i\big\}\cup \{n+1, \dots, n+m\big\} , t\big) := u(x_i, t)$ being the value of quality $x_i$ without any costly action. We can map the production costs accordingly and let $v(b, t) = C(b) = 0$ for bundles $b$ that are not of the above form (i.e. ignore them when applying our results). One can verify that this bundling problem satisfies our assumptions. With a slight abuse of notation, we also write $(x, y)$ as the bundle of quality $x$ and costly action $y$, and write $D^*(x, y)$ as the corresponding sales volume of this ''damaged good'' when sold alone. 

\textbf{The ``if'' part:} Suppose that $\min_y D^*(y) < \max_{x} D^*(x)$. Suppose for contradiction that there exists an optimal deterministic mechanism that does not use any costly instruments. Then, by \Cref{prop:D-hat}, there exists an optimal, nested menu $B$ such that $x^* = \max\big\{\argmax_x D^*(x)\big\}$ is the smallest non-empty bundle in $B$. Let $y^* = \min\big\{\argmin_y  D^*(y)\big\}$. By assumption, $D^*(y^*) < D^*(x^*)$. Because $\pi(x^*, q)$, $\pi(y^*, q)$, and $\pi(x^*, q) - \pi(y^*, q)$ are strictly quasi-concave, this implies that
\[ D^*(y^*) < D^*(x^*)  < D^*({x^*, y^*}) \,.\]
So we have that $(x^*, y^*) \in \mathcal{S}$ and that the bundle $(x^*, y^*)$ sells more than the bundle $x^*$ does when both are sold alone. Since the bundle $(x^*, y^*)$ is a subset of the bundle $x^*$, this implies that the menu $B$ cannot be optimal by \Cref{thm:char}.\footnote{The necessity part of \Cref{thm:char} involves adding probabilistic assignments but the assignments can be made deterministic if the bundle $b_1 \not\in B$ is a subset of the smallest non-empty bundle in menu $B$.} A contradiction.

\textbf{The ``only if'' part:} Suppose that $\min_y D^*(y) \geq \max_{x} D^*(x)$. Then, for all $(x', y') \in \mathcal{S}$,
\[D^*(y') \geq \min_y D^*(y) \geq \max_{x} D^*(x) \geq D^*(x')\,.\]
Because $\pi(x', q)$, $\pi(y', q)$, and $\pi(x', q) - \pi(y', q)$ are strictly quasi-concave, this implies that
\[D^*(y')  \geq D^*(x') \geq D^*(x', y')\,.\]
So the bundle $(x', y')$ is dominated by the bundle $x'$. Eliminating these dominated bundles leaves a nested menu consisting of only bundles that involve no costly action. By \Cref{thm:po}, the resulting menu is optimal (among all stochastic mechanisms and hence also optimal among all deterministic mechanisms). Thus, costly screening is not optimal. 
\end{proof}

\begin{rmk}
Note that our model does not require $c(y, t)$ to be monotone in $y$, nor does it impose an increasing differences condition on $c(y, t)$.
Thus, this model allows different consumers to have different ordinal rankings for the costly actions. In addition, as the proof shows, for the ``if'' part of \Cref{prop:screening}, it suffices to assume the monotonicity of $u(x, t) - c(y, t)$ and the quasi-concavity of $\pi(x, q) - \pi(y, q)$ for a single pair $(x^*, y^*)$. 
\end{rmk}

\begin{rmk}\label{rmk:costela}
To interpret \Cref{prop:screening}, recall that 
\[D^*(y) = \argmax_q \Big\{c\big(y, F^{-1}(1-q)\big) \cdot q\Big\}\,.\]
Note that $D^*(y)$ is smaller if the elasticity of disutility $c(y, t)$ with respect to type $t$ is pointwise higher (i.e. the costly action is differentially more costly for a higher type). That is, we have 
\[\frac{\d \log c(y_1, t)}{\d \log t} \leq \frac{\d \log c(y_2, t)}{\d \log t} \, \text{ for all $t$ } \implies  D^*(y_1) \geq D^*(y_2) \,.\]
\Cref{prop:screening} shows that, under our assumptions, costly screening is optimal if and only if there exists a costly action $y$ with sufficiently high elasticity of disutility in a precise sense: $D^*(y)$ is smaller than $\max_{x} D^*(x)$. \Cref{fig:screening} illustrates. 

The intuition behind this characterization can be understood from our main results. In the absence of costly screening, by \Cref{prop:D-hat}, we know that the best-selling quality $x^* := \max\{\argmax_x D^*(x)\}$ would be optimally offered as the base quality level in the menu. If there exists a costly action $y$ such that $D^*(y) < D^*(x^*)$, then the ``damaged good'' $(x^*, y)$ --- which requires action $y$ to purchase quality $x^*$ --- has an even higher sales volume. Intuitively, this is because the costly action $y$ compresses the distribution of values for the ``damaged good.'' Thus, the ``damaged good'' $(x^*, y)$ can be profitably included in the menu to expand the market by \Cref{thm:char}. On the other hand, if $D^*(y) \geq D^*(x^*)$, then we have $D^*(y) \geq D^*(x^*) \geq D^*(x)$ for all qualities $x$, since quality $x^*$ is the best-selling quality. This implies that any ``damaged good'' $(x, y)$ has a lower sales volume than its ``undamaged version'' $x$. When mapped into our bundling problem, this means that every such bundle is dominated by some bundle involving no costly action. The bundles involving no costly action are nested by construction because they represent different quality levels, and thus, costly screening is suboptimal by \Cref{thm:po}.
\end{rmk}

\begin{figure}
    \centering
    \includegraphics[scale=0.62]{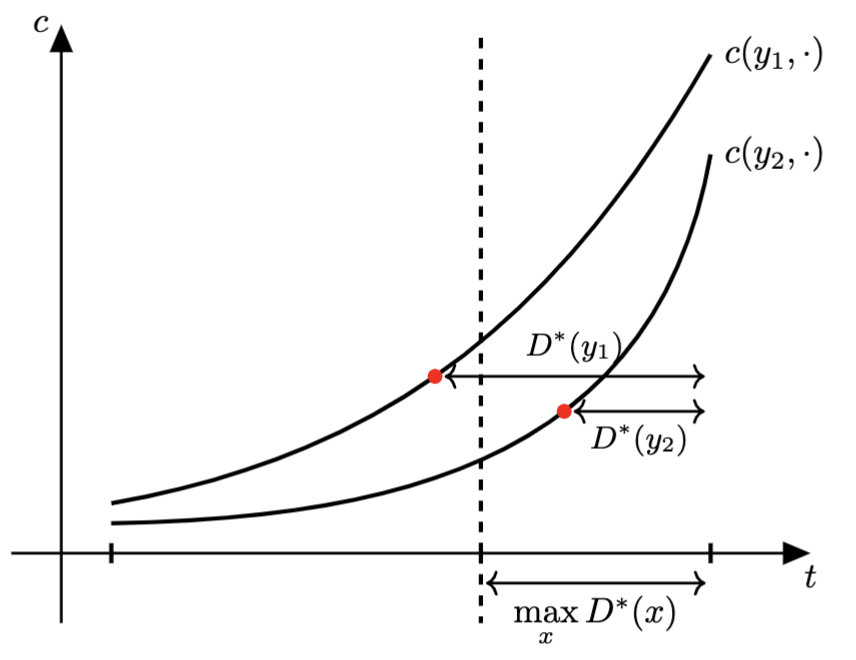}
    \caption{Illustration of \Cref{prop:screening}}
    \label{fig:screening}
\end{figure}

\section{Conclusion}\label{sec:conclusion}

This paper studies optimal bundling when consumers differ in one dimension. We introduce a partial order on the set of bundles defined by (i) set inclusion and (ii) sales volumes. This simple partial order turns out to characterize optimal bundling strategies. We show that if the set of undominated bundles is nested, then nested bundling is optimal. In particular, selling the set of undominated bundles is optimal. The proof uses a novel characterization of the upper envelope of the MR curves. 

We apply these insights to connect optimal bundling and quality design to price elasticities and cost structures. We also show how these insights can be applied to other multidimensional mechanism design problems. In particular, we obtain a necessary and sufficient condition for the optimality of costly screening when the principal has access to both price and nonprice screening instruments.

\setlength\bibsep{12pt}
\bibliographystyle{ecta} 
\newpage
\bibliography{references}

\newpage
\appendix
\crefalias{section}{appendix}

\section{Proof of the Main Result}\label{sec:proof}

This appendix provides the proof of the main result, following the proof sketch outlined in \Cref{sec:sketch}.

\subsection{Proof of \texorpdfstring{\Cref{thm:po}}{}}\label{subsec:proof-po}
To begin, we explicitly construct a menu and its associated prices as follows:

\vspace{0.5cm}

\hspace{-0.6cm}\fbox{\begin{minipage}{38em}
\begin{alg}[Minimal optimal menu]\label{alg:alg1} \text{ }
\begin{itemize}
    \item[1.] Let $U$ be the set of undominated bundles (which by assumption are nested). 
    \item[2.] Order the bundles in $U$ such that $U = \{\emptyset, b_1,\dots, b_m\}$ where $b_i \subset b_j$ for all $i < j$. 
    \item[3.] Initialize $B = \{\emptyset\}$,  $q(b) = 1$ for all $b \in U$, and $i = 1$.
    \item[4.] While $i \leq m$:
        \begin{itemize}
            \item[(i)]  Let $\hat{b}$ be the largest bundle in $B$. 
            \item[(ii)] Optimize quantity $q^* \in [0, q(\hat{b})]$ for incremental profit $\pi(b_i, q) - \pi(\hat{b}, q)$
            \item[(iii)] If $q^* = q(\hat{b})$: Update $B = B \backslash \{\hat{b}\}$.
            \item[(iv)] If $q^* = 0$: Update $i = i + 1$.
            \item[(v)]  If $0 < q^* < q(\hat{b})$ or $B = \emptyset$: Update $B = B \cup \{b_i\}$, $q(b_i) = q^*$, and $i = i + 1$.
        \end{itemize}
    \item[5.] Return $B$ and $\{q(b)\}_{b\in B}$. 
\end{itemize}
\quad \textbf{Prices:} Order the bundles in menu $B$ such that $B = \{\emptyset, b_1, \dots, b_l\}$ where $b_1 \subset \cdots \subset b_l$. We construct the corresponding prices using upgrade prices. Specifically, the upgrade price from bundle $b_i$ to bundle $b_{i+1}$ is set such that, when all consumers have bundle $b_i$ and no other bundles are available in the market, the quantity of the incremental bundle $b_{i+1} \backslash b_{i}$ sold equals $q(b_{i+1})$.
\end{alg}
\end{minipage}
}

\vspace{0.5cm}

In the remainder of the proof, we show that: (i) selling the constructed menu $B$ with the associated prices is optimal among all stochastic mechanisms, and (ii) the menu $B$ is minimally optimal. We defer the proof of the uniqueness part of the statement to the end.

Following \citet{Myerson1981}, let 
\[\phi(b, t) := v(b, t) - C(b) - \frac{1 - F(t)}{f(t)} v_t(b, t)\]
be the \textit{virtual surplus} function for bundle $b$. As observed by  \citet{bulow1989simple}, this can be interpreted as the \textit{marginal profit} for bundle $b$ evaluated at the quantity such that the marginal consumer is of type $t$. 

\begin{lemma}\label{lem:relaxed}
Let $a^*$ be a solution to 
\[\sup_{a:\, T \rightarrow \Delta(\mathcal{B})} \E\Bigg [\sum_b a_b(t) \phi(b, t) \Bigg]\,. \label{eq:relaxed}\]
If there exists a pricing rule $p^*$ such that (i) $(a^*, p^*)$ is a mechanism and (ii) type $\underline{t}$ gets zero payoff under $(a^*, p^*)$, then $(a^*, p^*)$ is optimal. 
\end{lemma}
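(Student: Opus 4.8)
The plan is the classical \emph{relaxation-and-verification} scheme. Write $V^* := \sup_{a:\,T\to\Delta(\mathcal{B})}\E\big[\sum_b a_b(t)\phi(b,t)\big]$ for the value of the relaxed problem; since the integrand is linear in $a(t)$ and $\mathcal{B}$ is finite, this supremum is finite, equals $\E\big[\max_{b}\phi(b,t)\big]$, and is attained (e.g.\ by any $a^*$ as in the statement). I would first show that $V^*$ is an upper bound on the seller's profit across all mechanisms, and then show that the candidate $(a^*,p^*)$ achieves exactly $V^*$.

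For the upper bound, I would invoke the revenue-equivalence identity that the excerpt has already recorded as valid in this setting: for any mechanism $(a,p)$, incentive compatibility forces the interim payoff $U(t):=v(a(t),t)-p(t)$ to satisfy the envelope representation $U(t)=U(\underline t)+\int_{\underline t}^{t}v_t(a(s),s)\,ds$, so substituting $p(t)=v(a(t),t)-U(t)$ into the expected profit $\E[p(t)-C(a(t))]$ and applying Fubini to the rent term yields
\[
\E\big[p(t)-C(a(t))\big]=\E\Big[\sum_b a_b(t)\phi(b,t)\Big]-U(\underline t)\,.
\]
Individual rationality applied to the lowest type gives $U(\underline t)\ge 0$, so the profit of every mechanism is at most $\E[\sum_b a_b(t)\phi(b,t)]\le V^*$; hence $V^*$ is an upper bound on the optimal profit.

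It then remains to check attainment. By hypothesis $(a^*,p^*)$ is a mechanism, so the displayed identity applies to it, and by hypothesis type $\underline t$ gets zero payoff, i.e.\ $U(\underline t)=0$; hence the profit of $(a^*,p^*)$ equals $\E[\sum_b a^*_b(t)\phi(b,t)]$, which is $V^*$ because $a^*$ solves the relaxed problem. Thus $(a^*,p^*)$ attains the upper bound $V^*$ and is optimal, which is the claim. I do not anticipate a genuine obstacle here: the only point needing care is the handling of the boundary rent $U(\underline t)$ — namely that IC yields the envelope formula so that revenue equivalence holds, and that IR collapses (via monotonicity of $U$, which follows from $v_t\ge0$) to the single inequality $U(\underline t)\ge0$ — and both of these are essentially handed to us by the hypotheses and by the revenue-equivalence statement already invoked in the excerpt, so the proof is a short assembly of these facts.
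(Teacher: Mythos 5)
Your proof is correct and follows essentially the same route as the paper: the paper simply cites the envelope theorem (via Myerson, 1981) to get the revenue-equivalence identity and then notes the claim follows; you spell out the same two-step logic (upper bound via revenue equivalence and $U(\underline{t})\ge 0$, attainment by the hypotheses on $(a^*,p^*)$). One small remark: the monotonicity-of-$U$ point you raise is unnecessary for this lemma, since IR is assumed to hold for all types (so $U(\underline t)\ge0$ is immediate) and the candidate $(a^*,p^*)$ is assumed to be a mechanism, so you never need to deduce IR from a single type.
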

\begin{proof}[Proof of \Cref{lem:relaxed}]
By the envelope theorem, under any mechanism $(a, p)$ that gives type $\underline{t}$ zero payoff, the profit to the seller must be $\E [\sum_b a_b(t) \phi(b, t) ]$ (see \citealt{Myerson1981}). The claim follows. 
\end{proof}

\paragraph{Step 1.}\hspace{-2mm}Let $t^*(b) := F^{-1}(1 - D^*(b))$ be the unique cutoff type for bundle $b$, which is where $\phi(b, t)$ crosses $0$ from below. Clearly, 
\[D^*(b_1) \geq D^*(b_2) \iff t^*(b_1) \leq t^*(b_2)
\,.\]

A function $g(t)$ \textit{strictly single-crosses (from below)} another function $h(t)$ on an interval $[t_1, t_2]$ if $g(t) \geq h(t) \implies g(t') > h(t')$ for all $t < t' \in [t_1, t_2]$.  

\begin{lemma}\label{lem:cross1}
For any $b_1 \subset b_2$, $\phi(b_2, t)$ strictly single-crosses $\phi(b_1, t)$ on $[\max(t^*(b_1), t^*(b_2)), \overline{t}]$.
\end{lemma}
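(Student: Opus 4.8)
The plan for \Cref{lem:cross1} is to reduce the claimed single-crossing property to the (assumed) local quasi-concavity of the incremental profit function, via the standard correspondence between virtual surplus in type space and marginal profit in quantity space. The first step is to observe that, unpacking the definition of the virtual surplus,
\[\phi(b_2,t)-\phi(b_1,t) = \big(v(b_2,t)-v(b_1,t)\big)-\big(C(b_2)-C(b_1)\big)-\frac{1-F(t)}{f(t)}\big(v_t(b_2,t)-v_t(b_1,t)\big)\]
is exactly the Myerson virtual surplus of the auxiliary monopoly problem of selling the incremental bundle $b_2\backslash b_1$, with (monotone) incremental value $v(b_2,\cdot)-v(b_1,\cdot)$, incremental cost $C(b_2)-C(b_1)$, and the same type distribution $F$. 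By the marginal-profit interpretation of virtual surplus recalled above (\citealt{bulow1989simple}), this coincides with the marginal profit of that auxiliary problem, that is,
\[\phi(b_2,t)-\phi(b_1,t)=\frac{\partial}{\partial q}\big[\pi(b_2,q)-\pi(b_1,q)\big]\Big|_{q=1-F(t)}.\]
I would include a short verification of this identity: on the set of types where $v(b_i,t)>0$, each $v(b_i,\cdot)$ is strictly increasing, so the change of variables $q=1-F(t)$ satisfies $P(b_i,q)=v(b_i,t)$, and differentiating $\pi(b,q)=(P(b,q)-C(b))q$ and substituting returns $\phi(b,t)$.

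Granting the identity, the conclusion follows almost mechanically. Put $\overline{q}:=\min(D^*(b_1),D^*(b_2))$; under $q=1-F(t)$ the interval $[\max(t^*(b_1),t^*(b_2)),\overline{t}]$ is precisely the (order-reversing) image of $[0,\overline{q}]$, which is the interval on which both $\pi(b_1,\cdot)$ and $\pi(b_2,\cdot)$ are increasing and on which the incremental profit function $\Delta\pi(q):=\pi(b_2,q)-\pi(b_1,q)$ is, by hypothesis, strictly quasi-concave. A strictly quasi-concave function on an interval increases strictly up to its unique maximizer $q^\star$ and decreases strictly afterwards, and, using the standing convention that the first-order condition is met only at the maximum, its derivative is strictly positive on $[0,q^\star)$ and strictly negative on $(q^\star,\overline{q}]$. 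Pulling this back through the strictly decreasing bijection $t\mapsto 1-F(t)$: with $g(t):=\phi(b_2,t)-\phi(b_1,t)=\Delta\pi'(1-F(t))$ and $\tilde t:=F^{-1}(1-q^\star)$, we obtain $g<0$ on $[\max(t^*(b_1),t^*(b_2)),\tilde t)$ and $g>0$ on $(\tilde t,\overline{t}]$, which is exactly the asserted strict single crossing from below.

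The step I expect to be the main obstacle is the first one — pinning down the marginal-profit identity on the whole interval, which rests on $v(b_1,t)$ and $v(b_2,t)$ being strictly positive there so that each demand curve is the reflection of $F$ through $v(b_i,\cdot)$. This positivity holds because $\phi(b_i,t)>0$ for every $t>t^*(b_i)$, whereas $v(b_i,t)=0$ would force $\phi(b_i,t)=-C(b_i)-\frac{1-F(t)}{f(t)}v_t(b_i,t)\le 0$, a contradiction; the left endpoint is then handled by continuity. The only remaining loose ends are the degenerate configurations in which $q^\star$ equals $0$ or $\overline{q}$, in which case $g$ has a single sign on the whole interval and the single-crossing claim holds trivially.
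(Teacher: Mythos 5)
Your proof is correct and follows essentially the same route as the paper's: identify $\phi(b_2,t)-\phi(b_1,t)$ with the marginal incremental profit $\partial_q[\pi(b_2,q)-\pi(b_1,q)]$ at $q=1-F(t)$, map $[\max(t^*(b_1),t^*(b_2)),\overline t]$ to $[0,\min(D^*(b_1),D^*(b_2))]$ under the order-reversing change of variables, and read off single crossing from strict quasi-concavity (with the FOC-only-at-the-max convention). The paper simply asserts the marginal-profit identity and says "the claim follows immediately"; you have filled in the verification of that identity and the supporting positivity of $v(b_i,\cdot)$ on the relevant interval, which the paper treats as an observation.
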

\begin{proof}[Proof of \Cref{lem:cross1}]
Observe that    
\[\phi(b, t) = \frac{\d \pi(b, q)}{\d q}\bigg |_{\,\,q = 1- F(t)} \,.\]
By assumption, $\pi(b_2, q) - \pi(b_1, q)$ is strictly quasi-concave on $[0,\min(D^*(b_1), D^*(b_2))]$ and has at most one stationary point in $[0, \min(D^*(b_1), D^*(b_2))]$ which if exists is the maximum point in $[0, \min(D^*(b_1), D^*(b_2))]$. The claim follows immediately.
\end{proof}

For any $b_1\subset b_2$, let 
\[s(b_1, b_2) := \inf_{s}\Big \{s \in T: \phi(b_2, t) > \phi(b_1, t) \text{ for all $t > s$}\Big\}\]
be the \textit{last crossing type} of the virtual surplus functions for $b_1$ and $b_2$. Note that $s(b_1, b_2)$ is always well-defined. Let
\[\chi(b_1, b_2) := \phi(b_1, s(b_1, b_2))\]
be the \textit{last crossing virtual surplus} of $b_1$ and $b_2$. The next lemma shows that the sign of the last crossing virtual surplus is determined by the ranking of the sales volumes:

\begin{lemma}[Virtual surplus and sales volumes]\label{lem:cross2}
For any $b_1 \subset b_2$, 
\[\sign\Big[\chi(b_1, b_2)\Big]  = \sign\Big[D^*(b_1) - D^*(b_2) \Big]\,.\]
\end{lemma}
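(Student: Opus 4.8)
The plan is to prove the sign equivalence by fixing the ``relative configuration'' of the two curves $\phi(b_1, \cdot)$ and $\phi(b_2, \cdot)$ and then reading off the sign of $\chi(b_1,b_2)$ from the ordering of the cutoffs $t^*(b_1)$ and $t^*(b_2)$. Recall from Step 1 that $D^*(b_1) \geq D^*(b_2) \iff t^*(b_1) \leq t^*(b_2)$, so it suffices to show $\chi(b_1,b_2) \leq 0 \iff t^*(b_1) \leq t^*(b_2)$ (with the strict/zero cases lining up). The two ingredients already in hand are: (a) $\phi(b,t)$ strictly single-crosses $0$ from below at $t^*(b)$, by strict quasi-concavity of $\pi(b,\cdot)$; and (b) by \Cref{lem:cross1}, $\phi(b_2,t)$ strictly single-crosses $\phi(b_1,t)$ from below on $[\max(t^*(b_1),t^*(b_2)),\overline t]$. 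Since $s(b_1,b_2)$ is the last crossing type, $s(b_1,b_2)$ lies in this interval, and in fact $s(b_1,b_2)$ is precisely the (unique, if it lies in the interior) crossing point guaranteed by (b) — or else $\phi(b_2,\cdot) > \phi(b_1,\cdot)$ on the whole interval, in which case $s(b_1,b_2) = \max(t^*(b_1),t^*(b_2))$.

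First I would dispose of the boundary case: if $D^*(b_1) = D^*(b_2)$, then $t^*(b_1) = t^*(b_2) =: t^*$, and at this common cutoff both $\phi(b_1,t^*) = 0$ and $\phi(b_2,t^*) = 0$; I would argue that the last crossing type is exactly $t^*$ (using (b) on $[t^*,\overline t]$: to the right of $t^*$ we have $\phi(b_2,t) > 0 = \phi(b_2,t^*) \geq$ ... — more carefully, since $\phi(b_2,\cdot)$ crosses $\phi(b_1,\cdot)$ from below and they agree at $t^*$, they cannot agree again for $t > t^*$), so $\chi(b_1,b_2) = \phi(b_1,t^*) = 0$, matching $\sign[D^*(b_1) - D^*(b_2)] = 0$.

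Next, the generic case $t^*(b_1) \neq t^*(b_2)$. Suppose $t^*(b_1) < t^*(b_2)$ (the case $D^*(b_1) > D^*(b_2)$). On $[t^*(b_2), \overline t]$ we have $\phi(b_1,t) > 0$ (we are past the single-crossing point of $\phi(b_1,\cdot)$ with zero) and $\phi(b_2,t) \geq 0$, with $\phi(b_2,t^*(b_2)) = 0 < \phi(b_1, t^*(b_2))$. So at the left endpoint $t^*(b_2)$ of the interval in (b), $\phi(b_2) < \phi(b_1)$; by (b) the curves cross exactly once in the interior and $s(b_1,b_2)$ is that interior crossing point, which lies strictly to the right of $t^*(b_2)$, hence in the region where $\phi(b_1,\cdot) > 0$; therefore $\chi(b_1,b_2) = \phi(b_1, s(b_1,b_2)) > 0$. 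Symmetrically, if $t^*(b_1) > t^*(b_2)$, then at $t^*(b_1)$ — the left endpoint of the interval in (b) — we have $\phi(b_1, t^*(b_1)) = 0 < \phi(b_2, t^*(b_1))$ (since $\phi(b_2,\cdot)$ is already positive past its own smaller cutoff), so $\phi(b_2) > \phi(b_1)$ holds at the left endpoint and, by the single-crossing-from-below in (b), holds on the entire interval; thus the last crossing type is $s(b_1,b_2) = t^*(b_1)$ itself and $\chi(b_1,b_2) = \phi(b_1, t^*(b_1)) = 0$. But this would give $\chi = 0$ whenever $t^*(b_1) > t^*(b_2)$, contradicting the desired strict inequality $\chi < 0$ when $D^*(b_1) < D^*(b_2)$ — so this boundary-of-the-interval reasoning is too coarse, and the main obstacle is precisely here: I must track $\phi(b_1,\cdot)$ and $\phi(b_2,\cdot)$ on the larger interval $[t^*(b_2), t^*(b_1)]$ (where $\phi(b_2) > 0 > \text{or} \geq \phi(b_1)$ near $t^*(b_2)$ but $\phi(b_1) = 0$ at $t^*(b_1)$), and use the strict quasi-concavity / last-crossing definition carefully to locate $s(b_1,b_2)$ strictly \emph{below} $t^*(b_1)$, landing in the region where $\phi(b_1,\cdot) < 0$, which then yields $\chi(b_1,b_2) < 0$. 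The clean way to do this is the ``vary the horizontal axis'' device hinted at in the sketch (\Cref{fig:crossing}): fix the graph of $\phi(b_1,\cdot)$ relative to $\phi(b_2,\cdot)$, note that $s(b_1,b_2)$ is pinned down by the last intersection, and observe that sliding the zero-level crossing of $\phi(b_1,\cdot)$ (i.e. moving $t^*(b_1)$) left versus right of $t^*(b_2)$ flips whether that last intersection occurs where $\phi(b_1,\cdot)$ is positive or negative — with equality exactly when the two cutoffs coincide. I would write this out rigorously by a short case analysis on the position of $s(b_1,b_2)$ relative to $t^*(b_1)$, using (a) to convert ``position relative to $t^*(b_1)$'' into ``sign of $\phi(b_1,\cdot)$'' and (b) to guarantee there is only one relevant crossing.
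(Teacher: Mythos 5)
Your proposal follows the paper's own strategy: a case analysis on the relative order of $t^*(b_1)$ and $t^*(b_2)$, using \Cref{lem:cross1} to control where the last crossing can occur. The boundary case $t^*(b_1)=t^*(b_2)$ and the case $t^*(b_1)<t^*(b_2)$ are handled correctly. But in the case $t^*(b_1)>t^*(b_2)$ you have, by your own account, an unfilled gap, and you also make a false intermediate claim on the way to identifying it.

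Specifically, from ``$\phi(b_2)>\phi(b_1)$ at the left endpoint $t^*(b_1)$ and throughout $[t^*(b_1),\overline t]$'' you conclude that ``the last crossing type is $s(b_1,b_2)=t^*(b_1)$ itself.'' That inference is unjustified: the single-crossing statement on $[t^*(b_1),\overline t]$ tells you nothing about what happens to the left of $t^*(b_1)$, so all you may conclude is $s(b_1,b_2)\le t^*(b_1)$. You then correctly notice that the (wrong) equality $s=t^*(b_1)$ would give $\chi=0$ rather than $\chi<0$, flag the reasoning as ``too coarse,'' and outline a plan to fix it --- but the fix is not carried out, only gestured at via the figure. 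That is the gap.

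The gap is easy to close, and the fix is exactly what you hinted at (``locate $s(b_1,b_2)$ strictly below $t^*(b_1)$''). The missing ingredient is a one-line continuity argument. Since $t^*(b_1)>t^*(b_2)$ and $\phi(b_2,\cdot)$ strictly crosses $0$ from below at $t^*(b_2)$, we have the \emph{strict} inequality $\phi(b_2,t^*(b_1))>0=\phi(b_1,t^*(b_1))$. Because $\phi(b_2,\cdot)-\phi(b_1,\cdot)$ is continuous (continuity of $f$, continuous differentiability of $v$), this strict inequality persists on a left-neighborhood $(t^*(b_1)-\delta,\,t^*(b_1)]$. Combined with $\phi(b_2,t)>\phi(b_1,t)$ for all $t>t^*(b_1)$ (which you already derived from \Cref{lem:cross1}), this shows $\phi(b_2,t)>\phi(b_1,t)$ holds for all $t>t^*(b_1)-\delta$, hence $s(b_1,b_2)\le t^*(b_1)-\delta<t^*(b_1)$. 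Since $\phi(b_1,\cdot)<0$ strictly below $t^*(b_1)$, this gives $\chi(b_1,b_2)=\phi(b_1,s(b_1,b_2))<0$, as required. (Symmetrically, in the boundary case the equality $\phi(b_2,t^*)=\phi(b_1,t^*)$ is what \emph{prevents} $s$ from sliding left of $t^*$, because the strict inequality ``$\phi(b_2,t)>\phi(b_1,t)$ for all $t>s$'' must fail at $t=t^*$ for any $s<t^*$; this is the clean way to justify your claim $s(b_1,b_2)=t^*$ there.) One further small point: in the case $t^*(b_1)<t^*(b_2)$ you assert that the curves must cross ``exactly once in the interior,'' but they may not cross at all on $[t^*(b_2),\overline t]$; then $s(b_1,b_2)=\overline t$ and $\chi(b_1,b_2)=\phi(b_1,\overline t)>0$ (since $\overline t>t^*(b_1)$), so the conclusion still holds, but it is a separate sub-case, matching the paper's opening remark that the claim is immediate when $s$ is an endpoint.
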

\begin{proof}[Proof of \Cref{lem:cross2}]
Observe that the claim holds if either $s(b_1, b_2) = \underline{t}$ or $s(b_1, b_2) = \overline{t}$. Now suppose that $s(b_1, b_2) \in (\underline{t}, \overline{t})$. Observe that by \Cref{lem:cross1}, only the following three cases can happen:
\begin{itemize}
    \item[] Case (1): $s(b_1, b_2) > \max(t^*(b_1), t^*(b_2))$ and $t^*(b_1) < t^*(b_2)$\,;
    \item[] Case (2): $s(b_1, b_2) < \min(t^*(b_1), t^*(b_2))$ and $t^*(b_1) > t^*(b_2)$\,;
    \item[] Case (3): $s(b_1, b_2) = \min(t^*(b_1), t^*(b_2)) = \max(t^*(b_1), t^*(b_2))$\,.
\end{itemize}
This argument is depicted in \Cref{fig:crossing} (in \Cref{subsec:sketch-po}) in which we hold the relative position of $\phi(b_1, t)$ and $\phi(b_2, t)$ fixed and vary the $x$-axis. The claim follows immediately. 
\end{proof}

\paragraph{Step 2.}\hspace{-2mm}Let $U$ be the set of undominated bundles. We now show that any dominated bundle can be eliminated in the following sense:

\begin{lemma}[Elimination of dominated bundles]\label{lem:remove} We have 
\[\sup_{a:\, T \rightarrow \Delta(\mathcal{B})} \E\Big [\sum_b a_b(t) \phi(b, t) \Big] = \sup_{a:\, T \rightarrow U} \E\Big [\sum_b a_b(t) \phi(b, t) \Big]\,.\] 
\end{lemma}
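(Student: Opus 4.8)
The plan is to prove the inequality ``$\leq$'' since the reverse inequality ``$\geq$'' is trivial (any $a\colon T\to U$ is in particular an $a\colon T\to\Delta(\mathcal{B})$). For ``$\leq$'', by linearity of the objective in the assignment probabilities, it suffices to work with deterministic $a\colon T\to\mathcal{B}$, and then to show that pointwise in $t$ one can replace the assigned bundle with an undominated bundle without decreasing $\sum_b a_b(t)\phi(b,t)$; but since the issue is that the pointwise-optimal bundle need not be monotone, the cleaner route is to bound the integrand. Concretely, I would prove the following pointwise claim: for every dominated bundle $b_1$ there is an undominated $b_2\in U$ with $b_1\subset b_2$ such that $\phi(b_1,t)\leq\max\{0,\phi(b_2,t)\}$ for all $t\in T$. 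Granting this, for any deterministic $a\colon T\to\mathcal{B}$ define $a'(t)$ to equal $a(t)$ if $a(t)\in U$, and otherwise to equal either $\emptyset$ or the associated $b_2$ — whichever of $0$ or $\phi(b_2,t)$ is larger at that $t$; then $a'\colon T\to U$ and $\sum_b a'_b(t)\phi(b,t)\geq\sum_b a_b(t)\phi(b,t)$ for every $t$, so $\E[\sum_b a'_b(t)\phi(b,t)]\geq\E[\sum_b a_b(t)\phi(b,t)]$, which gives the claim.

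It remains to establish the pointwise claim, which is where the two ingredients flagged in the sketch come in. First, since $(\mathcal{B},\preceq)$ is a finite poset, any dominated $b_1$ is dominated by some maximal element of $\preceq$, i.e. there is an undominated $b_2$ with $b_1\preceq b_2$; by definition of $\preceq$ this means $b_1\subseteq b_2$ and $D^*(b_1)\leq D^*(b_2)$ (and $b_1\neq b_2$, so $b_1\subset b_2$). Second, by \Cref{lem:cross2}, $D^*(b_1)\leq D^*(b_2)$ gives $\sign[\chi(b_1,b_2)]=\sign[D^*(b_1)-D^*(b_2)]\leq 0$, i.e. $\chi(b_1,b_2)\leq 0$. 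Now I unpack what $\chi(b_1,b_2)\leq 0$ means together with \Cref{lem:cross1}: for $t>s(b_1,b_2)$ we have $\phi(b_1,t)<\phi(b_2,t)$ by definition of the last crossing type, so $\phi(b_1,t)\leq\max\{0,\phi(b_2,t)\}$ holds there; for $t\leq s(b_1,b_2)$, I claim $\phi(b_1,t)\leq 0$. Indeed, on $[\max(t^*(b_1),t^*(b_2)),\overline t]$, $\phi(b_2,t)$ strictly single-crosses $\phi(b_1,t)$ from below by \Cref{lem:cross1}; combined with the characterization of the three cases in the proof of \Cref{lem:cross2} and the fact that $\chi(b_1,b_2)=\phi(b_1,s(b_1,b_2))\leq 0$, one reads off that $\phi(b_1,\cdot)$ is non-positive on $[\underline t, s(b_1,b_2)]$ — because $s(b_1,b_2)\leq t^*(b_1)$ in that regime (this is Case (2) or Case (3) of \Cref{lem:cross2}), and $\phi(b_1,\cdot)$ crosses $0$ from below only at $t^*(b_1)$, so $\phi(b_1,t)\leq 0$ for all $t\leq t^*(b_1)$, in particular for all $t\leq s(b_1,b_2)$. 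Hence $\phi(b_1,t)\leq\max\{0,\phi(b_2,t)\}$ on all of $T$.

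The main obstacle is the case analysis in the last paragraph: one has to be careful that $\chi(b_1,b_2)\leq 0$ genuinely forces $s(b_1,b_2)\leq t^*(b_1)$ (ruling out the possibility that the last crossing happens at a point where $\phi(b_1,\cdot)$ has already gone positive). This is exactly where the single-crossing statement of \Cref{lem:cross1} and the ``FOC only at the maximum'' convention on quasi-concavity are doing the work — without single-crossing, $\phi(b_1,\cdot)$ and $\phi(b_2,\cdot)$ could oscillate and the last crossing could occur in a region where $\phi(b_1,\cdot)>0$. I would therefore state and prove as an auxiliary observation that $\chi(b_1,b_2)\leq 0$ implies $\phi(b_1,t)\leq\max\{0,\phi(b_2,t)\}$ for all $t$, isolating this geometric fact before assembling the replacement argument above. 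Everything else (the finite-poset maximality argument, the reduction to deterministic rules, and the final integration) is routine.
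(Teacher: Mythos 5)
Your argument is correct and follows the paper's own proof essentially step for step: reduce to deterministic rules by linearity, use finiteness of the poset to find an undominated dominator $b_2$, invoke \Cref{lem:cross2} to get $\chi(b_1,b_2)\leq 0$, and then conclude $\phi(b_1,t)\leq\max\{0,\phi(b_2,t)\}$ pointwise. The only cosmetic difference is where you split the real line for the final pointwise bound (at $s(b_1,b_2)$ rather than at $t^*(b_1)$, which is immaterial since $\chi(b_1,b_2)\leq 0$ forces $s(b_1,b_2)\leq t^*(b_1)$); the underlying geometric fact is the same.
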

\begin{proof}[Proof of \Cref{lem:remove}]
Note that by compactness and linearity, a solution to the left-hand side exists and uses only deterministic assignments. Let $a$ be any such a solution. Suppose that there exists a dominated bundle $b$ such that $a(t) = b$ for some $t \in T$ (otherwise the claim holds). First, observe that because the dominance relation $\preceq$ is a partial order on a finite set, any dominated bundle must also be dominated by some undominated bundle. So there exists $b' \in U$ that dominates $b$. 

Second, observe that by \Cref{lem:cross2}, we have $\chi(b, b') \leq 0$, and therefore $\phi(b', t) \geq \phi(b, t)$ for all $t \geq t^*(b)$, which implies that \[\max\big\{0,\phi(b', t) \big\} \geq \phi(b, t)\]
for all $t \in T$.  Therefore, by replacing $a(t)$ from $b$ to either $\emptyset$ or $b'$ for all types that are previously assigned $b$, we can weakly improve the objective value. Since this argument holds for all dominated bundles $b$, the claim follows. 
\end{proof}

\paragraph{Step 3.}\hspace{-2mm}Now, we construct a solution to 
\[\sup_{a:\, T \rightarrow U} \E\Bigg [\sum_b a_b(t) \Big(v(b, t) - C(b) - \frac{1 - F(t)}{f(t)} v_t(b, t) \Big) \Bigg]\]
and show that it coincides with the allocation rule induced by the menu $B$ and associated prices constructed by \Cref{alg:alg1}. Because the set of undominated bundles is nested, we can write $U = \{\emptyset, b_1, \dots, b_m\}$ with $b_1 \subset \cdots \subset b_m$. Moreover, we must also have 
\[D^*(b_1) > D^*(b_2) > \cdots > D^*(b_m)\,.\]
For all $j = 1, \dots, m$, let
\[V_j(t) := \max\Big\{0, \phi(b_1, t), \dots, \phi(b_j, t)\Big\}\]
be the \textit{envelope} of the virtual surplus functions of the $j$ smallest bundles (and the empty bundle). 

The next lemma shows that taking envelopes preserves single-crossing properties: 
\begin{lemma}[Single-crossing envelopes]\label{lem:envelope}
$\phi(b_{k}, t)$ strictly single-crosses $V_j(t)$ for all $j < k$.
\end{lemma}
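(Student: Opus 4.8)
The plan is to prove \Cref{lem:envelope} by combining (i) an elementary closure property of strict single-crossing under pointwise maxima with (ii) the pairwise crossing fact \Cref{lem:cross1}. First I would record the observation that if a function $g$ strictly single-crosses each of $h_1,\dots,h_r$ from below on an interval $I$, then $g$ strictly single-crosses $\max\{h_1,\dots,h_r\}$ on $I$: if $g(t)\ge\max_i h_i(t)$ at some $t\in I$, then $g(t)\ge h_i(t)$ for every $i$, hence $g(t')>h_i(t')$ for every $i$ and every $t'>t$ in $I$, so $g(t')>\max_i h_i(t')$. Applied with $V_j=\max\{0,\phi(b_1,\cdot),\dots,\phi(b_j,\cdot)\}$, this reduces the lemma to showing that $\phi(b_k,\cdot)$ strictly single-crosses each of $0,\phi(b_1,\cdot),\dots,\phi(b_j,\cdot)$ on the portion of $T$ where it matters.

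The main step is to localize. Since $V_j\ge 0$ everywhere, the inequality $\phi(b_k,t)\ge V_j(t)$ can hold only where $\phi(b_k,t)\ge 0$, i.e.\ only for $t\ge t^*(b_k)$ (recall $\phi(b_k,\cdot)$ crosses $0$ from below at $t^*(b_k)$ by strict quasi-concavity of $\pi(b_k,\cdot)$). Hence it suffices to prove the required single-crossing on $[t^*(b_k),\overline{t}]$. On that interval, fix any $i\le j$: the undominated bundles are nested, so $b_i\subset b_k$, and since $i<k$ the strict ordering $D^*(b_1)>\dots>D^*(b_m)$ noted above gives $D^*(b_i)>D^*(b_k)$, i.e.\ $t^*(b_i)<t^*(b_k)$; therefore $\max(t^*(b_i),t^*(b_k))=t^*(b_k)$, and \Cref{lem:cross1} gives that $\phi(b_k,\cdot)$ strictly single-crosses $\phi(b_i,\cdot)$ on $[t^*(b_k),\overline{t}]$ (it also trivially strictly single-crosses the constant $0$ there). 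To finish, take $t<t'$ in $T$ with $\phi(b_k,t)\ge V_j(t)$; then $t\ge t^*(b_k)$, so $t,t'\in[t^*(b_k),\overline{t}]$, and $\phi(b_k,t)\ge V_j(t)$ yields both $\phi(b_k,t)\ge 0$ and $\phi(b_k,t)\ge\phi(b_i,t)$ for every $i\le j$; the single-crossing facts just listed then give $\phi(b_k,t')>0$ and $\phi(b_k,t')>\phi(b_i,t')$ for every $i\le j$, so $\phi(b_k,t')>V_j(t')$. One can equivalently organize this as the induction on $j$ indicated in \Cref{subsec:sketch-po} via the decomposition $V_j=\max\{V_{j-1},\max\{0,\phi(b_j,\cdot)\}\}$; the content is the same.

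The step I expect to be the only real obstacle is the localization. \Cref{lem:cross1} delivers pairwise single-crossing merely on the truncated interval $[\max(t^*(b_i),t^*(b_k)),\overline{t}]$ --- on $[\underline{t},t^*(b_k)]$ the virtual-surplus curves may cross repeatedly under our weak local quasi-concavity assumption --- so one has to check that the region where $\phi(b_k,\cdot)$ could overtake $V_j$ sits inside the interval where \Cref{lem:cross1} applies. This is exactly what the ordering $t^*(b_i)<t^*(b_k)$ for $i<k$ buys us: it comes from the strictly decreasing sales volumes of the (undominated, nested) bundles $b_1,\dots,b_m$, and it collapses $\max(t^*(b_i),t^*(b_k))$ to $t^*(b_k)$, aligning the two intervals.
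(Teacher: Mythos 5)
Your proof is correct but takes a genuinely different route from the paper's. The paper proves \Cref{lem:envelope} by induction on $j$, decomposing $V_j = \max\{V_{j-1}, \max\{0, \phi(b_j, \cdot)\}\}$ and running a geometric case analysis on the relative positions of three crossing points (of $\phi(b_k,\cdot)$ with $\max\{0,\phi(b_j,\cdot)\}$, of $\phi(b_j,\cdot)$ with $V_{j-1}$, and of $\phi(b_k,\cdot)$ with $V_{j-1}$). You sidestep the induction entirely with two observations: strict single-crossing is closed under pointwise maxima, and the hypothesis $\phi(b_k,t)\ge V_j(t)\ge 0$ forces $t\ge t^*(b_k)$, which confines the question to the interval $[t^*(b_k),\overline{t}]$ where \Cref{lem:cross1} already delivers every needed pairwise crossing --- uniformly so, because $t^*(b_i)<t^*(b_k)$ for all $i<k$ (the bundles in $U$ are nested with strictly decreasing $D^*$), so $\max(t^*(b_i),t^*(b_k))=t^*(b_k)$ for each $i$. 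The paper in fact uses exactly this localization for its base case $j=1$ but switches to the inductive case analysis thereafter; your argument shows the base-case reasoning scales to all $j$ at once. Both are correct, but yours is shorter and replaces the geometric picture in \Cref{fig:envelope} with an explicit implication chain, which is easier to audit. Nothing downstream is lost: \Cref{lem:alg} only needs the existence and ordering of the crossing times $t(i,i-1)$, which your version supplies just as well.
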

\begin{proof}[Proof of \Cref{lem:envelope}]
We prove this by induction on $j$. 

\textbf{Base case:} Let $j = 1$. Fix any $k > 1$. Note that $b_1 \subset b_k$ and $D^*(b_1) > D^*(b_k)$. Therefore, by \Cref{lem:cross1}, we have that $\phi(b_k, t)$ strictly single-crosses $\phi(b_1, t)$ on $[t^*(b_k), \overline{t}]$, which implies that $\phi(b_k, t)$ strictly single-crosses $\max\{0, \phi(b_1, t)\} = V_1(t)$ on $[\underline{t}, \overline{t}]$.

\textbf{Inductive step:} Suppose that for all $k > j - 1$, $\phi(b_k, t)$ strictly single-crosses $V_{j-1}(t)$. Fix any $k > j$. We can write
\[V_j(t) = \max \Big \{V_{j-1}(t), \max\big\{0, \phi(b_j, t)\big\} \Big \}\,.\]
Note that
\begin{itemize}
    \item[] (i) $\phi(b_k, t)$ strictly single-crosses $\max\{0, \phi(b_j, t)\}$ (by the same argument as before);
    \item[] (ii) $\phi(b_j, t)$  strictly single-crosses $V_{j-1}(t)$ (by the inductive hypothesis);
    \item[] (iii) $\phi(b_k, t)$ strictly single-crosses $V_{j-1}(t)$ (by the inductive hypothesis).
\end{itemize}
Note that if any of the above crossings does not happen, then the claim for the inductive step clearly holds. Otherwise, let $t(k, j)$, $t(j, j-1)$, and $t(k, j-1)$ be the crossing points for (i), (ii), and (iii), respectively. 

Now, because of the above single-crossing properties, observe that only the following three cases can happen:
\begin{itemize}
    \item[] Case (1): $t(k, j-1) < t(j, j-1)$ in which case we must have $t(k, j) < t(k, j-1)$. 
    \item[] Case (2): $t(k, j-1) > t(j, j-1)$ in which case we must have $t(k, j) > t(k, j-1)$.  
    \item[] Case (3): $t(k, j-1) = t(j, j-1)$ in which case we must have $t(k, j) = t(k, j-1)$.   
\end{itemize}
This argument is depicted in \Cref{fig:envelope} (in \Cref{subsec:sketch-po}) in which we hold the positions of $V_{j-1}(t)$ and $\phi(b_j, t)$ fixed and vary $\phi(b_k, t)$. The claim for the inductive step follows.  
\end{proof}

We now iteratively construct the following allocation rule: 
\begin{lemma}[Iterative construction]\label{lem:alg}
There exists a mapping $a: T \rightarrow U$ such that 
\begin{itemize}
    \item[(i)] $\phi(a(t), t) = V_m(t)$ for all $t$.
    \item[(ii)] $a(t) \subseteq a(t')$ for all $t < t'$. 
\end{itemize}
\end{lemma}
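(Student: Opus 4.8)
The plan is to build $a$ inductively, pasting in the bundles $b_1 \subset \cdots \subset b_m$ one at a time from smallest to largest, so that after the $j$-th step the allocation rule attains the partial envelope $V_j$. Precisely, I would prove by induction on $j \in \{1,\dots,m\}$ that there is a measurable map $a^{(j)}\colon T \to \{\emptyset, b_1,\dots,b_j\}$ with $\phi(a^{(j)}(t),t) = V_j(t)$ for all $t$ and $a^{(j)}(t) \subseteq a^{(j)}(t')$ whenever $t < t'$; the lemma is then the case $j = m$, with $a := a^{(m)}$. For the base case $j=1$, recall from Step~1 that $\phi(b_1,\cdot)$ crosses $0$ from below at $t^*(b_1)$, so $\phi(b_1,t) < 0$ on $[\underline t, t^*(b_1))$ and $\phi(b_1,t)\ge 0$ on $[t^*(b_1),\overline t]$; hence $V_1 = \max\{0,\phi(b_1,\cdot)\}$ equals $\phi(\emptyset,\cdot)\equiv 0$ on the first piece and $\phi(b_1,\cdot)$ on the second, and setting $a^{(1)}(t) := \emptyset$ for $t < t^*(b_1)$ and $a^{(1)}(t) := b_1$ otherwise does the job, with monotonicity immediate from $\emptyset \subseteq b_1$.

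For the inductive step, assume $a^{(j-1)}$ has been constructed. By \Cref{lem:envelope}, $\phi(b_j,\cdot)$ strictly single-crosses $V_{j-1}(\cdot)$, so the set $E_j := \{t\in T : \phi(b_j,t) \ge V_{j-1}(t)\}$ is closed and upward closed; if $E_j = \emptyset$ take $a^{(j)} := a^{(j-1)}$, and otherwise $E_j = [\tau_j,\overline t]$ for some $\tau_j$, with $\phi(b_j,t) < V_{j-1}(t)$ for $t<\tau_j$. In the latter case define $a^{(j)}(t) := a^{(j-1)}(t)$ for $t<\tau_j$ and $a^{(j)}(t) := b_j$ for $t\ge \tau_j$. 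Property~(i) follows by cases: for $t<\tau_j$, $V_j(t) = \max\{V_{j-1}(t),\phi(b_j,t)\} = V_{j-1}(t) = \phi(a^{(j-1)}(t),t)$ by the inductive hypothesis, and for $t\ge\tau_j$, $V_j(t) = \phi(b_j,t)$. Property~(ii): the only case requiring care is $t < \tau_j \le t'$, where $a^{(j)}(t) = a^{(j-1)}(t)$ is one of $\emptyset, b_1,\dots,b_{j-1}$, each of which is contained in $b_j = a^{(j)}(t')$ --- this is exactly where it matters that $b_j$ is the \emph{largest} of $b_1,\dots,b_j$, and the reason the bundles are pasted in increasing order; the cases $t<t'<\tau_j$ and $\tau_j \le t < t'$ are handled by the inductive hypothesis and by $a^{(j)}$ being constant, respectively. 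Measurability is preserved since $a^{(j)}$ differs from $a^{(j-1)}$ only on an interval.

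I expect the inductive step --- in particular the monotonicity check~(ii) --- to be the main point. The content there is that the single-crossing structure established in \Cref{lem:envelope} forces the region on which the running envelope is attained by the newly added bundle $\phi(b_j,\cdot)$ to be an upper interval $[\tau_j,\overline t]$; appending the (larger) bundle $b_j$ to precisely these high types then keeps the assignment monotone in set inclusion. Everything else --- property~(i), measurability, and the degenerate cases in which some $b_j$ never enters the envelope --- is routine bookkeeping.
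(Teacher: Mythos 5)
Your proof is correct and takes essentially the same approach as the paper: an induction that pastes in $b_1 \subset \cdots \subset b_m$ from smallest to largest, invoking \Cref{lem:envelope} to guarantee that each newly added $\phi(b_j,\cdot)$ takes over the running envelope only on an upper interval $[\tau_j,\overline t]$, so that assigning $b_j$ there preserves monotonicity in set inclusion. The only cosmetic differences are that the paper initializes with $a^{(0)}\equiv\emptyset$ rather than treating $j=1$ as a separate base case, and treats the non-crossing case by setting $\tau_j = \overline t$ rather than branching on $E_j = \emptyset$.
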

\begin{proof}[Proof of \Cref{lem:alg}]
Let $a^{(0)}(t) = \emptyset$. For each $i = 1, \dots, m$:
\begin{itemize}
    \item[1.] Let $t(i, i-1)$ be the type at which $\phi(b_i, t)$ strictly single-crosses $V_{i-1}(t)$ (by \Cref{lem:envelope}). Put $t(i, i-1) = \overline{t}$ if the crossing does not exist.
    \item[2.] Let 
    \[a^{(i)}(t) =\begin{cases}
    a^{(i-1)}(t) &\text{ if $t \leq t(i, i-1)$\,;} \\
    b_i     &\text{ otherwise}\,.
    \end{cases}\]    
\end{itemize} 
Let $a(t) = a^{(m)}(t)$. Note that by induction, at each iteration $i$, we have for all $t$ \[\phi(a^{(i)}(t), t) = \max\Big\{V_{i-1}(t),\, \phi(b_i, t)\Big\} = V_{i}(t)\,;\]
and therefore for all $t$
\[\phi(a(t), t) = V_m(t)\,,\]
proving part (i). Note also that, by induction, at each iteration $i$, we have for all $t < t'$
\[a^{(i)}(t) \subseteq a^{(i)}(t')\]
because $b \subset b_i$ for any $b \in \{a^{(i-1)}(t)\}_{t \in T}$. So $a(t) \subseteq a(t')$ for all $t < t'$, proving part (ii). 
\end{proof}

Let $a^\dagger$ be the allocation rule constructed in \Cref{lem:alg}. Order the bundles such that $\big\{a^\dagger(t)\big\}_{t \in T} = \{\emptyset, b_1, \dots, b_l\}$, where $b_1 \subset \dots \subset b_l$, for some number $l$. For all $j=1,\dots,l$,
let $s(b_j) = \inf \big \{t: a^\dagger(t) = b_j \big \}$. For all $j=1,\dots,l$, let 
\[p^\dagger(b_j) = v(b_j, s(b_j)) - \sum_{i < j} \big(v(b_i, s(b_{i+1})) - v(b_i, s(b_i))\big)\,.\]
Because (i) the allocation rule $a^\dagger$ is monotone in the set-inclusion order, and (ii) the incremental value is monotone for any pair of nested bundles, observe that the prices $p^\dagger$ implement the allocation rule $a^\dagger$.

Now, by the construction of $a^\dagger$ in \Cref{lem:alg}, observe that (i) the set of assigned bundles $\big\{a^\dagger(t)\big\}_{t \in T}$ is exactly equal to the menu $B$ constructed by \Cref{alg:alg1}, (ii) we have that $1 - F(s(b)) = q(b)$ for all $b \in B$, where the quantity $q(b)$ is constructed in \Cref{alg:alg1}, and (iii) the price of upgrading from $b_{i}$ to $b_{i+1}$ is given by 
\[p^\dagger(b_{i+1})- p^\dagger(b_{i}) =  v(b_{i+1}, s(b_{i+1})) - v(b_i, s(b_{i+1}))\,, \]
coinciding with the upgrade price from $b_{i}$ to $b_{i+1}$ constructed by \Cref{alg:alg1}. Therefore, $a^\dagger$ is also the allocation rule induced by the menu and prices given by \Cref{alg:alg1}. 

\paragraph{Completion of the Proof of the Optimality Part.}\hspace{-2mm}We have 
\begin{align*}
     \E\Big[\phi(a^\dagger(t), t) \Big] & =  \sup_{a:\, T \rightarrow U} \E\Big [\sum_b a_b(t) \phi(b, t) \Big] \tag{by \Cref{lem:alg}}\\
    &= \sup_{a:\, T \rightarrow \Delta(\mathcal{B})} \E\Big [\sum_b a_b(t) \phi(b, t) \Big] \tag{by \Cref{lem:remove}}\,.
\end{align*}
Therefore, by \Cref{lem:relaxed}, selling the nested menu $B \subseteq U$ given by \Cref{alg:alg1} is optimal among all stochastic mechanisms. In addition, dropping any bundle $b$ from the menu $B$ would result in a strictly suboptimal menu because, by the proof of \Cref{lem:alg}, for any $b \in B$, we have 
\begin{align*}
\E\Big[\max_{b' \in B \backslash \{b\}} \phi(b', t) \Big] < \E\Big[\max_{b' \in U} \phi(b', t) \Big] = \E\Big[\phi(a^\dagger(t), t) \Big]\,.
\end{align*}
Therefore, the menu $B$ is minimally optimal.

\paragraph{Proof of the Uniqueness Part.}\hspace{-2mm}We now show that any optimal mechanism $(a', p')$ must be equivalent to the nested bundling mechanism constructed by \Cref{alg:alg1}. Suppose, for contradiction, that $(a', p')$ differs from our construction on a positive-measure set of types. By the payoff equivalence theorem as in \Cref{lem:relaxed}, it must be that the allocation rule $a'$ differs from the allocation rule $a^\dagger$ (constructed in \Cref{lem:alg}) on a positive-measure set of types. Let $T^*$ be that set. By the proofs of \Cref{lem:remove} and \Cref{lem:alg}, we have that for all types $t$, 
\[\sum_b a'_{b}(t) \phi(b, t) \leq \max_{b \in U} \phi(b, t) =  \sum_b a^\dagger_{b}(t) \phi(b, t) \,.\]
Let $B :=  \{a^\dagger(t)\}_{t \in T}$. Order the bundles in $B$ such that $B = \{\emptyset, b_1, \dots, b_l\}$ where $b_1 \subset \dots \subset b_l$. For all $b \in B$, let $T_{b} := \{t: a^\dagger(t) = b\}$. Note that there must exist some $b^\dagger \in B$ such that 
\[\P \big(t \in T^* \cap T_{b^\dagger}\big) > 0\,.\]
Note also that by the construction of the allocation rule $a^\dagger$, for all types $t < \sup\{t: t\in T_{\emptyset}\} = t^*(b_1)$ and all bundles $b \neq \emptyset$, we have $\phi(b, t) < 0$. Thus, if $b^\dagger = \emptyset$, then $(a', p')$ must be strictly suboptimal, which is impossible. Now suppose $b^\dagger \neq \emptyset$. By the same argument as in the proof of \Cref{lem:remove}, for any dominated bundle $b'$, there exists some undominated $b''$ such that
\[\phi(b', t) < \max\big\{0, \phi(b'', t)\big\} \]
for almost all $t$. Now, note that by the construction of $a^\dagger$, for any undominated bundle $b'' \neq b^\dagger$, we have 
\[ \max\big\{0, \phi(b'', t)\big\} < \phi(b^\dagger, t)\]
for almost all $t \in T_{b^\dagger}$. These two observations imply that for any bundle $b \neq b^\dagger$, we have 
\[\phi(b, t) < \phi(b^\dagger, t)\]
for almost all $t \in T_{b^\dagger}$. Then $(a', p')$ must be strictly suboptimal. A contradiction. 

\subsection{Proof of \texorpdfstring{\Cref{thm:char}}{}}\label{subsec:proof-char}

\paragraph{Proof of the Sufficiency Part.}\hspace{-2mm}Note that this part follows directly from the proof of \Cref{thm:po}. Now, we further show that this sufficiency part holds even if we only impose monotone incremental values assumption and locally quasi-concave incremental profits assumption on the set of pairs $b_1 \subset b_2$ where $b_2 \in B$. First, both \Cref{lem:cross1} and \Cref{lem:cross2} still hold for every pair $b_1 \subset b_2$ where $b_2 \in B$. Second, note that for every $b_1 \not \in B$, there exists some $b_2 \in B$ such that (i) $b_1 \subset b_2$ and (ii) $D^*(b_1) \leq D^*(b_2)$. Therefore, by the same proof as in \Cref{lem:remove}, we have that it suffices to consider the allocation rules $a: T \rightarrow B$ for the purpose of solving the relaxed problem. Third, note that for all $b_1, b_2 \in B$ such that $b_1 \subset b_2$, we have that $\chi(b_1, b_2) > 0$ (by \Cref{lem:cross2}). Therefore, by the same proof as in Step 3 of the previous section (\Cref{lem:envelope} and \Cref{lem:alg}), we have that there exists an allocation rule $a: T \rightarrow B$ such that (i) $a$ solves the relaxed problem and (ii) $a$ is monotone (in the set-inclusion order). Because the increasing differences property between $v(b_1, t)$ and $v(b_2, t)$ still holds for all $b_1 \subset b_2$ such that $b_1, b_2 \in B$, we have that the allocation rule $a$ is implementable, proving the result by \Cref{lem:relaxed}. The prices can be constructed in the same way as before.

\paragraph{Proof of the Necessity Part.}\hspace{-2mm}Now, we prove the second statement of \Cref{thm:char} by contradiction. Suppose that selling menu $B$ is optimal and that selling any proper subset of menu $B$ is not optimal. 

Suppose, for contradiction, that either 
\begin{itemize}
    \item[(i)] there exist $b_1 \in B$ and $b_2 \in B$ such that $b_1 \subset b_2$ and $D^*(b_1) \leq D^*(b_2)$, or 
    \item[(ii)] there exists $b_1 \not \in B$ such that $D^*(b_1) > D^*(b_2)$ for all non-empty $b_2 \in B$.
\end{itemize}

\paragraph{Step 1.}\hspace{-2mm}Let 
\[U_B := \Big\{b \in B: \text{$b$ is not dominated by another bundle $b' \in B$}\Big\}\,.\]
Note that for all $b_1 \in B \backslash U_B$, there exists $b_2 \in U_B$ such that for all $t$, 
\[\phi(b_1, t) \leq \max\big\{0, \phi(b_2, t)\big\}\,,\]
by the same argument in Step 2 of the previous section. Therefore, by the same arguments in Step 2 and Step 3 of the previous section, there exists a mechanism $(a, p)$ such that $a: T \rightarrow U_B$ and $(a, p)$ attains a profit of 
\[\E\Big[\max_{b \in B} \big\{\phi(b, t)\big\}\Big]\,,\]
which is an upper bound on the profit of selling menu $B$. Because selling $B$ is optimal, this implies that the mechanism $(a, p)$ must be an optimal mechanism. 

\paragraph{Step 2.}\hspace{-2mm}First, consider Case (i). Note that under Case (i), $U_B$ is a strict subset of $B$. Then, because $(a, p)$ is an optimal mechanism, we have that selling the menu $U_B$ is also optimal, violating the minimal optimality of the menu $B$. Hence, this case cannot occur. 

\paragraph{Step 3.}\hspace{-2mm}Now, consider Case (ii). We construct a strict improvement. Consider offering the following new option in addition to the menu given by $(a, p)$: A lottery of getting bundle $b_1$ with probability $\epsilon$, for a price of $\epsilon \times v(b_1, t^*(b_1))$.

We compute the payoff to the monopolist after the addition of this new option. By the payoff equivalence theorem as in \Cref{lem:relaxed}, the profit change to the monopolist after adding this new option can be computed as
\[\E\Bigg[\sum_{b} a'_b(t) \phi(b, t)\Bigg] - \E\Bigg[\sum_{b} a_b(t) \phi(b, t)\Bigg]\,,\]
where $a'$ is the induced allocation rule after the types re-adjust their optimal choices given the new option. 

Let $\underline{b}$ be the smallest non-empty bundle in $B$. By the construction of $(a, p)$, note that: 
\begin{itemize}
    \item[(i)] all types $t \in [\underline{t}, t^*(b_1))$ will not take this option\,;
    \item[(ii)] all types $t \in [t^*(b_1), t^*(\underline{b}))$ will switch from $\emptyset$ to this option\,;
    \item[(iii)] all types $t \in [t^*(\underline{b}), t^\dagger)$ will switch from $\underline{b}$ to this option, for some threshold $t^\dagger$\,.
\end{itemize}

First, we provide a lower bound on the gain in the virtual surplus. Because types in $[t^*(b_1), t^*(\underline{b}))$ will take this new option, the gain in the virtual surplus is at least 
\[\text{Gain}(\varepsilon) := \epsilon \times \underbrace{\int_{t^*(b_1)}^{t^*(\underline{b})} \phi(b_1, t) \d F(t)}_{=:K} = \epsilon K >0\,,\]
where the inequality $K > 0$ uses the single-crossing property of $\phi(b_1, t)$.

Now, we provide an upper bound on the loss in the virtual surplus. Note that any type $t$ who takes this option obtains a payoff that is at most 
\[h(\epsilon) := \epsilon \times \underbrace{\big (v(b_1, \overline{t}) - v(b_1, t^*(b_1)) \big)}_{=:Z} = \epsilon Z\,.\]
Let $b'$ be the second smallest non-empty bundle in $B$ (if it does not exist, put $t^*(b') = 1$ in what follows). Note that $t^*(b_1) < t^*(\underline{b}) < t^*(b')$. By the construction of $(a, p)$, for any $\delta \in [0, t^*(b') - t^*(\underline{b})]$, we have
\[U(t^*(\underline{b}) + \delta) = v(\underline{b}, t^*(\underline{b}) + \delta) - v(\underline{b}, t^*(\underline{b}))\,,\]
where $U$ denotes the indirect utility function under $(a, p)$. Let $g(\delta) := U(t^{*}(\underline{b}) + \delta)$. Note that $v(\underline{b}, t^*(\underline{b})) > 0$ and hence $v_t(\underline{b}, t^*(\underline{b})) > 0$ by assumption. Thus, $\partial_{+} g(0) > 0$. Since $g'$ is continuous on $[0, \delta]$, there exist some constants $\overline{\delta} \in (0, t^*(b') - t^*(\underline{b}))$ and $M > 0$ such that $g'(\delta) \geq M$ for all $\delta \in [0, \overline{\delta}]$. Let $\overline{\epsilon} := g(\overline{\delta}) > 0$. Note that for all $ \epsilon \in (0,  \overline{\epsilon})$, we have
\[g^{-1}(\epsilon) = \int_0^{\epsilon} (g^{-1})'(s) \d s = \int_0^{\epsilon}\frac{1}{g'(g^{-1}(s))} \d s \leq \frac{1}{M} \epsilon\,.\]
Note also that any type $t \in [t^*(\underline{b}), \overline{t}]$ switches to this new option only if 
\[U(t) \leq h(\epsilon)\,.\]
Let 
\[\delta(\epsilon):= g^{-1}(h(\epsilon))\,.\]
Then, observe that for all $\epsilon \in (0, \frac{1}{Z}\overline{\epsilon}) $, the loss in the virtual surplus is at most
\begin{align*}
\text{Loss}(\epsilon) 
&:= \int_{t^*(\underline{b})}^{t^*(\underline{b}) + \delta(\epsilon)}\phi(\underline{b}, t) f(t) \d t \\
\\ &\leq \delta(\epsilon) \times \underbrace{\max_{t \in [t^*(\underline{b}), t^*(\underline{b}) + \delta(\epsilon)]} \Big\{f(t) \phi(\underline{b}, t)\Big\}}_{=:\Phi(\epsilon)} = \delta(\epsilon) \times \Phi(\epsilon) \leq \frac{Z}{M}\epsilon \times \Phi(\epsilon)\,. 
\end{align*}
Observe that (i) $\Phi(\,\cdot\,)$ is a continuous function (by Berge's theorem), and (ii) $\Phi(0) = 0$ since \[\phi(\underline{b}, t^*(\underline{b})) = 0\,.\] Therefore, there exists $\overline{\epsilon}' > 0$ such that for all $\varepsilon \in (0, \overline{\epsilon}')$, we have \[\Phi(\epsilon) < \frac{M K}{Z}\,.\]

Now, pick any $\epsilon \in \big(0,\, \min\{\frac{1}{Z}\overline{\epsilon},\, \overline{\epsilon}'\}\big)$. We must have
\[\text{Loss}(\epsilon) \leq \frac{Z}{M}\epsilon \Phi(\epsilon) < \epsilon K = \text{Gain}(\epsilon)\,.\]
So $(a, p)$ is strictly suboptimal. A contradiction. \Cref{fig:perturb} (in \Cref{subsec:sketch-char}) illustrates this argument with a two-item example. Note that this proof holds even if we only impose monotone incremental values assumption and locally quasi-concave incremental profits assumption on the pairs of nested bundles where both bundles are in the menu $B$.

\end{document}